\documentclass[11pt,letterpaper,fleqn]{article}

\usepackage{amsmath,amssymb,amsthm}
\usepackage{thm-restate}
\usepackage[margin=1in]{geometry}
\usepackage{xcolor}
\usepackage{graphicx}
\usepackage{bm}
\ifx\compatmode\undefined
\usepackage{bbm}
\else
\fi
\usepackage{comment}
\usepackage{subcaption}
\usepackage{tikz}
\usetikzlibrary{arrows.meta,patterns}
\usepackage{enumitem}
\usepackage{authblk}
\usepackage{xspace}
\usepackage[hypertexnames=false,bookmarksnumbered=true,final]{hyperref}
\usepackage[capitalize,sort]{cleveref}

\def\colorschemesepia{sepia}
\def\colorschemedark{dark}
\def\colorschemelight{light}

\ifx\colorscheme\undefined
\let\colorscheme\colorschemelight
\fi

\ifx\colorscheme\colorschemelight
\colorlet{textColor}{black}
\colorlet{bgColor}{white}
\fi

\ifx\colorscheme\colorschemesepia
\definecolor{textColor}{HTML}{433423}
\definecolor{bgColor}{HTML}{fbf0da}
\fi

\ifx\colorscheme\colorschemedark
\definecolor{textColor}{HTML}{bdc1c6}
\definecolor{bgColor}{HTML}{202124}
\definecolor{textBlue}{HTML}{8ab4f8}
\definecolor{textRed}{HTML}{f9968b}
\definecolor{textGreen}{HTML}{81e681}
\definecolor{textPurple}{HTML}{c58af9}
\else
\colorlet{textBlue}{blue!50!black}
\colorlet{textRed}{red!50!black}
\colorlet{textGreen}{green!50!black}
\definecolor{textPurple}{HTML}{681da8}
\fi

\ifx\colorscheme\colorschemelight\else
\pagecolor{bgColor}
\color{textColor}
\fi

\hypersetup{colorlinks,linkcolor=textRed,citecolor=textRed,urlcolor=textBlue}

\theoremstyle{plain}
\newtheorem{theorem}{Theorem}
\newtheorem{lemma}{Lemma}

\newtheorem{observation}{Observation}
\theoremstyle{definition}
\newtheorem{definition}{Definition}
\newtheorem{problem}{Problem}

\theoremstyle{remark}
\newtheorem{remark}{Remark}
\newtheorem{example}{Example}

\title{Equilibrium Pricing in Oligopolistic Data Markets}

\author[1]{Bhaskar Ray Chaudhury}
\author[1]{Jugal Garg}
\author[1]{Eklavya Sharma}
\author[1]{Jiaxin Song}
\affil[1]{University of Illinois, Urbana-Champaign}
\affil[ ]{\{braycha, jugal, eklavya2, jiaxins8\}@illinois.edu}
\date{\empty}

\let\eps\varepsilon
\newcommand*{\defeq}{:=}
\newcommand*{\Th}{^{\textrm{th}}}

\newcommand*{\wLoG}{without loss of generality}
\makeatletter
\if@twocolumn
  \newcommand*{\ifTwoCol}[2]{#1}
\else
  \newcommand*{\ifTwoCol}[2]{#2}
\fi
\makeatother

\newcommand*{\boolOne}{\mathbbold{1}}  %
\newcommand*{\vecOne}{\mathbf{1}}
\newcommand*{\vecZero}{\mathbf{0}}
\newcommand*{\vecE}{\mathbf{e}}

\newcommand*{\abs}[1]{\lvert #1 \rvert}
\newcommand*{\norm}[1]{\lVert #1 \rVert}

\DeclareMathOperator*{\E}{\mathbb{E}}
\DeclareMathOperator*{\Var}{Var}
\DeclareMathOperator*{\argmin}{argmin}
\DeclareMathOperator*{\argmax}{argmax}

\DeclareMathOperator*{\maximize}{maximize}
\DeclareMathOperator{\supp}{supp}

\newcommand*{\Ical}{\mathcal{I}}
\newcommand{\G}{\mathcal{G}}

\newcommand*{\chat}{\widehat{c}}

\newcommand*{\phat}{\widehat{p}}
\newcommand*{\qhat}{\widehat{q}}
\newcommand*{\rhat}{\widehat{r}}
\newcommand*{\shat}{\widehat{s}}

\newcommand*{\zhat}{\widehat{z}}
\newcommand*{\ellhat}{\widehat{\ell}}
\newcommand*{\rhohat}{\widehat{\rho}}
\newcommand{\x}{\bm{x}}
\newcommand{\z}{\bm{z}}
\newcommand{\bell}{\bm{\ell}}

\newcommand*{\ktild}{\widetilde{k}}

\newcommand*{\xvec}{\bm{x}}

\newcommand*{\zvec}{\bm{z}}
\newcommand*{\ellvec}{\bm{\ell}}
\newcommand*{\ellvechat}{\widehat{\ellvec}}

\newcommand{\Rge}{\mathbb{R}_{\ge 0}}
\newcommand{\NP}{{\sf NP}\xspace}
\newcommand{\xc}{{\sc X3C}\xspace}
\newcommand{\kf}{{\sc Knapsack-1Fraction}\xspace}

\newif\ifHasSuppl

\allowdisplaybreaks
\DeclareMathAlphabet{\mathbbold}{U}{bbold}{m}{n}

\makeatletter
\g@addto@macro{\UrlBreaks}{%
\do\/%
\do\a\do\b\do\c\do\d\do\e\do\f\do\g\do\h\do\i\do\j\do\k\do\l\do\m%
\do\n\do\o\do\p\do\q\do\r\do\s\do\t\do\u\do\v\do\w\do\x\do\y\do\z%
\do\A\do\B\do\C\do\D\do\E\do\F\do\G\do\H\do\I\do\J\do\K\do\L\do\M%
\do\N\do\O\do\P\do\Q\do\R\do\S\do\T\do\U\do\V\do\W\do\X\do\Y\do\Z%
\do\0\do\1\do\2\do\3\do\4\do\5\do\6\do\7\do\8\do\9%
}
\makeatother

\makeatletter
\@ifpackageloaded{enumitem}{%
\newenvironment*{tightemize}{\begin{itemize}[noitemsep]}{\end{itemize}}%
\newenvironment*{tightenum}{\begin{enumerate}[noitemsep]}{\end{enumerate}}%
}{%
\newenvironment*{tightenum}{\begin{enumerate}}{\end{enumerate}}%
}
\makeatother

\let\citep\cite
\let\citet\cite

\begin{document}

\maketitle

\begin{abstract}
We study equilibrium pricing in oligopolistic data markets with budget-constrained buyers
(e.g., machine learning companies purchasing data to improve model accuracy) and strategic data sellers.
Sellers compete by setting prices for their datasets, giving rise to a pricing game whose pure Nash equilibria correspond to equilibrium prices.
While equilibrium prices are guaranteed for rivalrous goods via competitive equilibrium, we show that the non-rivalry of data fundamentally alters this picture: an exact Nash equilibrium (NE) need not exist, and in fact, $1.363$-approximate NE may also not exist under uniform pricing. We therefore investigate relaxed equilibrium notions. Allowing sellers to use beyond-uniform pricing---specifically, piecewise-linear convex pricing functions---guarantees approximate stability within a constant factor: there exists a pricing profile in which no seller can improve revenue by a factor of two by deviating to any uniform price (a 2-approximate NE against uniform deviations).
Finally, our simulations demonstrate fast convergence and empirical approximation guarantees that outperform the worst-case bound of 2.

\end{abstract}

\section{Introduction}
\label{sec:intro}

The rapid growth of data-driven decision-making and the widespread adoption of data-centric technologies have cemented data as one of the most valuable assets of the $21^{\text{st}}$ century. Falling storage costs, together with major advances in data mining, analytics, and machine learning, have dramatically increased both the usefulness and the economic value of data. According to \citet{AcumenDM}, the U.S.\ big data market is expected to reach approximately \$473~billion by 2030, highlighting the growing role of data in shaping innovation, productivity, and market dynamics.

A fundamental question in today's data economy is \emph{``how is data priced by data sellers?''}.
In this paper, we study data pricing in \emph{oligopolistic markets}, i.e., markets where a small number of sellers strategically choose prices for their datasets. These seller interactions give rise to a pricing game, and equilibrium prices correspond to a pure Nash equilibrium of this game.
This framework provides a natural parallel to data pricing in perfectly competitive markets—where sellers are price takers and prices are determined by equating supply and demand for each dataset—studied in recent work~\citep{ChaudhuryGMS26}.

\paragraph{Value of data.} To formalize our framework, we begin by specifying how data generates value for a buyer. In our model, buyers represent AI/ML agents who acquire data with the goal of improving the quality of their predictions. Formally, each buyer $i$ has a maximum budget of $b_i$, and seeks to infer an unknown parameter $\theta_i$, corresponding to an underlying quantity of interest such as future demand, user behavior, or system performance. Individual data records are modeled as digitized observations that provide noisy information about $\theta_i$. Data supplied by a given seller is drawn from a fixed distribution that reflects the seller's domain of activity and the informativeness of the data with respect to $\theta_i$. These distributions may differ across sellers, capturing heterogeneity in both data generation processes and predictive relevance. As an illustration, consider a buyer attempting to predict future demand for a product. One seller may provide historical transaction logs capturing purchase frequencies and timing, while another may offer mobility or web-traffic data that indirectly correlates with consumer interest. Although both datasets convey information about the same latent parameter, they differ in structure, noise characteristics, and informational content.

Suppose there are $m$ sellers. A buyer's acquisition decision is represented by a \emph{data bundle} $\mathbf{x}_i = (x_{i,1}, x_{i,2}, \ldots, x_{i,m})$, where $x_{i,j}$ denotes the number of data records purchased by buyer $i$ from seller $j$. Seller $j$'s dataset can be viewed as a set of $s_j$ data-records. The full collection of acquired signals, denoted by $S(\mathbf{x}_i)$, consists of one signal per data record. Upon observing these signals, the buyer updates her belief about the latent parameter, yielding a posterior distribution $\theta_i \mid S(\mathbf{x}_i)$. Following the recent literature on the economic value of data~\citep{baley2025data}, and the recent work on data pricing~\citep{ChaudhuryGMS26}, we define the buyer's \emph{utility} from $\mathbf{x}_i$, denoted by $u_i(\mathbf{x}_i)$, as the resulting reduction in uncertainty about $\theta_i$. Specifically, utility is measured as the expected increase in precision:
\[ u_i(\mathbf{x}_i) = \alpha_i \cdot (\mathbb{E}\!\left[\mathrm{Pre}(\theta_i \mid S(\mathbf{x}_i))\right]
- \mathrm{Pre}(\theta_i)), \]
where the precision of a random variable is defined as the inverse of its variance, i.e., $\mathrm{Pre}(\theta) = 1 / \mathrm{Var}(\theta)$, and $\alpha_i$ is the value per unit increase in precision for buyer $i$.

\paragraph{Pricing game and Nash equilibria.} Each seller $j$ chooses a price $p_j$ per data record of their dataset. This pricing model reflects common practices in real data marketplaces. For example, commercial platforms such as Snowflake Marketplace allow data providers to charge buyers on a usage-based basis, including per-query and per-row pricing for access to paid datasets~\citep{suger_snowflake_pricing}. Similarly, third-party data marketplaces employ volume-based pricing schemes, with providers such as Bright Data charging per thousand records accessed~\citep{brightdata2026datamarketplaces}.

Given a profile of seller prices $\bm p = (p_1, p_2, \dots, p_m)$, each buyer $i$ demands an affordable bundle $\bm x_i$
that maximizes her utility, i.e.,
\[ \bm x_i \in \argmax_{\bm z \in \mathbb{R}^m_{\geq 0}:\,\bm p^T \bm z \le b_i} u_i(\bm z). \]
When there are multiple optimal bundles, we use a natural tie-breaking rule (see \cref{sec:approx-ne} for details).
Given a price profile $\bm p$, let $\bm x^{*}_i (\bm p)$ denote the optimal demand bundle for buyer $i$.
Then, the revenue of seller $j$, $r_j(p_j, \bm p_{-j}) = p_j \cdot \sum_i x^*_{i,j}(\bm p)$.
A price profile $\bm p$ is an equilibrium if no seller can increase her revenue by unilaterally deviating;
that is, for every seller $j$ and every alternative price $p_j'$, we have $r_j(p_j, \bm p_{-j}) \geq r_j(p'_j, \bm p_{-j})$.

\paragraph{Economic effects of non-rivalry.} It is well known that in classical markets with \emph{rivalrous goods} and \emph{linear buyer utilities}\footnote{A good is rivalrous if its availability to one buyer is affected by its consumption by other buyers.}, any \emph{competitive equilibrium} (CE)—that is, a price profile at which each good's aggregate demand equals its supply—is also an oligopolistic equilibrium. In particular, at a CE no seller has an incentive to unilaterally deviate from her price (we provide a brief argument in \cref{sec:ce-is-ne-rival}). This equivalence, however, fails to extend to markets with \emph{non-rivalrous} goods, such as data. Indeed, one can show that the notion of competitive equilibrium introduced in~\citet{ChaudhuryGMS26} does not, in general, constitute a Nash equilibrium (NE) in oligopolistic data markets—standing in sharp contrast to the classical rivalrous setting. These observations underscore that classical CE theory cannot be directly applied to oligopolistic markets with non-rivalrous goods, necessitating a separate analysis of equilibrium in data markets, which is the focus of this paper.

\subsection{Our Contributions}

Consistent with~\citet{ChaudhuryGMS26}, for all our results, we assume that for each buyer $i$, $\theta_i \sim \mathcal N(0, \tau_i^{-1})$. Each data-record of seller $j$ is a signal $s_{i,j} = \theta_i + \eta_{i,j}$ to buyer $i$, where $\eta_{i,j} \sim  \mathcal N (0, \tau_{i,j}^{-1})$. It can be shown that under the foregoing assumptions, we have $u_i(\bm x_i) = \alpha_i \cdot \sum_j \tau_{i,j} x_{i,j}$. We now enlist our main contributions.

\begin{enumerate}[leftmargin=*]
    \item We show that there exist instances of oligopolistic data markets that admit no equilibrium (\cref{sec:lne-cex})---this is a sharp contrast to rivalrous oligopolistic markets. We also manage to find instances that admit no $1.363$-approximate Nash equilibrium (\cref{thm:lne-cex}).

\item We investigate natural relaxations of NE by allowing sellers to adopt more flexible pricing strategies. In particular, instead of assigning a uniform price to every data record, we allow sellers to use \emph{piecewise-linear convex} (PLC) pricing functions, which have been shown to be revenue-optimal for a monopolist selling data to heterogeneous buyers~\citep{ChaudhuryGSS26}. Unfortunately, even within the broader class of convex pricing strategies, we show that a pure NE may fail to exist (\cref{thm:convex-ne-cex} in \cref{sec:plc-cex}).

\item We show that computing an optimal PLC strategy (given other sellers' strategies) is computationally hard (\cref{sec:np-hardness-br}), suggesting that sellers may be expected to respond with coarse PLC pricing strategies or even uniform pricing, rather than optimizing over the full PLC space.
Motivated by this observation, we relax the equilibrium concept by restricting the class of deviations, rather than the strategy space itself. Specifically, we prove the existence of a PLC price profile such that no seller can deviate to a uniform pricing strategy and earn more than twice her current revenue, providing a meaningful approximate equilibrium despite the nonexistence of an exact one (\cref{thm:2-plc-ne-lindev}). This viewpoint parallels classical ideas in game theory and learning, where outcomes generated by rich or adaptive strategies are evaluated relative to a simpler benchmark class---most notably in regret minimization, where performance is compared against the best fixed strategy in hindsight. Here, uniform pricing serves as a natural behavioral baseline, grounding the relaxation in both theory and practice.

    \item Finally, we run some empirics to show the robustness of our results. We simulate a \emph{news data marketplace} where $n$ firms purchase data from $m$ news agencies to predict stock price movements, from a real-world dataset~\citep{aaron7sun_stocknews_2016}. Each firm evaluates the informativeness of a dataset by training a logistic regression model on historical news and measuring prediction error. Sellers update their pricing strategies via an \emph{inertial approximate best-response dynamics} (defined in \cref{sec:empirics}). Empirically, the dynamics converge efficiently across all settings, exhibiting a non-monotone pattern: convergence is slower when the number of sellers is very small due to strong competition effects, improves sharply as the market grows, and then increases steadily in larger markets reflecting a large-market stabilization effect. Secondly, despite theoretical guarantees being only for approximate equilibria, the dynamics reach an exact Nash equilibrium in all our simulations, highlighting a gap between worst-case bounds and typical behavior and motivating further theoretical investigation into structural properties of real-world pricing games. Overall, our empirical findings reveal several intriguing behaviors that merit more rigorous investigation in future work.
\end{enumerate}

\paragraph{Beyond Linear Utilities.}
Our results also work for slightly more general buyers' utility functions:
$u_i(\xvec_i) = g_i\big(\sum_j \tau_{i,j}x_{i,j}\big)$, where $g_i: \Rge \to \Rge$
is an increasing continuous function such that $g_i(0) = 0$.
This is because any strictly increasing transformation of a buyer's utility function
does not change her set of utility-maximizing bundles.
When $g_i$ is concave, this formulation captures diminishing marginal returns of data.

This more general form also captures other ways to model the value of data.
Instead of defining the value of data as the \emph{expected increase in precision},
we can also define it as the \emph{expected reduction in variance}, i.e.,
\[ u_i(\xvec_i) = \alpha_i \cdot (\Var(\theta_i) - \E[\Var(\theta_i \mid S(\xvec_i))]). \]
Assuming $\theta_i$ and $\eta_{i,j}$ to be normally-distributed as before, we get
\[ u_i(\xvec_i) = \alpha_i \cdot \left(\frac{1}{\tau_i} - \frac{1}{\tau_i + \sum_{j=1}^m \tau_{i,j}x_{i,j}}\right). \]
One can verify that this expression is concave in $\sum_{j=1}^m \tau_{i,j}x_{i,j}$.
Alternatively, we can define the value of data as the \emph{expected reduction in entropy}. This gives us
\[ u_i(\xvec_i) = \alpha_i \cdot \Bigg(\ln\bigg(\tau_i + \sum_{j=1}^m \tau_{i,j}x_{i,j}\bigg) - \ln(\tau_i)\Bigg). \]
This expression is also concave in $\sum_{j=1}^m \tau_{i,j}x_{i,j}$.

Since the choice of $g_i$ doesn't affect buyer behavior, we can let $g_i$ be the identity function \wLoG{}.
This makes $u_i$ linear, which simplifies exposition.

\subsection{Related Work}
\label{sec:related-work}

The economics of data has attracted growing attention due to the widespread adoption of AI and data-centric technologies. While a full overview is beyond the scope of this paper, we focus on the literature most closely related to our study. We adopt a general model of data value, where a buyer's utility is measured by the improvement in prediction accuracy. Prior work has explored more specialized valuation frameworks~\citep{farboodi2025valuing, veldkamp2023valuing, farboodi2023data}. We refer the reader to~\citet{fleckenstein2023review} for a detailed review of data valuation methods.

A significant line of research studies mechanisms that incentivize agents to share data, often by compensating for privacy loss~\citep{fallah2024optimal, cummings2023optimal, fallah2022bridging, MurhekarYCLM23}. There have been studies on mechanisms that incentivize sellers to truthfully report the variances of their datasets to a data aggregator, who aims to achieve a target prediction accuracy~\citep{cummings2015accuracy}.

Data markets, which match buyers' prediction requests with datasets from sellers, have been studied extensively in terms of strategic behavior, incentives, and revenue optimization. Monopolist revenue-maximization strategies have been analyzed~\citep{admati1986monopolistic, admati1990direct, bergemann2018design, BabaioffKP12}.
\citet{AgarwalDS19} designs a truthful mechanism where buyers pay in proportion to the accuracy gains they receive, yielding a discriminatory but explicit pricing scheme. Other work investigates equilibria and auctions in data markets with externalities~\citep{AgarwalDHR20, Hossain024}, first-principles approaches to data pricing~\citep{mehta2021sell, pei2020survey, cai2020sell, bergemann2022economics}, and stable outcomes in non-monetary data exchange economies~\citep{BhaskaraGIKMS24, akrami2025theoretical, song2025existencecomplexitycorestabledata}.

\section{Preliminaries}
\label{sec:prelims}

Let $[t]$ denote the set $\{1, 2, \ldots, t\}$ for any $t \in \{0\} \cup \mathbb{N}$.
For any $m \in \mathbb{N}$, define the simplex
$\Delta_m \defeq \{\xvec \in \mathbb{R}_{\ge 0}^m: \sum_{j=1}^m x_j = 1\}$.
For any $\xvec \in \mathbb{R}^m$, let $\supp(\xvec) \defeq \{j \in [m]: x_j \neq 0\}$.
For any $m \in \mathbb{N}$, let $\vecZero^{(m)}$ and $\vecOne^{(m)}$ denote
the $m$-dimensional vectors of all zeros and all ones, respectively.
When $m$ is clear from context, we simply write $\vecZero$ and $\vecOne$.
For any $j, m \in \mathbb{N}$, let $\vecE^{(j, m)}$ denote an $m$-dimensional vector
whose $j\Th$ component is 1 and all other components are 0.
When $m$ is clear from context, we write $\vecE^{(j)}$.

A data marketplace instance is given by the tuple $([n], [m], (u_i)_{i=1}^n, (b_i)_{i=1}^n)$.
Here $[n]$ is the set of buyers and $[m]$ is the set of sellers.
Each buyer $i$ has a budget $b_i \in \mathbb{R}_{\ge 0}$
and a utility function $u_i: [0, 1]^m \to \mathbb{R}_{\ge 0}$.
Given a bundle $\zvec = (z_1, \ldots, z_m)$, where $z_j$ is the fraction of dataset $j$,
buyer $i$ has a value of $u_i(\zvec)$ for it.

Based on our model of the value of data (see \cref{sec:intro}),
the utility function $u_i$ is \emph{linear}.
Specifically, each buyer $i$ has value $\tau_{i,j}$ for each dataset $j$,
and $u_i(\zvec) = \sum_{j=1}^m \tau_{i,j}z_j$.
(\Cref{sec:intro} uses an additional factor $\alpha_i$ in $u_i(\zvec)$.
We assume \wLoG{} that $\alpha_i = 1$ since we can scale all $\tau_{i,j}$ by $\alpha_i$ instead.)

There are many ways in which sellers can price their datasets.
A dataset having price $p$ is said to be priced uniformly/linearly if an $x$ fraction of the dataset costs $p \cdot x$.

\subsection{Buyer Behavior for Linear Pricing}
\label{sec:notation:revenue}

Suppose each seller $j \in [m]$ sets a linear price of $p_j$ for her dataset.
Each buyer $i \in [n]$ would like to purchase a bundle of datasets that maximizes her utility
subject to her budget constraint. If buyer $i$ purchases an $x_{i,j}$ fraction of each dataset $j$,
then she would like to maximize $\sum_{j=1}^m \tau_{i,j}x_{i,j}$ under the constraint
$\sum_{j=1}^m p_jx_{i,j} \le b_i$.
We also assume that the buyer only purchases datasets that give her positive value,
i.e., $x_{i,j} > 0$ for some dataset $j$ only if $\tau_{i,j} > 0$.
This is the fractional knapsack problem, whose solution is well-understood.
Define buyer $i$'s \emph{bang-per-buck} for dataset $j$ to be $\tau_{i,j}/p_j$.
To maximize utility, the buyer will first sort the datasets in non-increasing order of bang-per-buck.
Let $(\sigma_{i,1}, \ldots, \sigma_{i,m})$ be this ordering.
She would then buy as much of dataset $\sigma_{i,1}$ as possible, then buy as much of dataset $\sigma_{i,2}$
as possible, and so on, till she either exhausts her budget or she purchases every dataset of positive value.

A slight exception to the above behavior may arise when multiple datasets have the same bang-per-buck for a buyer.
For example, if $\sigma_{i,1}$ and $\sigma_{i,2}$ have the same bang-per-buck for buyer $i$,
and her budget is less than $p_{\sigma_{i,1}} + p_{\sigma_{i,2}}$,
then any way of distributing her budget across these two datasets is utility-maximizing behavior.
We assume that each buyer's \emph{tie-breaking rule}, i.e., how she distributes her budget
across multiple datasets having the same bang-per-buck, is known to all sellers.
This is necessary to ensure that the pricing game the sellers engage in is a perfect-information game.

\section{Non-Existence of (Approximate) NE}
\label{sec:lne-cex}

In this section, we show that there exist data market instances for which no (approximate) Nash equilibrium (NE) exists when sellers use linear pricing.

We begin by describing a family of data market instances.

\begin{example}
\label[example]{ex:lne-cex}
Let $\Ical$ be a data market instance with two sellers and $n$ buyers. There are two types of buyers: $n-1$ poor buyers and one rich buyer. Let $\alpha$ and $\beta$ be constants such that $1 < \alpha \le \beta$.
Each poor buyer has a budget of 1 and her value for the two datasets are
$\tau_{1,1} = \alpha$ and $\tau_{1,2} = 1$, respectively.
The rich buyer has a budget of $\beta$ and her values for the two datasets are
$\tau_{2,1} = \beta$ and $\tau_{2,2} = 0$, respectively.
\end{example}

Since the rich buyer values dataset 2 at zero, she will spend her budget only on dataset 1.
The poor buyers consider dataset 1 to be $\alpha$ times as valuable as dataset 2.
Thus, if dataset 1 costs more than $\alpha$ times dataset 2, they will prioritize purchasing 2. Similarly, if it costs less than $\alpha$ times dataset 2, they will prioritize 1. (For now, we assume that if dataset 1 costs exactly $\alpha$ times dataset 2,
buyers prefer 1. This assumption is relaxed in \cref{sec:lne-cex-extra}.)

The sellers engage in a pricing game with each other. A strategy profile for this game is given by $(p, q)$, where $p$ and $q$ are the prices of datasets 1 and 2, respectively.
We say that $(p, q)$ is a $c$-approximate NE for the pricing game
if no seller can increase her revenue by more than a factor of $c$ by changing her dataset's price. We show that, for a suitable choice of $\alpha$ and $\beta$,
a $1.363$-approximate NE does not exist.

\subsection{Preliminary Observations}
\label{sec:lne-cex:warmup}

First, let us build some intuition on why a $c$-approximate NE may not exist when $c$ is very close to 1 and $\alpha = \beta = n-1 \ge 3$.
Note that seller 1 has no incentive to price above $\beta$, since no buyer can pay more than $\beta$, and seller 2 has no incentive to price above 1, since the rich buyer is not interested and no poor buyer can pay more than 1. For ease of exposition, we therefore assume $p \le \beta$ and $q \le 1$.

\paragraph{Undercuts must be close.}
If $q < p/\alpha$, then seller 2's revenue is $(n-1)q$, which is increasing in $q$.
When $p \le \alpha q$, then seller 1's revenue is $p + (n-1)\min(p, 1)$, which is increasing in $p$.
Thus, whoever is undercutting the other, will do so using the maximum possible price.
Thus, if $(p, q)$ is a $c$-approximate NE, then $p \approx \alpha q$.

\paragraph{For large $p$, undercutting helps significantly.}
If $p \ge 1$ and $p \le \alpha q$, then decreasing dataset 2's price to
slightly less than $p/\alpha$ increases seller 2's revenue from 0 to $\approx (n-1)(p/\alpha)$.
If $p > \alpha q$, then decreasing dataset 1's price to slightly less than $\alpha q$
increases seller 1's revenue from $\approx \alpha q + (n-1)(1-q)$ to $\approx \alpha q + (n-1)$.
Thus, when $p \ge 1$, then some seller can always improve her revenue significantly
by undercutting the other, so $(p, q)$ is not a $c$-approximate NE.

\paragraph{For small $p$, seller 1 raises prices significantly.}
If $p \le 1$ (and so $q \le 1/\alpha$), then seller 1's revenue is at most $n$.
However, if she increases dataset 1's price to $\beta$, her revenue is at least
$\beta + (n-1)(1-q) \ge 2n-3$.
Thus, when $p$ is small, seller 1 can always improve her revenue significantly,
so $(p, q)$ is not a $c$-approximate NE.

Thus, for $c$ close to 1, $\alpha = \beta = n-1 \ge 3$, $p \le \beta$, and $q \le 1$,
$(p, q)$ is not a $c$-approximate NE.

\subsection{Stronger Inapproximability}
We now build on and formalize the ideas from \cref{sec:lne-cex:warmup} to obtain a stronger inapproximability result.

\begin{theorem}
\label{thm:lne-cex}
In the data market instance of \cref{ex:lne-cex},
if we set $\alpha = 0.733(n-1)$, $\beta = 0.860(n-1)$, and $n \to \infty$,
then a $1.363$-approximate Nash equilibrium does not exist.
\end{theorem}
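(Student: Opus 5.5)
The plan is to write both sellers' revenues in closed form as functions of $(p,q)$, normalize by $N \defeq n-1$, and pass to the limit $N \to \infty$. This turns the statement into a finite, piecewise optimization in two variables: we must show that at every $(p,q)$ some seller has a unilateral deviation gaining a factor greater than $1.363$.

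\textbf{Revenue formulas.} As in \cref{sec:lne-cex:warmup}, we may assume $p \le \beta$ and $q \le 1$. The rich buyer always pays $\min(p,\beta) = p$ to seller 1. There are two regimes.
\begin{itemize}
\item If $p \le \alpha q$, each poor buyer buys dataset 1 first and pays $\min(p,1)$. She then spends $\min(1-p, q)$ on dataset 2 when $p<1$, and nothing otherwise. So
\[ r_1 = p + N\min(p,1), \qquad r_2 = N\min\bigl((1-p)^+, q\bigr). \]
\item If $p > \alpha q$, each poor buyer buys dataset 2 first, paying $q$, and then pays $\min(p, 1-q)$ to seller 1. So
\[ r_1 = p + N\min(p,1-q), \qquad r_2 = Nq. \]
\end{itemize}
Write $\alpha = aN$ and $\beta = bN$ with $a = 0.733$ and $b = 0.860$. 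Only three candidate deviations are needed.
\begin{itemize}
\item Seller 1 jumps to $\beta$, earning $\beta + N\min(\beta,1-q)$ if $\beta > \alpha q$, and $\beta + N$ otherwise. This is $\approx N(b + 1 - q)$ or $N(b+1)$.
\item Seller 1 undercuts to just below $\alpha q$ (or to $\min(\alpha q,\beta)$), earning $\approx \alpha q + N\min(\alpha q, 1)$.
\item Seller 2 undercuts to just below $\min(p/\alpha, 1)$, earning $\approx N\min(p/\alpha,1)\cdot[\text{poor budgets allow}]$.
\end{itemize}

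\textbf{Case analysis.} I would split by the scale of $p$.

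(i) $p \le 1$, i.e.\ $p = O(1)$. Here $r_1 \le p + Np \le N+1$. Seller 1's jump to $\beta$ yields at least $N(b + 1 - q)$ up to lower-order terms. So we are done unless $q$ is large, roughly $q \ge b + 1 - 1.363(\text{current ratio})$. If $q$ is large, seller 2 is in the regime $r_2 = N\min(1-p,q)$ or $Nq$. I would compare seller 1's jump (which gives $\approx N(b+1-q)$) against her current $\approx N\min(p,1-q)$. This gives a ratio at least $(b+1-q)/(1-q)$, which is large, because $p \le 1$ and $p$ is not of order $N$. Hence small $p$ is never an approximate NE; the ratio is in fact $\Omega(N)$ unless $p = \Theta(1)$, where it is about $(b+1-q)/\min(p,1-q) \ge b+\ldots > 1.363$.

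(ii) $1 < p \le \alpha q$. Then $r_2 = 0$, while seller 2's undercut gives a positive revenue. The ratio is infinite.

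(iii) $p \ge 1$ and $p > \alpha q$. Write $p = \pi N$ with $\pi \in (0,b]$ (sub-linear $p$ behaves like case (i)). Then
\[ r_1/N \to \pi + 1 - q, \qquad r_2/N = q < \pi/a. \]
The deviations give:
\begin{itemize}
\item seller 2: $\min(\pi/a, 1)$;
\item seller 1 undercutting: $aq + 1$;
\item seller 1 jumping: $b + 1 - q$, since $b > aq$ fails only if $q > b/a$, which is impossible because $q \le 1 < b/a$.
\end{itemize}
So we must show, for all $q \in [0,1]$ and $\pi \in (aq, b]$,
\[ \max\left\{ \frac{\min(\pi/a,1)}{q},\ \frac{\max(aq+1,\ b+1-q)}{\pi + 1 - q} \right\} \ge 1.363. \]

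\textbf{Main obstacle.} The hard step is this last two-variable minimax inequality; this is where the specific constants $a$ and $b$ were tuned. I would handle it as follows.
\begin{itemize}
\item Fix $q$. The seller-2 ratio is increasing in $\pi$, and the seller-1 ratio is decreasing in $\pi$. So the worst $\pi$ equalizes them, or sits at the endpoint $\pi = b$.
\item Substituting this $\pi$ reduces the problem to a one-variable function of $q$.
\item Split $q$ at the crossover $aq + 1 = b + 1 - q$, i.e.\ $q = b/(1+a)$, and at $\pi/a = 1$.
\item On each piece, minimize the resulting rational function explicitly and check that the minimum exceeds $1.363$.
\end{itemize}
The binding configuration should be $q \approx b/(1+a) \approx 0.496$, with $\pi$ chosen so that seller 2's undercut ratio equals seller 1's deviation ratio. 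Checking numerically that this common value is about $1.3636 > 1.363$ finishes the proof.

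Finally, I would confirm that the lower-order terms (the $+p$ versus $+N$ discrepancies, and the $\varepsilon$ in the undercuts) vanish as $N \to \infty$, so that the strict inequality survives in the limit.
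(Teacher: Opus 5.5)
\textbf{The gap is in your case (i), and that region is where the constant is determined.}

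For $p \le 1$ and every $q \ge 1/\alpha = 1/(aN)$, you have $\alpha q \ge 1 \ge p$. So poor buyers buy dataset 1 first, and seller 1 currently earns $r_1 = (N+1)p$. The expression $N\min(p,1-q)$ you compared against belongs to the regime $p > \alpha q$. For $p \le 1$ that regime forces $q < 1/\alpha$, the opposite of ``$q$ large''.

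With the correct revenue, the jump to $\beta$ gives only a ratio of about $(b+1-q)/p$. This tends to $b = 0.86 < 1$ as $p, q \to 1$, so the jump alone proves nothing there. Seller 2 cannot rescue this region either. For $q \ge 1-p$ her ratio is $\max\bigl(1, p/(\alpha(1-p))\bigr)$, which exceeds $1$ only when $1-p = O(1/N)$. There both her revenue and her best deviation are $O(1)$, which are exactly the lower-order terms your normalization by $N$ discards.

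What closes case (i) is seller 1's undercut to just below $\alpha q$. You listed this deviation but never used it there. For $q \in [1/\alpha, 1]$ one has $r_1^*(q) = N\max(b+1-q,\,1+aq)$. For every $q \ge 0$ one has $r_1^*(q) \ge N\bigl(1+\frac{ab}{1+a}\bigr)$; this is the paper's bound $r_1^*(q) \ge n\chat_1$, tight at $q_1 = \beta/(\alpha+N) = b/(1+a)$. Since $r_1 \le (N+1)p \le N+1$ in either regime, you get $\mu(p,q) \ge \frac{N}{N+1}\bigl(1+\frac{ab}{1+a}\bigr)$ whenever $p \le 1$.

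\textbf{This changes where the binding configuration is.} For $a = 0.733$ and $b = 0.860$, $1+\frac{ab}{1+a} \approx 1.36375$. This value is approached as $p \to 1^-$ with $q = b/(1+a)$, which is the paper's point $(p_1,q_1)$ with value $\mu_1$. Your case (ii) and your reduction of case (iii) to the two-variable minimax are correct. However, the minimax's minimum is $\min(\mu_2,\mu_3) \approx 1.3639$ in the limit, not the $1.3636$ you guessed. So the binding configuration is in case (i), not case (iii) as you assert. The entire margin over $1.363$, about $7.5\times 10^{-4}$, lies in the region you dismissed as having slack of order $N$ or ``$b+\ldots$''.

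\textbf{Two smaller points.}
\begin{enumerate}
\item Case (ii) gets $r_2 = 0$ at $p = \alpha q$ only through the main-text convention that ties go to dataset 1. The paper's appendix proof also covers arbitrary budget splits at ties (its Case 4, with finite limiting value $1 + b/(1+a+ab) \approx 1.3639$). Your argument does not address that case.
\item The reduction to $p \le \beta$, $q \le 1$ needs a line of justification. For example, for $q > 1$ and $p > \alpha q$, seller 1's ratio is at least $(N+\alpha)/\beta = (1+a)/b$.
\end{enumerate}
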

\begin{proof}[Proof sketch]
Let $r_1(p, q)$ and $r_2(p, q)$ denote the revenue earned by sellers 1 and 2, respectively,
when seller 1 prices her dataset at $p$ and seller 2 prices her dataset at $q$.
Let $r_1^*(q)$ be the maximum revenue seller 1 can earn when seller 2 prices her dataset at $q$,
and $r_2^*(p)$ be the maximum revenue seller 2 can earn when seller 1 prices her dataset at $p$.
To show that a $c$-approximate NE does not exist,
we must show that for all $p$ and $q$, either $r_1^*(q) > c r_1(p, q)$ or $r_2^*(p) > c r_2(p, q)$. Equivalently, if we define
\[ \mu(p, q) \defeq \max\left(\frac{r_1^*(q)}{r_1(p, q)}, \frac{r_2^*(p)}{r_2(p, q)}\right), \]
then it suffices to prove that $\inf_{p,q} \mu(p, q) > c$.

In \cref{sec:lne-cex:rev}, we give closed-form expressions for $r_1(p, q)$, $r_2(p, q)$, $r_1^*(q)$, and $r_2^*(p)$.
In \cref{sec:lne-cex:constr-and-points}, we define a constant $c^*$ (that depends on $\alpha$, $\beta$, and $n$).
In Appendices \ref{sec:lne-cex:sp-props} to \ref{sec:lne-cex:case4}, we prove that $\inf_{p,q} \mu(p, q) = c^*$.
Thus, for any $\eps > 0$, a $(c^*-\eps)$-approximate NE does not exist.
For $\alpha = 0.733(n-1)$, $\beta = 0.860(n-1)$, and sufficiently large $n$,
we get $c^* > 1.363$, so a 1.363-approximate NE does not exist.
\end{proof}

\ifHasSuppl
We prove a large part of \cref{thm:lne-cex} in the Lean theorem prover.
Specifically, we take the closed-form expressions for $r_1$, $r_2$, $r_1^*$, and $r_2^*$
from \cref{sec:lne-cex:rev} as their definitions, and prove $\inf_{p,q} \mu(p, q) = c^*$.
The lean theorem statement is, therefore, purely algebraic, and does not need to define data markets.
The proof in \cref{sec:lne-cex-extra} is very long, so the Lean proof is a significantly simpler
alternative for verifying \cref{thm:lne-cex}'s correctness.
\fi

\section{PLC Pricing and Non-Existence of NE}
\label{sec:plc-cex}

In \cref{sec:lne-cex}, we showed that an approximate NE may not exist when sellers use linear pricing.
Thus, we investigate a more flexible way to price datasets: \emph{piecewise-linear convex (PLC) pricing}.

\subsection{PLC Pricing and Sharding}
\label{sec:plc-pricing}

Let us first understand what it means to price a dataset non-linearly.
Let $p: [0, 1] \to \mathbb{R}_{\ge 0}$ be a continuous monotone function such that $p(0) = 0$.
A dataset is said to have \emph{pricing function} $p$
if a fraction $x \in [0, 1]$ of the dataset costs $p(x)$.

PLC pricing has been shown to be revenue-optimal for a monopolist
selling data to heterogeneous buyers~\citep{ChaudhuryGSS26}.
Moreover, PLC pricing has a very natural interpretation in a data market:
a seller can emulate PLC pricing by dividing a large dataset into multiple \emph{shards}
and pricing each shard linearly, as the following example illustrates.

\begin{example}
Consider a tabular dataset with 1 million rows.
The seller breaks the dataset into two shards:
the first 0.4 million rows form the first shard,
and the next 0.6 million rows form the second shard.

Suppose the seller prices the first shard at \$5 per million rows,
and the second shard at \$8 per million rows.
Any utility-maximizing buyer will prefer the first shard, since it is cheaper per row.
If she purchases $x$ million rows in total, the total cost would be
\[ p(x) = \begin{cases}5x & \text{ if } x \le 0.4 \\ 5 \cdot 0.4 + 8(x - 0.4) & \text{ if } x > 0.4\end{cases}. \]
Note that $p$ is PLC.
\end{example}

\subsection{Non-Existence of NE}
\label{sec:convex-ne-cex}

We show that even under the richer class of PLC pricing functions and just two sellers, a Nash equilibrium may not exist.
In fact, this negative result also extends to the slightly more general class of convex pricing functions.
Intuitively, any convex function can be approximated arbitrarily closely by a PLC function
with a sufficiently large number of shards, so one would expect these function classes to behave similarly.

\begin{restatable}{theorem}{thmConvexNeCex}
\label{thm:convex-ne-cex}
Consider a data market instance with $n \ge 2$ buyers and 2 datasets.
Each buyer has value 1 for each dataset.
The first $n-1$ buyers are called \emph{poor} and have a budget of 1 each.
The $n\Th$ buyer is called \emph{rich} and has a budget of $n+1$.
In this instance, a Nash equilibrium does not exist for convex pricing functions.
\end{restatable}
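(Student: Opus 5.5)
The plan is to argue by contradiction. Assume a profile $(p_1,p_2)$ of convex pricing functions on $[0,1]$ is a Nash equilibrium, and derive a profitable deviation in every case.

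First I would describe buyer behaviour under convex prices. Since every buyer values both datasets at $1$, a buyer with budget $b$ maximizes $x_1+x_2$ subject to $p_1(x_1)+p_2(x_2)\le b$. Because the $p_j$ are convex, this is a concave maximization. The optimum is obtained by ``water-filling'': the buyer buys units in increasing order of marginal price, so at the optimum the marginal prices of the two datasets are equalized, except where a dataset is exhausted ($x_j=1$) or unused.

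The two basic facts I would record are these. A poor buyer spends $\min(1,\cdot)$ on the cheapest units. The rich buyer buys both datasets entirely if $p_1(1)+p_2(1)\le n+1$, and otherwise spends her whole budget $n+1$. Seller $j$'s revenue is $\sum_i p_j(x_{i,j})$. By convexity and $p_j(0)=0$, we have $p_j(x)\le x\,p_j(1)$, which bounds what the cheap initial units can extract.

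Next I would split on $S\defeq p_1(1)+p_2(1)$.

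\emph{Case $S<n+1$.} The rich buyer buys everything and has leftover budget. Some seller, say the one with the smaller revenue, deviates by adding a steep final segment, i.e.\ by raising $p_j(1)$ to $n+1-p_{-j}(1)$ through a convex modification that leaves the price of small fractions unchanged. The rich buyer still buys all of both datasets, and the poor buyers' purchases are essentially unchanged, because they only reach the cheap initial part. This strictly increases seller $j$'s revenue. Hence in equilibrium $S\ge n+1$.

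\emph{Case $S\ge n+1$.} The total revenue from the rich buyer is exactly $n+1$, while the poor buyers contribute at most $n-1$ in total. I would show that some seller $j$ can deviate to a pricing function that undercuts at the rich buyer's equalized marginal price $\lambda$. Concretely, seller $j$ uses a linear (or two-piece) function with $p_j(1)$ slightly below $n+1-p_{-j}(x^{\mathrm{rich}}_{-j})$, chosen so that the rich buyer now spends nearly her whole budget on $j$. Seller $j$ keeps at least her poor-buyer revenue by leaving an initial cheap segment unchanged. The deviation is profitable unless seller $j$ already receives nearly all of the rich budget. Since at most one seller can receive almost all of it, the other seller then has the profitable deviation. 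The borderline sub-case, in which one seller captures the whole rich budget, would be handled separately. There the other seller $k$ sells only to poor buyers, earning at most $n-1$, and can instead set $p_k(x)=(n+1)x$ minus a tiny epsilon. I would then verify that the rich buyer switches enough spending to $k$ to beat $n-1$, using the budget gap $n+1>n-1$ that the choice $b_{\text{rich}}=n+1$ provides.

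I expect the main obstacle to be the second case. Because goods are non-rival and marginal prices are equalized, an undercut by seller $j$ does not make the rich buyer abandon seller $k$ entirely. She still buys $k$'s cheap initial units, so the amount seller $j$ captures is $n+1-p_k(x_k)$ for the new optimal $x_k$, which depends on the shape of $p_k$. My first attempt would be to bound $p_k(x_k)\le x_k\,p_k(1)$ using convexity, and to choose the deviation so that $x_k$ stays small. I would then check whether the inequality ``captured rich revenue $>$ current revenue'' holds for at least one of the two sellers, adjusting the constant in the rich budget if the margin is tight.
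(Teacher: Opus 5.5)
\textbf{Your first case works, but the case $S \ge n+1$ has a genuine gap: the undercut you propose cannot be shown profitable for either seller, and on a concrete profile it is not.}

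Your case $S<n+1$ is essentially sound. The steep final segment raises seller $j$'s rich payment by $n+1-S$. Poor buyers can reach the modified region when $S$ is small, so ``they only reach the cheap initial part'' is not the right reason. Still, their spending on $j$ falls by at most $(n-1)\bigl(p_j(1)-p_j(1-\delta)\bigr)\to 0$, so the deviation is profitable.

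The gap is in the case $S\ge n+1$. Since the rich buyer exhausts her budget, $n+1-p_{-j}(x^{\mathrm{rich}}_{-j})$ is exactly seller $j$'s \emph{current} rich revenue $p_j(x^{\mathrm{rich}}_j)$. So your target gives no gain by construction. Any gain from the rich buyer must come from units of $-j$ she stops buying, and there may be none. Hence the claim ``the deviation is profitable unless seller $j$ already receives nearly all of the rich budget'' is false.

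Take $n=2$, and let both sellers use $p(x)=x$ on $[0,\tfrac12]$ and $p(x)=\tfrac12+2(x-\tfrac12)$ on $[\tfrac12,1]$.
\begin{itemize}
\item Then $S=3=n+1$. The poor buyer buys both first halves, the rich buyer buys everything, and each seller earns $2=n$ while receiving exactly half the rich budget.
\item Here $n+1-p_{-j}(x^{\mathrm{rich}}_{-j})=\tfrac32$. Your deviation with $p_j(1)$ slightly below $\tfrac32$, whether linear or two-piece, earns seller $j$ slightly less than $2$.
\item More generally, if seller $j$ keeps $[0,\tfrac12]$ and uses any slope $\lambda>1$ afterwards, she gets $\tfrac12$ from the poor buyer and $\tfrac12+\min(\tfrac{\lambda}{2},1)\le\tfrac32$ from the rich one.
\end{itemize}
Yet this profile is not an equilibrium. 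The linear price $2(1-\eps)x$ earns seller 2 revenue $\tfrac12+2(1-\eps)>2$. It does so by \emph{raising} the price of the units the poor buyer used to get at slope $1$, while she still spends half her dollar on dataset 2. Keeping the cheap initial segment, as your argument requires, rules this out.

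The missing idea is to benchmark against a mirrored profile and then reprice the cheap units upward. Total spending is at most $2n$, so it suffices to show that against \emph{any} $p_1$ the other seller can earn strictly more than $n$.
\begin{itemize}
\item Copying $p_1$, with ties broken in the copier's favour (justified by a $(1-\eps)$ rescaling), secures at least half of total spending.
\item If $p_1(1)<(n+1)/2$, the rich buyer has money left over, and a horizontally compressed copy with your steep final segment pushes the copier above $n$.
\item Otherwise the copy earns at least $n$, and when it earns exactly $n$ the paper \emph{flattens} $p_1$. Choose $\alpha$ with $p_1(\alpha)+\alpha p_1'(\alpha)=1$ and $\beta$ with $p_1(\beta)-p_1(\alpha)+(\beta-\alpha)p_1'(\beta)=n$. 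The two-piece price with slopes $p_1'(\alpha)$ on $[0,\alpha]$ and $p_1'(\beta)$ on $(\alpha,\beta]$ makes every poor buyer spend her dollar on the first $\alpha$ of each dataset, and the rich buyer spend her remaining $n$ on $(\alpha,\beta]$ of each. Convexity then gives the deviator at least $n$, and strictly more unless $p_1$ is already two-piece on $[0,\beta]$.
\item In that rigid case the linear price $p_1'(\beta)(1-\eps)x$ earns more than $n$. When the slopes differ, the gain is roughly $\min(\nu,n/2)$ with $\nu=\alpha\bigl(p_1'(\beta)-p_1'(\alpha)\bigr)>0$. This is exactly the deviation in the example above.
\end{itemize}

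Your fallback bound $p_k(x_k)\le x_k p_k(1)$ does not lead to this. Enlarging the rich budget is not available, since the instance is fixed by the statement.

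Your borderline sub-case is also under-argued. The deviation to $(n+1)(1-\eps)x$ need not beat $n-1$: if $j$'s price has slope below $n+1$ on most of the rich purchase, $k$ captures only the small remainder. You would have to compare with $k$'s actual current revenue instead, and account for the poor-buyer revenue she may lose.
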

\begin{proof}[Proof sketch]
The total budget among the buyers is $(n-1) + (n+1) = 2n$.
For any pricing function selected by a seller, we show that the other seller
can pick a pricing function that gives her revenue more than $n$.
This would prove that a Nash equilibrium doesn't exist,
because in every pair of pricing functions, some seller will always earn at most $n$,
and that seller can profitably deviate to earn more than $n$.

For all $j \in [2]$, let $r_j(p_1, p_2)$ denote seller $j$'s revenue for pricing profile $(p_1, p_2)$.
By symmetry, we only need to prove one side: given seller 1's pricing function $p_1$,
we need to find a pricing function $p_2$ such that $r_2(p_1, p_2) > n$.
We give a sketch of how to construct such a $p_2$.
The full proof can be found in \cref{sec:plc-cex-extra}.

\paragraph{We can extract every buyer's full budget.}
Suppose $r_1(p_1, p_1) + r_2(p_1, p_1) < 2n$, i.e., some buyers are not exhausting their budgets.
We show that seller 2 can pick another pricing function $p_2$ which nearly preserves her revenue
from all buyers, and additionally redirects unspent budget to her.
Specifically, $r_2(p_1, p_2)$ can be made arbitrarily close to $2n - r_1(p_1, p_1)$,
so if $r_1(p_1, p_1) < n$, then we can get $r_2(p_1, p_2) > n$.

For a sufficiently small $\delta > 0$,
let $p_2'(x)$ be $p_1'(x/(1-\delta))$ if $x < 1-\delta$
and $p_2'(x)$ be a very large constant if $x > 1-\delta$ (see \cref{fig:budget-extract}).
(We use the notation $f'(x)$ to denote the sub-derivative of $f$ at $x$.
See \cref{sec:plc-cex-extra} for details.)
Every buyer's expenditure on dataset 2 at price $p_2$ is at least
$1-\delta$ times the expenditure at price $p_1$.
If a buyer had an unspent budget of $\mu > 0$ at profile $(p_1, p_1)$, then she was purchasing both datasets completely,
so her expenditure on dataset 1 cannot increase when dataset 2's price changes to $p_2$.
Since $p_1(1) + p_2(1) \ge n+1$, her expenditure on dataset 2 must increase by $\mu$.
Hence, for $\delta \to 0$, we get $r_2(p_1, p_2) \to 2n - r_1(p_1, p_1)$.

\begin{figure}[htb]
\centering
\begin{tikzpicture}[x=4cm,y=1.2cm,>={Stealth[length=2.4mm]}]

  \path
    plot[smooth] coordinates {(0,0) (0.2,0.330) (0.4,0.812) (0.6,1.376) (0.8,2)}
    -- plot[smooth] coordinates {(0.8,1.492) (0.6,1.016) (0.4,0.576) (0.2,0.234) (0,0)}
    -- cycle;

  \draw[gray,dashed] (0,2) -- (1,2);      %
  \draw[gray,dashed] (0.8,0) -- (0.8,2);    %
  \draw[gray,dashed] (1,0) -- (1,2);        %

  \draw[->,thick] (0,0) -- (0,3.4);
  \draw[->,thick] (0,0) -- (1.08,0);

  \draw[red!80!textColor,very thick]
    plot[smooth] coordinates {(0,0) (0.2,0.234) (0.4,0.576) (0.6,1.016) (0.8,1.492) (1,2)};

  \draw[blue!80!textColor,very thick]
    plot[smooth] coordinates {(0,0) (0.2,0.330) (0.4,0.812) (0.6,1.376) (0.8,2)};
  \draw[blue!80!textColor,very thick,dashed] (0.8,2) -- (0.8,3.1);  %
  \draw[blue!80!textColor,very thick] (0.8,3.1) -- (1,3.1);         %

  \fill[blue!80!textColor] (0.8,2) circle (1.6pt);
  \fill[red!80!textColor]  (1,2)   circle (1.6pt);

  \node[blue!80!textColor,anchor=east] at (0.40,0.95) {$p_2'$};
  \node[red!80!textColor,anchor=west]  at (0.82,1.35) {$p_1'$};

  \node[below] at (0.8,-0.02) {$1-\delta$};
  \node[below] at (1,-0.02)   {$1$};
\end{tikzpicture}

\caption[Budget extraction]{\normalfont
The derivatives $p_1'$ (red) and $p_2'$ (blue),
where $p_2'(x) = p_1'(x/(1-\delta))$ for $x \in (0, 1-\delta)$
is a horizontal compression of $p_1'$, and $p_2'(x)$ is a very large
constant for $x \in (1-\delta, 1)$.}
\label{fig:budget-extract}
\end{figure}

Now assume $r_1(p_1, p_1) = r_2(p_1, p_1) = n$.

\paragraph{Flattening prices.}
Since $p_1$ is monotone and convex, we get that $p_1(x) \le x \cdot p_1'(x)$ for all $x \in [0, 1]$,
and $p_1(x) + x \cdot p_1'(x)$ is non-decreasing in $x$.
Thus, pick $\alpha \in (0, 1)$ such that $p_1(\alpha) + \alpha \cdot p_1'(\alpha) = 1$.
(Such an $\alpha$ always exists; see \cref{sec:plc-cex-extra}.)
Define $p_2'(x) \defeq p_1'(\alpha)$ for all $x \in [0, \alpha]$.
Then every buyer will spend the first 1 dollar of their budget as follows:
$p_1(\alpha)$ on dataset 1 and $p_2(\alpha) = \alpha \cdot p_1'(\alpha)$ on dataset 2.
Note that $p_2(\alpha) \ge p_1(\alpha)$, and the inequality is strict if
$p_1'(x) \neq p_1'(\alpha)$ for some $x \in (0, \alpha)$.

\begin{figure}[htb]
\centering
\begin{tikzpicture}[x=5cm,y=1.4cm,>={Stealth[length=2.4mm]}]

  \fill[pattern=north east lines,pattern color=red!15!bgColor]
    (0,0) .. controls (0.02,0.058) and (0.078,0.208) .. (0.12,0.35)
          .. controls (0.162,0.492) and (0.203,0.675) .. (0.25,0.85)
          .. controls (0.297,1.025) and (0.35,1.217) .. (0.4,1.4)
    -- (0.4,0) -- cycle;
  \fill[pattern=north east lines,pattern color=red!15]
    (0.4,1.4) .. controls (0.45,1.583) and (0.5,1.78) .. (0.55,1.95)
              .. controls (0.6,2.12) and (0.65,2.278) .. (0.7,2.42)
              .. controls (0.75,2.562) and (0.813,2.703) .. (0.85,2.8)
    -- (0.85,0) -- (0.4,0) -- cycle;
  \fill[pattern=north west lines,pattern color=blue!15!bgColor]
    (0,1.4) -- (0.4,1.4)
    .. controls (0.35,1.217) and (0.297,1.025) .. (0.25,0.85)
    .. controls (0.203,0.675) and (0.162,0.492) .. (0.12,0.35)
    .. controls (0.078,0.208) and (0.02,0.058) .. (0,0) -- cycle;
  \fill[pattern=north west lines,pattern color=blue!15!bgColor]
    (0.4,2.8) -- (0.85,2.8)
    .. controls (0.813,2.703) and (0.75,2.562) .. (0.7,2.42)
    .. controls (0.65,2.278) and (0.6,2.12) .. (0.55,1.95)
    .. controls (0.5,1.78) and (0.45,1.583) .. (0.4,1.4) -- cycle;

  \draw[gray,dashed] (0.4,0) -- (0.4,1.4);
  \draw[gray,dashed] (0.85,0) -- (0.85,2.8);

  \draw[->,thick] (0,0) -- (0,3.15);
  \draw[->,thick] (0,0) -- (0.95,0);

  \draw[blue!80!textColor,very thick] (0,1.4) -- (0.4,1.4);          %
  \draw[blue!80!textColor,very thick] (0.4,2.8) -- (0.85,2.8);       %
  \draw[blue!80!textColor,very thick,dashed] (0.4,1.4) -- (0.4,2.8); %

  \draw[red!80!textColor,very thick]
    (0,0) .. controls (0.02,0.058) and (0.078,0.208) .. (0.12,0.35)
          .. controls (0.162,0.492) and (0.203,0.675) .. (0.25,0.85)
          .. controls (0.297,1.025) and (0.35,1.217) .. (0.4,1.4)
          .. controls (0.45,1.583) and (0.5,1.78) .. (0.55,1.95)
          .. controls (0.6,2.12) and (0.65,2.278) .. (0.7,2.42)
          .. controls (0.75,2.562) and (0.813,2.703) .. (0.85,2.8);

  \fill[red!80!textColor] (0.4,1.4) circle (1.6pt);
  \fill[red!80!textColor] (0.85,2.8) circle (1.6pt);

  \node at (0.25,0.42) {$A$};
  \node at (0.10,0.95) {$B$};
  \node at (0.62,0.85) {$C$};
  \node at (0.55,2.42) {$D$};

  \node[blue!80!textColor,anchor=west] at (0.53,3.00) {$p_2'$};
  \node[red!80!textColor,anchor=west]  at (0.60,2.05) {$p_1'$};

  \node[below]      at (0.4,-0.02) {$\alpha$};
  \node[below]      at (0.85,-0.02){$\beta$};
\end{tikzpicture}

\caption[Price flattening]{\normalfont
The derivatives $p_1'$ (red) and $p_2'$ (blue),
where $p_2'(x) = p_1'(\alpha)$ for $x \in (0, \alpha)$
and $p_2'(x) = p_1'(\beta)$ for $x \in (\alpha, \beta)$.
The thresholds $\alpha$ and $\beta$ are defined such that
the shaded areas satisfy $2A+B = 1$ and $2C+D=n$.
We get $r_2(p_1, p_2) = r_1(p_1, p_1) + n \cdot C + D$.}
\label{fig:price-flatten}
\end{figure}

Similarly, we can find $\beta \in (\alpha, 1]$ such that
$p_1(\beta) - p_1(\alpha) + (\beta - \alpha)p_1'(\beta) = n$.
Let $p_2'(x) \defeq p_1'(\beta)$ for all $x \in (\alpha, \beta]$.
Then the rich buyer will spend the remaining $n$ dollars of her budget as follows:
$p_1(\beta) - p_1(\alpha)$ on dataset 1 and $(\beta - \alpha)p_1'(\beta)$ on dataset 2.
The latter is at least the former, and the inequality is strict if
$p_1'(x) \neq p_1'(\beta)$ for some $x \in (\alpha, \beta)$.
See \cref{fig:price-flatten} for an example of $p_1$ and $p_2$.

Thus, using a 2-shard pricing function, seller 2 earns at least as much as seller 1.
Moreover, if $p_1'(x) \neq p_2'(x)$ for some $x \in (0, \beta)$,
then the inequality is strict, and we get $r_2(p_1, p_2) > n$.

\paragraph{Large uniform price to defeat 2-shard PLC pricing.}
The only remaining case involves a highly structured $p_1$:
it has a shard of length $\alpha$ and price-per-unit $p_1'(\alpha)$,
and a shard of length $\beta - \alpha$ and price-per-unit $p_1'(\beta)$.
The price of shard 1 is $\alpha \cdot p_1'(\alpha) = 1/2$,
and the price of shard 2 is $(\beta - \alpha)p_1'(\beta) = n/2$.

We price dataset 2 at $p_2(x) = p_1'(\beta)(1-\eps)x$, where $\eps > 0$ is infinitesimally small.
If $p_1'(\alpha) = p_1'(\beta)$, then $r_2(p_1, p_2) = (n-1) + (n+1)(1-\eps)/2 > n$.
Otherwise, $r_2(p_1, p_2) = (n-1)/2 + \min(n+\frac{1}{2}, p_1'(\beta)(1-\eps))$.
Let $\nu \defeq \alpha \cdot (p_1'(\beta) - p_1'(\alpha)) > 0$. Then
\[ p_1'(\beta) \ge \beta \cdot p_1'(\beta)
= \nu + \alpha \cdot p_1'(\alpha) + (\beta - \alpha) p_1'(\beta) = \nu + (n+1)/2, \]
Thus, $r_2(p_1, p_2) > n$.

To summarize, we showed that for any $p_1$, there exists $p_2$ such that $r_2(p_1, p_2) > n$.
Thus, in any pricing profile, the seller earning less than $n$ can deviate to earn more than $n$,
so a Nash equilibrium does not exist.
\end{proof}

\section{Approximate NE for PLC Pricing}
\label{sec:approx-ne}

In \cref{sec:plc-cex}, we showed that an NE may not exist when sellers use PLC pricing.

At a Nash equilibrium under PLC pricing, no seller can increase her revenue by switching to a different PLC pricing strategy.
However, we show that computing a seller's best response under PLC pricing is NP-hard (\cref{sec:np-hardness-br}).
Consequently, even if a pricing profile is not a Nash equilibrium, sellers may find it difficult to deviate.

Thus, we ask whether there exist pricing profiles that are at least stable against deviations to linear prices.
This guarantee is reminiscent of results in game theory, particularly regret minimization,
where an adaptive strategy---more general than a fixed strategy---is compared to a fixed strategy in hindsight.

\paragraph{Discretizing prices.}
The strategy space of all PLC functions is difficult to work with.
For ease of exposition, we simplify the strategy space by \emph{discretizing} prices.
Formally, for each seller $j \in [m]$, we are given a finite non-empty set $P_j \subset \mathbb{R}_{> 0}$,
and we require the price of each shard of dataset $j$ to lie in $P_j$.
This is a mild restriction, since any PLC function $p_j$ can be approximated
arbitrarily closely by choosing a sufficiently large $P_j$.
Moreover, in many real-world settings, prices are typically expressed as
integer multiples of a base currency, e.g., $P_j = \{1, 2, \ldots, 10^6\}$,
meaning that each shard's price-per-unit must be a whole number of dollars and cannot exceed one million dollars.

The ideas in \cref{sec:lne-cex:warmup} easily carry over to discretized linear prices,
provided that the set of prices is sufficiently fine-grained.
Thus, even with discretized prices, a Nash equilibrium in linear strategies may not exist.
Moreover, in \cref{sec:disc-plc-ne-cex}, we show that a Nash equilibrium may not exist
even for discretized PLC pricing.

Despite these negative results, we show that there exists a strategy profile for sellers under discretized PLC pricing
in which no seller can increase her revenue by more than a factor of two by deviating to a linear pricing strategy.

\paragraph{Strategy profile.}
Let $P_j = \{p_{j,1}, \ldots, p_{j,m_j}\}$ for each seller $j$, where $0 < p_{j,1} < \ldots < p_{j,m_j}$.
Seller $j$'s pricing strategy can be described in terms of the shard lengths associated with each price.
Formally, her strategy is given by the vector
$\ellvec_j = (\ell_{j,1}, \ldots, \ell_{j,m_j}) \in \Delta_{m_j}$,
where $\ell_{j,k}$ is the size of the $k\Th$ shard, which has price $p_{j,k}$.
Thus, if a buyer purchases a fraction $x$ of the dataset,
she would pay $p_{j,1}x$ if $x \le \ell_{j,1}$,
pay $p_{j,1}\ell_{j,1} + p_{j,2}(x-\ell_{j,1})$ if $0 \le x - \ell_{j,1} \le \ell_{j,2}$, and so on.

The strategy profile of all sellers is denoted by $\ellvec \defeq (\ellvec_1, \ldots, \ellvec_m)$,
and we write $\ell_{-j} \defeq (\ell_{j'})_{j' \neq j}$ for the strategies of all sellers other than $j$.

\paragraph{Buyer behavior and revenue.}
Once prices are fixed, each buyer chooses a bundle that maximizes her utility.
For linear prices, we showed in \cref{sec:notation:revenue} that each buyer faces a fractional knapsack problem,
which she solves by ordering the datasets in non-increasing order of bang-per-buck and purchasing greedily.
Similarly, under PLC pricing, buyers order the \emph{shards} in non-increasing order of bang-per-buck, and purchase greedily.
Let $r_j(\ellvec)$ denote the revenue earned by seller $j$ for the strategy profile $\ellvec$.
For each buyer $i \in [n]$, let $\sigma_i$ denote her ordering of shards by bang-per-buck,
i.e., the $k\Th$ shard of dataset $j$ has bang-per-buck $\tau_{i,j}/p_{j,k}$,
and $\sigma_i$ contains all pairs $(j, k)$ in non-increasing order of bang-per-buck.

We assume that each buyer $i$ uses \emph{sequential tie-breaking}:
when multiple shards have the same bang-per-buck, she preferentially allocates her budget
to the shard that appears first in $\sigma_i$
(as opposed to, say, purchasing multiple shards partially).
This assumption makes analysis much simpler.
This is a mild assumption because prices are discretized,
so a slight perturbation of utilities would eliminate ties in bang-per-buck.

We now state the main result of this section.

\begin{theorem}[2-NE for PLC pricing and linear deviations]
\label{thm:2-plc-ne-lindev}
There exists a strategy profile $\ellvec^*$ such that
no seller can increase her revenue by more than a factor of 2 by deviating to a linear pricing strategy.
Formally, for each seller $j \in [m]$, we have
\[ \max_{k \in [m_j]} r_j(\vecE^{(k)}, \ellvec^*_{-j}) \le 2r_j(\ellvec^*). \]
\end{theorem}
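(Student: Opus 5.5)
Our plan is to obtain $\ellvec^*$ as a fixed point or potential maximizer. The tool we intend to use is Kakutani on the product of simplices $\prod_j \Delta_{m_j}$, combined with a structural lemma. The lemma says that any seller's revenue against linear deviations is at most twice a quantity she can secure by a suitable choice of shard lengths.

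\textbf{Step 1 (decomposition of revenue by shards).} For a profile $\ellvec$ we will write seller $j$'s revenue as $\sum_k p_{j,k}\,\ell_{j,k}\,\phi_{j,k}(\ellvec)$. Here $\phi_{j,k}$ is the number of buyers who fully or partially buy shard $k$, weighted by the fraction bought. Under sequential tie-breaking, which buyers reach shard $(j,k)$ depends only on the shards ranked before it in $\sigma_i$. We will use this to show that the deviation revenue $r_j(\vecE^{(k)},\ellvec_{-j})$ has the form $p_{j,k}\sum_i y_i$. The term $y_i\in[0,1]$ is determined by the budget $i$ has left after the shards of other sellers with bang-per-buck at least $\tau_{i,j}/p_{j,k}$. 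Write $R_{j,k}(\ellvec_{-j})$ for this quantity.

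\textbf{Step 2 (defining $\ellvec^*$).} Each seller $j$ will choose the mixture $\ellvec_j\in\Delta_{m_j}$ that maximizes a concave surrogate. The natural candidate is $\sum_k \ell_{j,k}R_{j,k}(\ellvec_{-j})$, that is, the revenue she would obtain if each shard of relative size $\ell_{j,k}$ captured an $\ell_{j,k}$-fraction of the linear-deviation revenue at price $p_{j,k}$. Because this surrogate is linear in $\ellvec_j$ and continuous in $\ellvec_{-j}$ (after smoothing the discontinuities from tie-breaking with a limiting argument), Kakutani gives a fixed point $\ellvec^*$. If continuity fails, we will instead use a potential argument or a finite improvement argument over a fine grid, taking limits.

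\textbf{Step 3 (the factor 2).} The key inequality is $r_j(\ellvec^*)\ge \tfrac12\max_k R_{j,k}(\ellvec^*_{-j})$. We plan to prove it by comparing, for each buyer, the budget she spends on seller $j$ under $\ellvec^*$ with the budget she would spend under the linear deviation $\vecE^{(k)}$. The factor $2$ should come from a charging argument with two cases. In the first case, the buyer buys shards of $j$ up to price $p_{j,k}$ in full; she then pays at least what shard $k$ would have charged on that portion. In the second case, she exhausts her budget before reaching shard $k$; she then either spends her remaining budget on $j$, or loses it to other sellers whose shards are cheaper in bang-per-buck. The latter happens equally in the deviation. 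Alternatively, a cleaner route is to pick $\ellvec^*$ so that at the fixed point each seller puts all mass on $k^*=\argmax_k R_{j,k}$, except for a PLC "hedge". One could then use the standard half-half split: mass $1/2$ on the top price and mass $1/2$ on the best undercutting price. Each half then guarantees half of the corresponding deviation revenue.

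The main obstacle is Step 3. Interaction across sellers means that changing $j$'s shards alters other buyers' spending on third sellers, but not the budget available before $j$'s relevant shard. Monotonicity of leftover budget in the bang-per-buck order is what we would try first to establish, to control this.
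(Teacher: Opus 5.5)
Your Steps~1--2 match the first half of the paper's argument. Your surrogate $\sum_k \ell_{j,k}R_{j,k}(\ellvec_{-j})$ is exactly the paper's randomized revenue $\rhat_j(\ellvec)$. A Kakutani fixed point of the best-response correspondence for this linear objective gives the same property the paper obtains from Brouwer with a Nash-type improvement map: $\rhat_j(\ellvec^*)=\max_k R_{j,k}(\ellvec^*_{-j})$ for every $j$. You need no smoothing and no fallback. With discretized prices and sequential tie-breaking, each order $\sigma_i$ depends only on the values and the price sets, not on the shard lengths. Hence $R_{j,k}(\ellvec_{-j})=\sum_i \min(p_{j,k},\gamma_{i,k})$ is continuous and piecewise linear, where $\gamma_{i,k}$ is what remains of $b_i$ (floored at $0$) after the other sellers' shards that precede $(j,k)$ in $\sigma_i$.

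The gap is Step~3. It carries the entire factor $2$, and as formulated it aims at the wrong target. You propose comparing, buyer by buyer, $i$'s payment $\rho_i$ to seller $j$ with what she would pay under a \emph{single} deviation $\vecE^{(k)}$; that inequality is false. Suppose $\ell_{j,1}=\eps$ and $\gamma_{i,1}=b_i$, and $i$ fully buys shard $1$ and then spends the rest of her budget on another seller's shard preceding $(j,2)$ in $\sigma_i$. Then $\rho_i=\eps p_{j,1}$, while $\vecE^{(1)}$ collects $\min(p_{j,1},b_i)$ from her. The fixed point only tells you about the $\ellvec^*_j$-weighted average over $k$. So the missing lemma (the paper's \cref{thm:randrev-vs-rev}) is that for \emph{every} profile $\ellvec$ and buyer $i$,
\[ 2\rho_i \ge \sum_k \ell_{j,k}\min(p_{j,k},\gamma_{i,k}). \]
Your two cases are the right ones, but they must be applied shard by shard, with each resulting group charged to $\rho_i$ once.
\begin{itemize}
\item For shards of $j$ that $i$ buys fully, each term is at most $p_{j,k}\ell_{j,k}$, and these payments are disjoint parts of $\rho_i$.
\item For a shard with $\ell_{j,k}>0$ that is not fully bought, $i$'s budget is exhausted by the end of that shard. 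She spent at most $b_i-\gamma_{i,k}$ on other sellers' shards preceding $(j,k)$, so $\rho_i\ge\gamma_{i,k}$. Weighting by $\ell_{j,k}$ and using $\sum_k\ell_{j,k}\le 1$ bounds this group by $\rho_i$ as well.
\end{itemize}
The paper reaches the same bound via the fully bought prefix and monotonicity of $\gamma_{i,k}$ in $k$. Summing over buyers gives $r_j(\ellvec)\ge\rhat_j(\ellvec)/2$, and evaluating at $\ellvec^*$ finishes the proof. The only cross-seller fact used is that $\gamma_{i,k}$ does not depend on $\ellvec_j$ and upper-bounds what other sellers absorb before $(j,k)$. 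So your worry about third sellers does not arise.

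Finally, drop the ``all mass on $k^*$ plus a half--half hedge'' alternative. Asking every seller to sit (essentially) on a linear best response is asking for a linear NE, which need not exist even approximately (\cref{thm:lne-cex}). The claim that each half secures half of its deviation revenue also has no support, because the split itself changes what buyers have left when they reach the expensive half.
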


We prove \cref{thm:2-plc-ne-lindev} in two steps.
First, we define each seller $j$'s \emph{randomized revenue} $\rhat_j(\cdot)$
and show that there exists a strategy profile $\ellvec^*$ such that
no seller can increase her randomized revenue by deviating to a linear pricing.
Second, we show that for any strategy profile $\ellvec$ and any seller $j$,
the actual revenue is always at least half of the randomized revenue,
i.e., $r_j(\ellvec) \ge \rhat_j(\ellvec)/2$.

\paragraph{Randomized revenue.}
Consider the hypothetical scenario where seller $j$ prices the dataset linearly,
but the price is decided uniformly randomly,
whereas the remaining sellers use (deterministic) PLC pricing.
Specifically, seller $j$ sets the price of her dataset to $p_{j,k}$
with probability $\ell_{j,k}$ for all $k \in [m_j]$,
whereas every other seller $j'$ has a shard of size $\ell_{j',k}$
having price $p_{j',k}$ for all $k \in [m_{j'}]$.
Denote seller $j$'s expected revenue by $\rhat_j(\ellvec)$. Then,
\[ \rhat_j(\ellvec) = \sum_{k=1}^{m_j} \ell_{j,k} r_j(\vecE^{(k)}, \ell_{-j}). \]
Observe that when $\ellvec_j = \vecE^{(k)}$, then $r_j(\ellvec) = \rhat_j(\ellvec)$.

Under randomized revenues, we can show that a Nash equilibrium always exists
using Brouwer's fixed-point theorem.

\begin{lemma}[Brouwer's fixed-point \citep{border1985fixed}]
\label{thm:brouwer}
Let $S \subset \mathbb{R}^d$ be a convex, closed, and bounded set.
Let $f: S \to S$ be a continuous function.
Then $f$ has a fixed point, i.e., $\exists x^* \in S$ such that $f(x^*) = x^*$.
\end{lemma}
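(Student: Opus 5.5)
Since the lemma is classical and cited from \citep{border1985fixed}, I would give a self-contained combinatorial proof through Sperner's lemma. It has two stages: a reduction from a general $S$ to a simplex, and then the simplex case.

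\textbf{Reduction to a simplex.} Assume $S \neq \emptyset$; otherwise the statement is vacuous. Since $S$ is bounded, I can affinely embed a large simplex $T$ that contains $S$; \wLoG{} take $T$ to be the standard simplex $\Delta_{d+1}$, since an affine bijection is enough. Let $\pi_S : T \to S$ be the Euclidean nearest-point projection. It is well defined because $S$ is closed and convex. It is continuous, in fact $1$-Lipschitz, by the standard obtuse-angle inequality $\langle x - \pi_S(x), s - \pi_S(x)\rangle \le 0$ for all $s \in S$. Define $g \defeq f \circ \pi_S : T \to S \subseteq T$, which is continuous. If $g(x^*) = x^*$, then $x^* \in S$, so $\pi_S(x^*) = x^*$ and hence $f(x^*) = x^*$. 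It therefore suffices to prove the lemma for $S = \Delta_{d+1}$.

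\textbf{Simplex case via Sperner labeling.} Take a sequence of triangulations of $\Delta_{d+1}$ whose mesh tends to $0$, for example iterated barycentric subdivisions. Label each vertex $x$ with an index $i$ satisfying $x_i > 0$ and $g(x)_i \le x_i$. Such an $i$ exists because $\sum_i g(x)_i = \sum_i x_i = 1$: if $g(x)_i > x_i$ for every $i \in \supp(x)$, then summing over the support gives a contradiction with $g(x) \ge \vecZero$. This labeling is a proper Sperner labeling, since a vertex on the face spanned by $\{\vecE^{(i)} : i \in I\}$ receives a label in $I$. Sperner's lemma then gives, in each triangulation, a cell whose vertices carry all $d+1$ labels. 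By compactness of $\Delta_{d+1}$, a subsequence of these cells shrinks to a point $x^*$. Continuity of $g$ gives $g(x^*)_i \le x^*_i$ for every $i$. Since both vectors sum to $1$, equality holds coordinatewise, so $g(x^*) = x^*$.

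\textbf{Main obstacle.} The nontrivial ingredient is Sperner's lemma: every proper labeling admits an odd (hence nonzero) number of fully labeled cells. I would prove it by induction on dimension using the ``door'' parity argument. Call a facet a door if it carries labels $\{1,\dots,d\}$. Double-count pairs (cell, door facet of that cell). Interior doors are counted twice. Boundary doors lie only on the face missing label $d+1$, and by induction there is an odd number of them. A cell contains exactly one door if it is fully labeled, and zero or two doors otherwise. So the number of fully labeled cells is odd. The remaining bookkeeping, namely extracting a convergent subsequence of vertices carrying a fixed label and passing to the limit in the inequalities, is routine.
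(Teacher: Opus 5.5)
The paper does not prove this lemma. It quotes Brouwer's theorem from Border (1985) as a black box, and its only work is checking the hypotheses for $\Delta^* = \prod_j \Delta_{m_j}$ and the Nash-type map $f$ in the proof of \cref{thm:semi-rand-ne}.

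Your proposal proves the theorem from scratch, and the argument is correct:
\begin{itemize}
\item The nearest-point retraction $\pi_S$ is well defined and $1$-Lipschitz. Fixed points of $f\circ\pi_S$ lie in $S$, so they are fixed points of $f$.
\item A label with $x_i>0$ and $g(x)_i\le x_i$ always exists, and the labeling is a proper Sperner labeling.
\item The door-parity count is right. A cell has exactly one door iff it is fully labeled, and zero or two otherwise. Boundary doors can lie only on the facet opposite $\vecE^{(d+1)}$.
\item In the limit, $g(x^*)_i\le x^*_i$ for every $i$ forces equality, because both vectors sum to $1$.
\end{itemize}

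Compared with the citation, your route costs a page but makes the result self-contained. It is also constructive in a useful sense: the Sperner/door-following argument finds an approximately fully labeled cell, i.e.\ an approximate fixed point, by a finite search. That would be the natural starting point for an algorithmic version of \cref{thm:semi-rand-ne}. Your reduction step is genuinely needed for the paper's use, since $\Delta^*$ is a product of simplices rather than a single simplex.

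One small correction. For $S=\emptyset$ the statement is not vacuous but false: the empty map satisfies the hypotheses, yet no $x^*\in S$ exists. Nonemptiness is an implicit hypothesis of the lemma, and it holds for $\Delta^*$. You should state it as an assumption rather than dismiss that case.
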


\begin{lemma}
\label{thm:semi-rand-ne}
There is a strategy profile $\ellvec^*$ such that no seller $j$ can
increase her randomized revenue by deviating to a linear pricing.
Formally, for all $j \in [m]$, we have
\[ \rhat_j(\ellvec^*) = \max_{k \in [m_j]} r_j(\vecE^{(k)}, \ellvec^*_{-j}). \]
\end{lemma}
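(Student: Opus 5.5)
The plan is to adapt Nash's original Brouwer-based existence proof, with the randomized revenue $\rhat_j$ playing the role of a mixed-strategy payoff. The domain is $S \defeq \Delta_{m_1} \times \cdots \times \Delta_{m_m}$, a product of simplices, which is convex, closed, and bounded. For fixed $\ellvec_{-j}$, the quantity $\rhat_j(\ellvec) = \sum_k \ell_{j,k} r_j(\vecE^{(k)}, \ellvec_{-j})$ is linear in $\ellvec_j$. Its coefficients are the pure linear-deviation revenues $g_{j,k}(\ellvec) \defeq r_j(\vecE^{(k)}, \ellvec_{-j})$. Since $\rhat_j$ is a convex combination of the $g_{j,k}$, we always have $\rhat_j(\ellvec) \le \max_k g_{j,k}(\ellvec)$. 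So it suffices to find $\ellvec^*$ with $g_{j,k}(\ellvec^*) \le \rhat_j(\ellvec^*)$ for all $j,k$.

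\textbf{Step 1 (continuity), the main obstacle.} I first show that $r_j(\ellvec)$ is continuous in $\ellvec$ on $S$. Consequently each $g_{j,k}$ and each $\rhat_j$ is continuous, the latter being a sum of products of continuous functions.
\begin{itemize}
\item The key observation is that prices are discretized: the price of shard $(j',k)$ is the fixed number $p_{j',k}$, whatever its length. So buyer $i$'s bang-per-buck values $\tau_{i,j'}/p_{j',k}$ do not depend on $\ellvec$.
\item Under sequential tie-breaking with a fixed tie order, the ordering $\sigma_i$ is therefore independent of $\ellvec$.
\item Given this fixed order, buyer $i$'s greedy spending on the $t\Th$ shard $(j',k)$ in $\sigma_i$ equals $\min\bigl(p_{j',k}\ell_{j',k},\, b_i - (\text{spending on earlier shards})\bigr)$, clipped at $0$. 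Shards with $\tau_{i,j'}=0$ are skipped.
\item By induction along $\sigma_i$, each of these expenditures is a composition of $\min$, $\max$, and affine maps in $\ellvec$, hence continuous.
\item Seller $j$'s revenue is the sum of buyers' expenditures on her shards, so it is continuous.
\end{itemize}
This is where care is needed: I must check that the tie-breaking rule does not reorder shards as lengths vary, which holds since the rule depends only on the pairs $(j',k)$. I must also check that zero-length shards contribute nothing and cause no discontinuity, which holds since $\min(0,\cdot)=0$.

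\textbf{Step 2 (Nash map).} Define the gains $\phi_{j,k}(\ellvec) \defeq \max\bigl(0,\, g_{j,k}(\ellvec) - \rhat_j(\ellvec)\bigr)$, and set
\[ f(\ellvec)_{j,k} \defeq \frac{\ell_{j,k} + \phi_{j,k}(\ellvec)}{1 + \sum_{k'=1}^{m_j} \phi_{j,k'}(\ellvec)}. \]
Each $f(\ellvec)_j$ lies in $\Delta_{m_j}$, so $f$ maps $S$ to $S$. The map $f$ is continuous by Step 1, since the denominators are at least $1$. By \cref{thm:brouwer}, $f$ has a fixed point $\ellvec^*$.

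\textbf{Step 3 (fixed point is the desired profile).} Fix a seller $j$. Because $\rhat_j(\ellvec^*)$ is an $\ellvec^*_j$-weighted average of the values $g_{j,k}(\ellvec^*)$, there is some $k$ with $\ell^*_{j,k} > 0$ and $g_{j,k}(\ellvec^*) \le \rhat_j(\ellvec^*)$. For this $k$ we have $\phi_{j,k}(\ellvec^*) = 0$. The fixed-point equation then reads $\ell^*_{j,k} = \ell^*_{j,k}/(1+\sum_{k'}\phi_{j,k'})$, which forces $\sum_{k'} \phi_{j,k'}(\ellvec^*) = 0$. Hence $g_{j,k'}(\ellvec^*) \le \rhat_j(\ellvec^*)$ for every $k'$. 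Combined with the reverse inequality noted at the start, this gives $\rhat_j(\ellvec^*) = \max_{k} r_j(\vecE^{(k)}, \ellvec^*_{-j})$ for all $j$, as claimed.
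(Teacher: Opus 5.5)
Your proposal is correct and follows the paper's proof essentially step for step. It uses the same Nash-style map $f$ on the product of simplices, applies Brouwer's theorem, and notes that some index in the support of $\ell^*_j$ has zero gain, which forces all gains to vanish at the fixed point. Your Step 1 also supplies the continuity argument that the paper only asserts with ``it can be shown'': the order $\sigma_i$ is independent of the shard lengths because prices are discretized, so greedy spending is a composition of $\min$, $\max$ and affine maps.
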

\begin{proof}
We will define a function $f$ that perturbs a strategy profile in the direction of each seller's \emph{better response},
and show that $f$'s fixed point is the $\ellvec^*$ we need.

For any $j \in [m]$, $k \in [m_j]$, and any strategy profile $\ellvec$, let
\[ g_j(k, \ellvec) \defeq \max(0, r_j(\vecE^{(k)}, \ell_{-j}) - \rhat_j(\ellvec)). \]
Here, $g_j(k, \ellvec)$ represents the gain (relative to $\rhat_j$) that seller $j$ achieves by deviating from her current strategy $\ellvec_j$ to $\vecE^{(k)}$.
For any strategy profile $\ellvec$, define
\[ f(\ellvec)_{j,k} = \frac{\ell_{j,k} + g_j(k, \ellvec)}{1 + \sum_{k'=1}^{m_j} g_j(k', \ellvec)}. \]
Let $\Delta^* \defeq \prod_{j=1}^m \Delta_{m_j}$. Then, $f: \Delta^* \to \Delta^*$,
since for all $j \in [m]$, we have $\sum_{k=1}^{m_j} f(\ellvec)_{j,k} = 1$.

Since $\Delta^*$ is the product of simplices, it is closed and convex. It can be shown that for all $j \in [m]$, $r_j(\ellvec)$ and $\rhat_j(\ellvec)$ are continuous in $\ellvec$.
Hence, $g_j(k, \cdot)$ is continuous, and so, $f$ is continuous.
Hence, $f$ satisfies the conditions for applying Brouwer's fixed point theorem,
and by \cref{thm:brouwer}, $f$ has a fixed point $\ellvec^*$.

There exists $\ktild \in [m_j]$ such that $g_j(\ktild, \ellvec^*) = 0$, since
\begin{align*}
\rhat_j(\ellvec^*) &= \sum_{k \in \supp(\ellvec^*_j)} \ell^*_{j,k}r_j(\vecE^{(k)}, \ellvec^*_{-j})
    \ifTwoCol{\\ &}{\;\;}\ge\; \min_{k \in \supp(\ellvec^*_j)} r_j(\vecE^{(k)}, \ellvec^*_{-j}).
\end{align*}
Let $\alpha \defeq \sum_{k=1}^{m_j} g_j(k, \ellvec^*)$. If $\alpha > 0$, then
\[ \ell^*_{j,\ktild} = f(\ellvec^*)_{j,\ktild} = \frac{\ell^*_{j,\ktild}}{1 + \alpha} < \ell^*_{j,\ktild}, \]
which is a contradiction. Hence, $\alpha = 0$, and so, $g_j(k, \ellvec^*) = 0$ for all $k \in [m_j]$.
Thus, for all $j \in [m]$ and $k \in [m_j]$, we have
$r_j(\vecE^{(k)}, \ellvec^*_{-j}) \le \rhat_j(\ellvec^*)$.
Moreover, we also have
\[ \rhat_j(\ellvec^*) = \sum_{k=1}^{m_j} \ell^*_{j,k} r_j(\vecE^{(k)}, \ellvec^*_{-j})
    \le \max_{k=1}^{m_j} r_j(\vecE^{(k)}, \ellvec^*_{-j}). \]
Hence, for all $j \in [m]$, we have
\[ \rhat_j(\ellvec^*) = \max_{k=1}^{m_j} r_j(\vecE^{(k)}, \ellvec^*_{-j}).
\qedhere \]
\end{proof}

\paragraph{From randomized to PLC revenue.}
We now explore the relationship between PLC and randomized revenue.

\begin{lemma}
\label{thm:randrev-vs-rev}
For any seller $j \in [m]$ and any strategy profile $\ellvec$, we have $r_j(\ellvec) \ge \rhat_j(\ellvec)/2$.
\end{lemma}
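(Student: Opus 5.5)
The plan is to decompose revenue buyer by buyer and prove the inequality separately for each buyer. Both $r_j(\ellvec)$ and every $r_j(\vecE^{(k)}, \ellvec_{-j})$ are sums over buyers $i$ of buyer $i$'s spending on dataset $j$. Hence $\rhat_j(\ellvec)$ is also such a sum, with buyer $i$ contributing $\sum_k \ell_{j,k}$ times her spending under linear price $p_{j,k}$. It therefore suffices to fix one buyer $i$ with budget $b_i$ and $\tau_{i,j}>0$; buyers with $\tau_{i,j}=0$ contribute nothing to either side. For this buyer we show that her PLC spending $R$ on dataset $j$ satisfies
\[ 2R \ge \sum_k \ell_{j,k} s_k, \]
where $s_k$ is her spending on $j$ when seller $j$ deviates to $\vecE^{(k)}$.

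\textbf{Step 1: the linear deviation.} Fix $k$. Let $B_k \defeq \max(0, b_i - O_k)$, where $O_k$ is buyer $i$'s total spending on all shards of other sellers that precede the bang-per-buck value $\tau_{i,j}/p_{j,k}$ in her order. This means the shards with strictly larger bang-per-buck, plus the tied ones placed ahead by sequential tie-breaking. Under the deviation $\vecE^{(k)}$, dataset $j$ is a single shard of length $1$ at price $p_{j,k}$. The buyer first buys all shards ahead of it, so she has $B_k$ left and then spends $s_k = \min(p_{j,k}, B_k)$ on $j$. Here I must check that the position of $j$'s shard among the other sellers' shards is determined by its price alone, and not by the shard lengths of $j$. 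I will argue this using the fact that sequential tie-breaking depends only on the ordering $\sigma_i$ of (seller, price) pairs, and I would make that convention explicit.

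\textbf{Step 2: the PLC profile.} Under $\ellvec$, the shards of $j$ have increasing prices $p_{j,1}<p_{j,2}<\cdots$, so the buyer visits them in index order. The other-seller shards visited before shard $k$ of $j$ are exactly those counted in $O_k$. Let $S_{<k}$ be her spending on shards $1,\dots,k-1$ of $j$. Then the budget left when she reaches shard $k$ is $\max(0, B_k - S_{<k})$, and her spending on shard $k$ is
\[ \min\bigl(p_{j,k}\ell_{j,k}, \max(0, B_k - S_{<k})\bigr). \]
Summing over $k$ gives $R$.

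\textbf{Step 3: the charging argument.} Split the indices $k$ into two groups.

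First, suppose shard $k$ is bought completely. Then she spends $p_{j,k}\ell_{j,k} \ge \ell_{j,k}\min(p_{j,k},B_k) = \ell_{j,k} s_k$ on it. Summing over all such $k$, these terms total at most $R$.

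Second, suppose shard $k$ is not bought completely. Then her budget ran out inside or before this shard, so $B_k \le S_{<k} + (\text{spending on shard }k) \le R$. This gives $\ell_{j,k} s_k \le \ell_{j,k} B_k \le \ell_{j,k} R$, and summing over these $k$ yields at most $R\sum_k \ell_{j,k} = R$.

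Adding the two groups gives $\sum_k \ell_{j,k}s_k \le 2R$. Summing over buyers then gives $r_j(\ellvec) \ge \rhat_j(\ellvec)/2$.

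The main obstacle I expect is Step 1/2 consistency under ties. I need the set of other-seller shards purchased before $j$'s price-$p_{j,k}$ shard to be identical in the PLC profile and in the linear deviation. I would handle this by fixing $\sigma_i$ as an order on (seller, price-level) pairs that is independent of shard lengths. Zero-length shards are harmless, since they contribute to neither side.
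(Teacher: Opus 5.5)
Your proposal is correct and is essentially the paper's proof. It uses the same per-buyer decomposition: fully bought shards are charged against their actual payments, and all remaining shards are charged against a residual budget $B_k$ (the paper's $\gamma_{i,k}$) that is at most the buyer's total spending on dataset $j$, which gives the two halves of the factor $2$.

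The only cosmetic difference is that you bound $B_k \le R$ separately for each not-fully-bought shard. The paper instead bounds $\gamma_{i,n_i+1}$ once, for the first unfinished shard, and then uses that $\gamma_{i,k}$ is non-increasing in $k$. Your length-independent ordering $\sigma_i$ on (seller, price-level) pairs for resolving ties is exactly the paper's convention.
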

\begin{proof}
Fix a seller $j \in [m]$ and the strategy profile $\ellvec$.
We now express the revenue and randomized revenue in terms of \emph{residual budgets}.
For any buyer $i \in [n]$, recall that $\sigma_i$ represents the sequence of all shards
of all datasets, ordered in non-increasing order of bang-per-buck.

Let $L_{i,k}$ be the set of all shards of all datasets in $[m] \setminus \{j\}$
that precede shard $k$ of dataset $j$ in $\sigma_i$.
Let $L'_{i,k}$ be the set of all shards of all datasets that precede shard $k$ of dataset $j$ in $\sigma_i$.
Define $\gamma_{i,k}$ as buyer $i$'s remaining budget after purchasing everything in $L_{i,k}$,
and $\gamma'_{i,k}$ as $i$'s remaining budget after purchasing everything in $L'_{i,k}$.
Formally,
\begin{align*}
\gamma_{i,k} &\defeq \max\left(0, b_i - \sum_{(j', k') \in L_{i,k}} p_{j',k'}\ell_{j',k'} \right),
\ifTwoCol{\\}{&} \gamma'_{i,k} &\defeq \max\left(0, b_i - \sum_{(j', k') \in L'_{i,k}} p_{j',k'}\ell_{j',k'} \right).
\end{align*}
Then, we have
\begin{align*}
\rhat_j(\ellvec) &= \sum_{k=1}^{m_j} \sum_{i=1}^n \ell_{j,k}\min(p_{j,k}, \gamma_{i,k}),
\ifTwoCol{\\}{&} r_j(\ellvec) &= \sum_{k=1}^{m_j} \sum_{i=1}^n \min(p_{j,k}\ell_{j,k}, \gamma'_{i,k}).
\end{align*}

If seller $j$ prices her dataset linearly at $p_{j,k}$ with probability $\ell_{j,k}$ for each $k \in [m_j]$,
then the expected revenue from buyer $i$ is $\rhohat_i = \sum_{k=1}^{m_j} \shat_{i,k}$,
where $\shat_{i,k} \defeq \ell_{j,k} \min(p_{j,k}, \gamma_{i,k})$.

On the other hand, if seller $j$ prices her dataset deterministically, where the shard of price $p_{j,k}$ has size $\ell_{j,k}$,
then the revenue from buyer $i$ and shard $k$ is $s_{i,k} \defeq \min(p_{j,k}\ell_{j,k}, \gamma'_{i,k})$.
The total revenue from buyer $i$ is $\rho_i = \sum_{k=1}^{m_j} s_{i,k}$.
We will show that $2\rho_i \ge \rhohat_i$ for all $i \in [n]$, which implies that
\[ 2r_j(\ellvec) = \sum_{i=1}^n (2\rho_i) \ge \sum_{i=1}^n \rhohat_i = \rhat_j(\ellvec). \]

Suppose buyer $i$ buys $n_i$ shards fully, i.e., $s_{i,k} = p_{j,k}\ell_{j,k}$ for all $k \in [n_i]$. Then
\[ \rho_i \ge \sum_{k=1}^{n_i} s_{i,k} = \sum_{k=1}^{n_i} p_{j,k}\ell_{j,k}
    \ge \sum_{k=1}^{n_i} \shat_{i,k}. \]
If $n_i = m_j$, then $\rho_i \ge \rhohat_i$, and we are done.
Now let $n_i < m_j$. Then $\ell_{j,n_i+1} > 0$ and
\begin{align*}
s_{i,n_i+1} = \gamma'_{i,n_i+1} & = \max\left(0, \gamma_{i,n_i+1} - \sum_{k=1}^{n_i} p_{j,k}\ell_{j,k}\right)
\ifTwoCol{\\ &}{} < p_{j,n_i+1}\ell_{j,n_i+1}.
\end{align*}
There are two cases, depending on whether buyer $i$ runs out of money before reaching shard $n_i+1$.
If $s_{i,n_i+1} = 0$, i.e., buyer $i$ ran out of money before reaching shard $n_i+1$, then
\[ \gamma_{i,n_i+1} \le \sum_{k=1}^{n_i} p_{j,k}\ell_{j,k} = \sum_{k=1}^{n_i} s_{i,k} \le \rho_i. \]
If $s_{i,n_i+1} > 0$, then buyer $i$ exhausts her budget on shard $n_i+1$, so $\rho_i = \gamma_{i,n_i+1}$.
In either case, we get $\rho_i \ge \gamma_{i,n_i+1}$.

Moreover, since $\gamma_{i,k}$ is non-increasing in $k$, we get
\begin{align*}
\gamma_{i,n_i+1} \ge \sum_{k=n_i+1}^{m_j} \ell_{j,k}\gamma_{i,n_i+1} &\ge \sum_{k=n_i+1}^{m_j} \ell_{j,k}\gamma_{i,k}
\ifTwoCol{\\ &}{} \ge \sum_{k=n_i+1}^{m_j} \shat_{i,k}.
\end{align*}
Thus,
\[ 2\rho_i = \rho_i + \rho_i
    \ge \left(\sum_{k=1}^{n_i} \shat_{i,k}\right) + \left(\sum_{k=n_i+1}^{m_j} \shat_{i,k}\right)
    = \rhohat_i. \]
Hence, $2r_j(\ellvec) \ge \rhat_j(\ellvec)$.
\end{proof}

By combining \cref{thm:semi-rand-ne,thm:randrev-vs-rev}, we get the proof of \cref{thm:2-plc-ne-lindev}.

\subsection{Sparsity of PLC Strategies}
\label{sec:no-of-shards}

A PLC strategy is said to be \emph{sparse} if it has only a few shards.
Specifically, seller $j$'s pricing strategy $\ell_j$ is said to be sparse if $|\supp(\ell_j)|$ is small.
Sparse strategies are nice because they are easy to represent and communicate.

From the proof of \cref{thm:2-plc-ne-lindev}, we can infer some structural properties
of the 2-approximate Nash equilibrium $\ell^*$ for linear deviations.
For any seller $j$, let $B_j$ be the indices of prices
that are linear best-responses, i.e.,
\[ B_j \defeq \argmax_{k=1}^{m_j} \; r_j(\vecE^{(k)}, \ell^*_{-j}). \]
Then by \cref{thm:semi-rand-ne}, we get $\supp(\ell^*_j) \subseteq B_j$.
If $B_j$ is small for all $j$, then each seller's pricing strategy is \emph{sparse},
i.e., involving only a few shards.
We leave proving an upper bound on $|B_j|$ as an interesting open problem.

We also consider a related problem: the sparsity of PLC best-responses.
Specifically, in \cref{sec:no-of-shards-extra}, we show that for any strategy profile $\ell$
and any seller $j$, there always exists a PLC best-response $\ellhat_j$
such that $|\supp(\ellhat_j)| \le n$, where $n$ is the number of buyers.

\section{Empirics}
\label{sec:empirics}

In this section, we investigate how to find an approximate Nash equilibrium on a \emph{news data market}
constructed from a real-world dataset~\citep{aaron7sun_stocknews_2016}.

\paragraph{Empirical Setup: Datasets and Buyer Valuations.}
The dataset~\citep{aaron7sun_stocknews_2016} consists of news headlines paired with indicators of subsequent stock price movements over time. To mimic the interaction between data sellers and buyers, we partition the dataset into two disjoint subsets: one used to simulate the datasets offered by sellers, and another used to simulate the buyer-side test data used for evaluation. In each simulation run, we first fix the number of buyers and sellers. The seller-side dataset is then distributed uniformly at random across sellers, creating heterogeneous but comparable datasets for different news agencies.

Each buyer is interested in predicting stock price movements on a specific subset of days, corresponding to the days on which the buyer may trade. This induces a buyer-specific prediction task and defines a corresponding test dataset consisting of news articles and stock price indicators for those days drawn from the buyer-side dataset that was originally partitioned for evaluation. To assess the relevance of a seller’s dataset for a given buyer, buyer $i$ trains a predictive model $f_{i,j} : \mathbf{x} \mapsto \{0,1\}$ using only the news headlines provided by seller $j$. The model is implemented as a logistic regression classifier based on term-frequency features extracted from the headlines. The trained model is then evaluated on buyer $i$’s test data, and the variance of its prediction errors is measured. This empirical variance serves as a proxy for the informativeness of seller $j$’s data for buyer $i$’s prediction task, and is used to infer $\tau_{i,j}$, the relevance of seller $j$’s dataset to buyer $i$.

\begin{figure*}[t]
    \centering
    \begin{subfigure}[t]{0.43\textwidth}
        \centering
        \includegraphics[width=\linewidth]{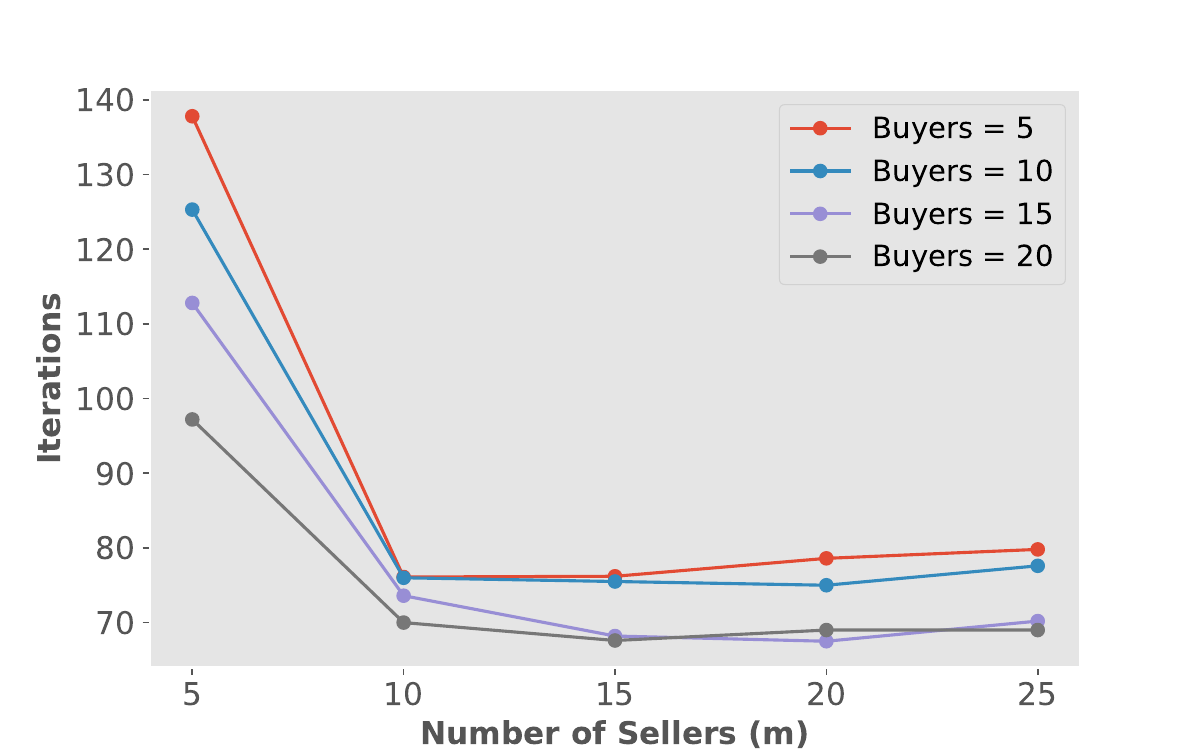}
        \caption{Convergence of Inertial Approximate Best Response}
        \label{fig:iterations}
    \end{subfigure}
    \begin{subfigure}[t]{0.43\textwidth}
        \centering
        \includegraphics[width=\linewidth]{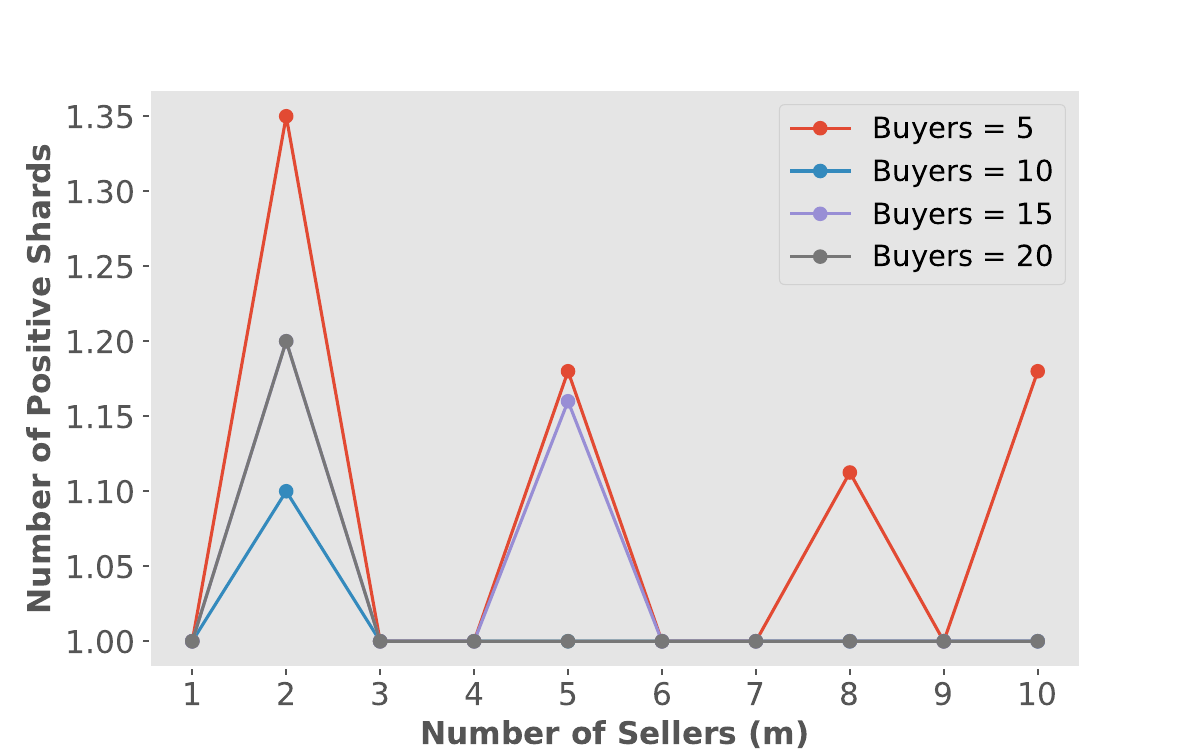}
        \caption{Average Sparsity}
        \label{fig:shards}
    \end{subfigure}
\end{figure*}

\paragraph{Dynamics: Inertial Approximate Best Response.}
Given the proof of approximate Nash equilibrium in \Cref{sec:approx-ne}, we study a natural \emph{inertial response}: each seller $j$ in round $t$, finds the PLC response that guarantees her revenue at least half of her randomized best linear response as defined in Section~\ref{sec:approx-ne}. Recall that this is obtained by solving the following program
\begin{align*}
\max_{\bm{x}}  \sum_{k=1}^{m_j} x_k\cdot r_j(\vecE^{(k)}, \ell_{-j}) \quad \text{subject to} \quad \bm{x} \in \Delta_{m_j} \,.
\end{align*}
and defining a PLC function $\bm \ell^*_j$, by setting $\ell^*_{j,k} = x^*_k$, where $\bm x^*$ is a optimal solution to the above program. Then, the seller updates her response in round $t$ as $\ell_j^t$ as  $\ell_j^{t-1} \cdot (1-\alpha)+ \bm{\ell}^*_j\cdot \alpha$, where $\alpha$ is the learning rate controlling the inertia. In our implementation, we set $\alpha = 0.1$, all sellers start with a strategy profile $\ell^0_{j, k} = 1/m_j$ for every $j, k$, and terminate the dynamics when $\norm{\ell^{t+1} - \ell^t}_{\infty}$ is no more than $\eps = 10^{-4}$.

\paragraph{Convergence Results and Insights.}
We evaluated the inertial approximate best-response dynamics on the news data marketplace,
varying the number of buyers $n \in \{5,10,15,20\}$ and the number of sellers $m \in \{5,10,15,20,25\}$.
Each buyer has a unit budget, and each seller employs a PLC pricing function
with the set $\{k/10 \mid k \in [10]\}$ of discretized prices.
As shown in \Cref{fig:iterations}, across all instances the dynamics converge efficiently,
terminating within at most $137$ rounds.

Although convergence is consistently fast, the speed of convergence is not strictly monotone in the number of sellers. In particular, while the instance with $m=5$ is smaller in size than the instance with $m=10$, it requires a relatively larger number of iterations to converge. This phenomenon can be attributed to \emph{stronger strategic interdependence in smaller markets}: when the number of sellers is small, a single seller’s update induces a larger change in the revenue obtained by other sellers, causing their subsequent responses to vary more significantly and leading to slower stabilization of the dynamics.

As the number of sellers increases, the marginal impact of any individual seller’s update diminishes, reflecting a \emph{large-market} effect. Once this effect becomes prominent, the dynamics stabilize faster and the convergence rate improves sharply. Beyond this point, we observe a steady and approximately monotone increase in the speed of convergence as the number of sellers continues to grow.

Overall, the empirical results exhibit a non-monotone convergence pattern: convergence is initially slower in very small markets due to strong strategic coupling among sellers, followed by a marked improvement as large-market effects emerge, and finally a smooth and steady growth as the market size increases.

\paragraph{Approximation Guarantees and Insights.} To quantify the approximation quality of a strategy profile $\bm \ell$, we define its \emph{incentive ratio} as
\begin{align*}
\mathrm{incentive\ ratio}(\bm \ell)
= \max_{j\in [m]}
\frac{\max_{k} r_j(\mathbf{e}^{(k)},  \ell_{-j})}
     {r_j( \ell_j,  \ell_{-j})},
\end{align*}
which measures the maximum multiplicative gain that any seller can obtain by unilaterally deviating to any linear pricing strategy.

Across all experimental settings, spanning a wide range of buyer and seller counts, we consistently observe that the incentive ratio equals $1$. That is, despite our theoretical guarantees establishing only the existence of $2$-Nash equilibria, the proposed dynamics reliably converge to \emph{exact} Nash equilibria with respect to linear pricing strategies. Moreover, this phenomenon persists across diverse market sizes, indicating that it is not a fragile artifact of small instances or particular parameter choices.

These findings highlight a gap between worst-case theoretical guarantees and typical empirical behavior, and naturally motivate further theoretical investigation. In particular, they raise the possibility that pricing games arising in real-world data marketplaces exhibit additional structural properties that are not captured by the worst-case analysis, and which may be leveraged to strengthen equilibrium guarantees.

\paragraph{Sparsity of the PLC Strategies.} A PLC pricing strategy is called \emph{sparse} if it consists of only a few shards with positive length. Sparse strategies are easier to interpret and implement. We measure sparsity as the number of shards with strictly positive length and report the average sparsity of equilibrium pricing functions across market sizes in \Cref{fig:shards}.

Across all instances, the average sparsity is at most two and always well below the number of buyers or sellers, indicating that equilibrium PLC strategies are nearly linear. Moreover, sparsity increases slightly when there are fewer sellers, reflecting stronger strategic pricing incentives, while it decreases as the number of sellers grows. In particular, when the number of sellers exceeds two, sellers rarely employ more than a single shard. This decline in sparsity can be attributed similarly to the \emph{large-market effect}: as competition intensifies, the influence of any single seller on market outcomes diminishes, reducing the incentive to employ complex, multi-shard pricing strategies. Motivated by our observations on sparsity, we further study the sparsity of PLC best responses and show in Appendix~\ref{sec:no-of-shards-extra} that for any seller, there always exists a PLC best response with sparsity at most the number of buyers.

\ifx\compatmode\undefined

\else
\fi

\addcontentsline{toc}{section}{References}

\appendix
\crefalias{section}{appendix}
\crefalias{subsection}{appendix}
\crefalias{subsubsection}{appendix}

\section{Details on the Non-Existence of (Approximate) Linear NE}
\label{sec:lne-cex-extra}

In \cref{sec:lne-cex}, we claimed that that if sellers only use linear pricing,
then no $1.363$-approximate pure Nash equilibrium exists.
In this appendix, we provide a formal proof of this claim.

Recall the data market instance we constructed in \cref{ex:lne-cex}, parametrized by $(\alpha, \beta)$.

\subsection{Characterizing Revenues}
\label{sec:lne-cex:rev}

Let $r_1(p, q)$ and $r_2(p, q)$ be the revenue earned by sellers 1 and 2, respectively,
when seller 1 prices her dataset at $p$ and seller 2 prices her dataset at $q$.
Let $r_1^*(q)$ be the maximum revenue seller 1 can make when seller 2 prices her dataset at $q$.
Define $r_2^*(p)$ analogously.
In \cref{thm:lne-cex:r,thm:lne-cex:rstar}, we give closed-form expressions
for $r_1(p, q)$, $r_2(p, q)$, $r_1^*(q)$, and $r_2^*(p)$.

\begin{lemma}
\label{thm:lne-cex:r}
We have
\begin{align*}
(n-1)\min(p, \max(0, 1-q))
    &\le r_1(p, q) - \min(\beta, p) \le (n-1)\min(p, 1),
\\ (n-1)\min(q, \max(0, 1-p))
    &\le r_2(p, q) \le (n-1)\min(q, 1).
\end{align*}
Moreover,
\begin{enumerate}
\item If $p > \alpha q$, then $r_1(p, q) = \min(\beta, p) + (n-1)\min(p, \max(0, 1-q))$
    and $r_2(p, q) = (n-1)\min(q, 1)$.
\item If $p < \alpha q$, then $r_1(p, q) = \min(\beta, p) + (n-1)\min(p, 1)$
    and $r_2(p, q) = (n-1)\min(q, \max(0, 1-p))$.
\item If $p + q \le 1$, then $r_1(p, q) = \min(\beta, p) + (n-1)p$ and $r_2(p, q) = (n-1)q$.
\item If $p + q \ge 1$, then $r_1(p, q) + r_2(p, q) = (n-1) + \min(\beta, p)$.
\end{enumerate}
\end{lemma}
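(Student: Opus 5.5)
The plan is to compute buyer by buyer, splitting revenue into the rich buyer's contribution and the $n-1$ identical poor buyers' contributions, and to use only the fractional-knapsack description of demand from \cref{sec:notation:revenue}. Each dataset has size 1, so a buyer spends at most $p$ on dataset 1 and at most $q$ on dataset 2.

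The rich buyer first. She has $\tau_{2,2}=0$, so she never buys dataset 2. Since $\tau_{2,1}=\beta>0$, she spends $\min(\beta,p)$ on dataset 1 regardless of $q$. This gives the additive $\min(\beta,p)$ term in $r_1$ and contributes nothing to $r_2$. It remains to show that each poor buyer's spending $e_1$ on dataset 1 and $e_2$ on dataset 2 satisfies
\[ \min(p,\max(0,1-q)) \le e_1 \le \min(p,1), \qquad \min(q,\max(0,1-p)) \le e_2 \le \min(q,1), \]
with equality in the stated cases. Multiplying by $n-1$ then gives the lemma.

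\textbf{General bounds.} The upper bounds hold because spending on a dataset is capped by its full price and by the budget 1. For the lower bounds, a poor buyer values both datasets positively, so under greedy purchasing she buys her first-choice dataset up to $\min(\text{price},1)$. She then spends $\min(\text{other price},\text{remaining budget})$ on the second. The first-choice spending already satisfies the lower bound. For the second-choice dataset, the remaining budget is $\max(0,1-\text{first price})$, because she spent either the full first price or her whole budget. This matches the lower bound. Tie-breaking only decides which dataset comes first, so the bounds hold under any sequential tie-break.

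\textbf{The four cases.} If $p>\alpha q$, dataset 2 has strictly higher bang-per-buck, so dataset 2 is first. Then $e_2=\min(q,1)$ and $e_1=\min(p,\max(0,1-q))$, which is item 1. Item 2 is symmetric with dataset 1 first. If $p+q\le 1$, she can afford both datasets completely in either order, so $e_1=p$ and $e_2=q$, which is item 3. If $p+q\ge 1$, she exhausts her budget, since greedy stops only when the budget is gone or both datasets are fully bought, and the latter costs $p+q\ge1$. Hence $e_1+e_2=1$, which yields item 4 after adding the rich buyer's $\min(\beta,p)$.

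The only delicate point is the general bounds when $p=\alpha q$ exactly, where the order depends on the tie-breaking rule. This is why I would argue the bounds for an arbitrary order rather than a fixed one. I expect no real obstacle beyond careful bookkeeping of the $\max(0,\cdot)$ truncations.
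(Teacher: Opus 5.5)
\textbf{Gap in the tie case.} Your decomposition matches the paper's proof, and those parts are fine. The rich buyer contributes $\min(\beta,p)$ to seller 1 only, and each poor buyer solves a two-item fractional knapsack. A strict bang-per-buck comparison then gives items 1--2, and affordability or budget exhaustion gives items 3--4.

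The gap is in the step you yourself flagged as delicate. For $p=\alpha q$ and $p+q>1$, you justify the general bounds by saying tie-breaking ``only decides which dataset comes first.'' That argument covers only \emph{sequential} tie-breaking, which is not the model in this section. In \cref{sec:notation:revenue}, a buyer facing tied datasets may distribute her budget among them in \emph{any} way. Sequential tie-breaking is imposed only later, in \cref{sec:approx-ne}, for PLC pricing. Accordingly, the paper's proof bounds revenue over all splits by taking the worst and best cases, and \cref{sec:lne-cex:case4} then uses every convex combination $z\in[0,1]$ of the two extremes. Consider a poor buyer who spends $\lambda$ on dataset 1 and $1-\lambda$ on dataset 2, with $\max(0,1-q)<\lambda<\min(p,1)$. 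She is utility-maximizing, but neither greedy order produces her bundle, so your argument says nothing about her. Your proof of item 4 has the same issue, since it relies on ``greedy stops only when the budget is gone or both datasets are fully bought.''

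\textbf{Repair.} The fix is short and independent of any purchase order. Any utility-maximizing bundle of a poor buyer has $e_1\le\min(p,1)$, $e_2\le\min(q,1)$, and $e_1+e_2=\min(1,p+q)$. The last equality holds because if she had money left and some dataset not fully bought, buying more of it would strictly raise her utility, since both values are positive. Hence
\[
e_1=\min(1,p+q)-e_2\ge\min(1,p+q)-\min(q,1)\ge\min(p,\max(0,1-q)).
\]
The final inequality follows by checking three cases: $p+q\le 1$; $p+q\ge 1$ with $q\le 1$; and $q\ge 1$. The bound for $e_2$ is symmetric. The same identity gives items 3 and 4 directly under every tie-breaking rule. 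Items 1 and 2 need only your strict bang-per-buck argument, since there the optimal bundle is unique.
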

\begin{proof}
Suppose sellers 1 and 2 price their datasets at $p$ and $q$, respectively.
The rich buyer has a positive value only for dataset 1, so she will purchase as much of it as she can.
Thus, seller 1 gets a revenue of $\min(\beta, p)$ from the rich buyer
and seller 2 doesn't earn anything from her.

If $p < \alpha q$, then dataset 1 gives a poor buyer her maximum bang-per-buck, but dataset 2 doesn't.
So a poor buyer will purchase as much of dataset 1 as possible before moving on to dataset 2.
Thus, seller 1's revenue from poor buyers would be $(n-1)\min(p, 1)$,
and seller 2's revenue from poor buyers would be $(n-1)\min(q, 1-\min(p, 1)) = (n-1)\min(q, \max(0, 1-p))$.

If $p > \alpha q$, then dataset 2 gives a poor buyer her maximum bang-per-buck, but dataset 1 doesn't.
So a poor buyer will purchase as much of dataset 2 as possible before moving on to dataset 1.
Thus, seller 2's revenue from poor buyers would be $(n-1)\min(q, 1)$,
and seller 1's revenue from poor buyers would be $(n-1)\min(p, 1-\min(q, 1)) = (n-1)\min(p, \max(0, 1-q))$.

If $p + q \le 1$, then a poor buyer will purchase both datasets completely
(regardless of which gives her more bang-per-buck).
Thus, sellers 1 and 2 will earn revenues of $(n-1)p$ and $(n-1)q$, respectively, from poor buyers.

If $p + q \ge 1$, then a poor buyer would exhaust her budget if she purchases both datasets completely.
Thus, the sum of the revenues earned by sellers 1 and 2 is $n-1$.

If $p + q > 1$ and $p = \alpha q$, then both datasets give the same bang-per-buck to a poor buyer.
The sellers' revenues depend on how the buyer splits her budget among them.
For seller 1, the worst case is if a poor buyer purchases as much of dataset 2
as she can before moving on to dataset 1, in which case
seller 1's revenue from poor buyers would be $(n-1)\min(p, \max(0, 1-q))$.
Seller 1's best case is if a poor buyer purchases as much of dataset 1
as she can before moving on to dataset 2,in which case
seller 1's revenue from poor buyers would be $(n-1)\min(p, 1)$.
Thus, $(n-1)\min(p, \max(0, 1-q)) \le r_1(p, q) - \min(\beta, p) \le (n-1)\min(p, 1)$.
Similarly, $(n-1)\min(q, \max(0, 1-p)) \le r_2(p, q) \le (n-1)\min(q, 1)$.
\end{proof}

\begin{remark}
\label{remark:lne-cex:pq-ambig}
Suppose sellers 1 and 2 price their datasets at $p$ and $q$, respectively.
If $p + q > 1$ and $p = \alpha q$, then both datasets are MBB for buyer 1,
so the sellers' revenues cannot be precisely determined,
since they depend on how buyer 1 splits her budget among them.
Thus, $r_1$ and $r_2$ need more inputs in addition to just $p$ and $q$ to be well-defined.
Thus, treating $r_1$ and $r_2$ like functions of $p$ and $q$ is technically incorrect.
However, the bulk of our analysis does not run into the case $p = \alpha q$ and $p + q > 1$,
and we will handle this case separately in the end.
\end{remark}

\begin{lemma}
\label{thm:lne-cex:rstar}
\begin{align*}
r_1^*(q) &= \max\left(
    \beta + (n-1)\max(0, 1-q),
    \min(\beta, \alpha q) + (n-1)\min(1, \alpha q)
    \right),
\\ r_2^*(p) &= (n-1)\max(1-p, \min(1, p/\alpha)).
\end{align*}
\end{lemma}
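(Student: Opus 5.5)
The plan is to compute each best response by splitting the deviating seller's price range according to which dataset is MBB for the poor buyers, then maximizing the closed-form revenue from \cref{thm:lne-cex:r} on each piece. Prices exactly at the boundary $p = \alpha q$ are handled by the paper's standing convention that poor buyers then prefer dataset~1; this is where the supremum is actually attained, so $r_1^*$ is a genuine maximum.

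\emph{Seller 1, given $q$.} First, when $p > \alpha q$, case 1 of \cref{thm:lne-cex:r} gives
\[ r_1(p,q) = \min(\beta,p) + (n-1)\min(p,\max(0,1-q)). \]
This is non-decreasing in $p$ and constant for $p \ge \beta$, since $\beta \ge 1 \ge \max(0,1-q)$. Its supremum on $p > \alpha q$ is therefore $\beta + (n-1)\max(0,1-q)$. This value is attained at $p = \max(\beta, \alpha q + \eps)$, because for $p \ge \beta$ both min terms are saturated.

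Second, when $p \le \alpha q$, poor buyers prioritize dataset~1, so case 2 holds (and it also holds at equality under the tie convention). This gives
\[ r_1(p,q) = \min(\beta,p) + (n-1)\min(p,1), \]
which is non-decreasing in $p$. It is maximized at $p = \alpha q$, with value $\min(\beta,\alpha q) + (n-1)\min(1,\alpha q)$.

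Taking the larger of the two branch values yields the stated formula for $r_1^*(q)$.

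\emph{Seller 2, given $p$.} First, when $q < p/\alpha$, case 1 gives $r_2 = (n-1)\min(q,1)$. Its supremum is $(n-1)\min(1,p/\alpha)$, approached as $q \uparrow p/\alpha$. It is attained exactly whenever $p/\alpha > 1$, by taking $q = 1$.

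Second, when $q \ge p/\alpha$, poor buyers prioritize dataset~1 (at equality by the tie rule). Then $r_2 = (n-1)\min(q,\max(0,1-p))$, which is maximized for large $q$ with value $(n-1)\max(0,1-p)$. Combining the two branches gives
\[ (n-1)\max\bigl(\max(0,1-p),\ \min(1,p/\alpha)\bigr) = (n-1)\max\bigl(1-p,\ \min(1,p/\alpha)\bigr), \]
since $\min(1,p/\alpha) \ge 0$ makes the inner $\max(0,\cdot)$ redundant.

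The main obstacle is the undercut branch for seller~2. Its value $(n-1)\,p/\alpha$ (when $p \le \alpha$) is only a supremum approached as $q \uparrow p/\alpha$, and it is attained only if the tie at $q = p/\alpha$ favours dataset~2. Under the stated convention it does not, so $r_2^*(p)$ must be interpreted as the supremum of seller~2's revenue. This is consistent with how $r_2^*$ is used in $\mu(p,q)$ and in the strict inequality $\inf \mu > c$. I will state this interpretation explicitly. I will also check that the alternative tie-breaking treated in \cref{remark:lne-cex:pq-ambig} only changes which side attains the value, not the value itself.
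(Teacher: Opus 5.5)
Your proof is correct and follows essentially the same route as the paper: split the deviating seller's price at the MBB boundary, use monotonicity of the closed forms from \cref{thm:lne-cex:r} on each side, and take the larger supremum. The one difference is at the boundary. The appendix drops the ``ties favour dataset~1'' convention, so instead of invoking it the paper bounds the revenue at $p=\alpha q$ (resp.\ $q=p/\alpha$) by the tie-independent upper bound in \cref{thm:lne-cex:r} and approaches that value from the strict side. That is exactly the check you defer, and with it your argument covers arbitrary tie-breaking, although $r_1^*$ is then in general only a supremum rather than an attained maximum.
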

\begin{proof}
We have
\begin{align*}
r_1^*(q) &\defeq \sup_{p \in \mathbb{R}_{\ge 0}} r_1(p, q),
& r_2^*(q) &\defeq \sup_{q \in \mathbb{R}_{\ge 0}} r_2(p, q).
\end{align*}

By \cref{thm:lne-cex:r}, if $q < p/\alpha$, then
$r_2(p, q) = (n-1)\min(q, 1) \le (n-1)\min(p/\alpha, 1)$,
and $r_2(p, p/\alpha) \le (n-1)\min(p/\alpha, 1)$.
Moreover, for any $0 < \eps \le p/\alpha$, we have
$r_2(p, p/\alpha-\eps) = (n-1)\min(p/\alpha-\eps, 1)
\ge (n-1)\min(p/\alpha, 1) - (n-1)\eps$. Thus,
\[ \sup_{q \in [0, p/\alpha]} r_2(p, q) = (n-1)\min(p/\alpha, 1). \]
If $q > p/\alpha$, then $r_2(p, q) = (n-1)\min(q, \max(0, 1-p))$,
which is non-decreasing in $q$. Thus,
\[ \sup_{q \in (p/\alpha, \infty)} r_2(p, q) = r_2(p, \infty) = (n-1)\max(0, 1-p). \]
Hence,
\begin{align*}
r_2^*(p) &\defeq \sup_{q \in \mathbb{R}_{\ge 0}} r_2(p, q)
\\ &= \max\left(\sup_{q \in [0, p/\alpha]} r_2(p, q), \sup_{q \in (p/\alpha, \infty)} r_2(p, q)\right)
\\ &= (n-1)\max(\min(p/\alpha, 1), \max(0, 1-p))
\\ &= (n-1)\max(1-p, \min(1, p/\alpha)).
\end{align*}

By \cref{thm:lne-cex:r}, if $p < \alpha q$, then
$r_1(p, q) \le \min(\beta, p) + (n-1)\min(p, 1) \le \min(\beta, \alpha q) + (n-1)\min(\alpha q, 1)$,
and $r_1(\alpha q, q) \le \min(\beta, \alpha q) + (n-1)\min(\alpha q, 1)$.
Moreover, for any $0 < \eps \le \alpha q$, we have
$r_1(\alpha q - \eps, q) = \min(\beta, \alpha q - \eps) + (n-1)\min(\alpha q - \eps, 1)
    \ge \min(\beta, \alpha q) + (n-1)\min(\alpha q, 1) - n\eps$.
Thus,
\[ \sup_{p \in [0, \alpha q]} r_1(p, q) = \min(\beta, \alpha q) + (n-1)\min(1, \alpha q). \]
For $p > \alpha q$, we have $r_1(p, q) = \min(\beta, p) + (n-1)\min(p, \max(0, 1-q))$,
which is non-decreasing in $p$. Thus,
\[ \sup_{p \in (\alpha q, \infty)} r_1(p, q) = r_1(\infty, q) = \beta + (n-1)\max(0, 1-q). \]
Hence,
\begin{align*}
r_1^*(q) &\defeq \sup_{p \in \mathbb{R}_{\ge 0}} r_1(p, q)
\\ &= \max\left(\sup_{p \in [0, \alpha q]} r_1(p, q), \sup_{q \in (\alpha q, \infty)} r_1(p, q)\right)
\\ &= \max\left(
    \min(\beta, \alpha q) + (n-1)\min(1, \alpha q),
    \beta + (n-1)\max(0, 1-q)
    \right).
\qedhere
\end{align*}
\end{proof}

\begin{remark}
$r_1^*(q)$ and $r_2^*(p)$ \emph{only} depend on $q$ and $p$, respectively
(unlike $r_1$ and $r_2$; see \cref{remark:lne-cex:pq-ambig}).
Intuitively, this is because undercutting the other seller by an infinitesimally small amount
ensures that the revenue is not ambiguous.
\end{remark}

\subsection{Constraints and Special Points}
\label{sec:lne-cex:constr-and-points}

The inapproximability factor is given by
\[ \inf_{p, q \in \mathbb{R}_{\ge 0}} \mu(p, q), \]
where
\[ \mu(p, q) \defeq \max\left(\frac{r_1^*(q)}{r_1(p, q)}, \frac{r_2^*(p)}{r_2(p, q)}\right). \]
(Note that $r_1^*(q) \ge \beta > 0$ for all $q \ge 0$, and $r_2^*(p) > 0$ for all $p \ge 0$,
and so, if $r_1(p, q) = 0$ or $r_2(p, q) = 0$, we define $\mu(p, q) = \infty$.)

To make the analysis simpler, we impose some constraints on $(\alpha, \beta)$
instead of searching the entire region $\mathbb{R}_{\ge 0}^2$.
Under these constraints, we identify a set of four points $P \defeq \{(p_i, q_i)\}_{i=1}^4$
such that $\argmin_{p,q} \mu(p, q) \in P$.
Then, we pick $(\alpha, \beta)$ such that $\min_{i=1}^4 \mu(p_i, q_i)$ is maximized.

We first simply state the constraints and the points $\{(p_i, q_i)\}_{i=1}^4$.
After that, we explain how we came up with them and prove that $\argmin_{p,q} \mu(p, q) \in P$.

\textbf{Constraints}:
\begin{enumerate}
\item \label{item:lne-cex:c1}$2 \le \alpha \le \beta < \alpha + n-1$.
\item \label{item:lne-cex:c2}$\alpha + n-1 < \alpha\beta$.
\item \label{item:lne-cex:c3}$(\alpha + n-1)^3 > \beta(\alpha\beta + 2(n-1)(\alpha + n-1))$.
\item \label{item:lne-cex:c4}$\alpha(\beta + n-1)^2 > \beta(n-1)^2$.
\end{enumerate}

Let $L_1 \defeq \alpha + (n-1) - \beta$ and $L_2 \defeq (n-1)^2 + \alpha(n-1) + \alpha\beta$.
Then $L_1 > 0$ by Constraint \ref{item:lne-cex:c1}.

\textbf{Points}:
\begin{align*}
q_1 &\defeq \frac{\beta}{\alpha + n-1},
    & \chat_1 &\defeq \frac{n-1 + \alpha q_1}{n},
    & p_1 &\defeq \frac{2}{1+\sqrt{1+4/\alpha\chat_1}},
    & \mu_1 &\defeq \frac{\chat_1}{p_1}.
\end{align*}
\begin{align*}
p_2 &\defeq \beta
    & q_2 &\defeq \frac{n-1+\beta}{n-1 + \sqrt{L_2}},
    & \mu_2 &\defeq \frac{1}{q_2}.
\end{align*}
\begin{align*}
\mu_3 &\defeq \frac{\sqrt{(n-1)^2L_1^2 + 4\alpha\beta L_2} - L_1(n-1)}{2\alpha\beta},
    & p_3 &\defeq \alpha q_1 \mu_3,
    & q_3 &\defeq q_1.
\end{align*}
\begin{align*}
q_4 &\defeq q_1,
    & p_4 &\defeq \alpha q_1,
    & \mu_4 &\defeq 1 + \frac{\beta(n-1)}{L_2}.
\end{align*}

Our main claim is that it suffices to only care about the points $\{(p_i, q_i)\}_{i=1}^4$.

\begin{lemma}
\label{thm:lne-cex:pq-redn}
If $(\alpha, \beta)$ satisfies Constraints \ref{item:lne-cex:c1} to \ref{item:lne-cex:c4}, then
\[ \inf_{p,q \in \mathbb{R}_{\ge 0}} \mu(p, q) = \min_{i=1}^4 \mu(p_i, q_i). \]
Moreover, $\mu(p_i, q_i) = \mu_i$ for all $i \in [4]$.
\end{lemma}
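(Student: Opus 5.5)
The plan is to partition the quadrant $\mathbb{R}_{\ge 0}^2$ into the regions on which the closed forms of \cref{thm:lne-cex:r,thm:lne-cex:rstar} are given by a single algebraic expression. On each region we minimize $\mu(p,q)$ separately; the global infimum is the minimum of the regional infima, and we will show each regional infimum is $\ge$ one of $\mu_1,\dots,\mu_4$ (and attained at one of the $(p_i,q_i)$). The first step is a preliminary reduction. Seller 1 never benefits from $p>\beta$ and seller 2 never from $q>1$, since revenues are then constant or decrease while the optimal deviations are unchanged. So we may restrict to $p\le\beta$, $q\le 1$ with $r_1,r_2>0$. We split along the lines $p=\alpha q$ (who undercuts whom), $p+q=1$ (whether poor buyers exhaust budgets), $p=1$, $\alpha q=1$, and the breakpoint of $r_1^*(q)$, where $\beta+(n-1)(1-q)=\min(\beta,\alpha q)+(n-1)\min(1,\alpha q)$. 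Note that $q_1=\beta/(\alpha+n-1)$ is exactly this crossing when $\alpha q<\beta$ and $\alpha q\ge 1$; by Constraint~\ref{item:lne-cex:c2}, $\alpha q_1>1$.

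Next we handle the regions by monotonicity. In the region $p<\alpha q$ (seller 1 undercuts), $r_2^*(p)$ and $r_2(p,q)$ depend on $q$ only through $r_2=(n-1)\min(q,\max(0,1-p))$, and $r_1$ is independent of $q$ while $r_1^*(q)$ is monotone in $q$ on each piece. So for fixed $p$ we push $q$ to the boundary that balances or minimizes both ratios. Symmetrically, in $p>\alpha q$ we fix $q$ and move $p$: $r_1$ is nondecreasing in $p$ and $r_2^*(p)$ is nondecreasing in $p$, which forces either $p\to\beta$ or $p\to\alpha q^+$. After this, each region reduces to a one-dimensional problem along a boundary curve. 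On that curve $\mu$ is the maximum of an increasing and a decreasing ratio, so the minimizer is at their crossing. Solving the crossing equations should give the explicit points. For example, on $p=\beta$ the crossing $1/q=r_1^*(q)/r_1(\beta,q)$ yields a quadratic whose root is $q_2$ with discriminant $L_2$. Similarly, the quadratic $p^2\alpha\chat_1+\dots$ gives $p_1$, and balancing along $q=q_1$ with $p=\alpha q_1\mu$ gives the quadratic defining $\mu_3$. The corner $p=\alpha q_1$ gives $\mu_4$. Constraints~\ref{item:lne-cex:c3} and~\ref{item:lne-cex:c4} should be exactly what guarantees that the computed crossings lie inside their regions (e.g.\ $q_2\le 1$, $p_1\le 1$, $p_3\le\beta$), so no further boundary cases survive.

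Then the tie line $p=\alpha q$, $p+q>1$ (\cref{remark:lne-cex:pq-ambig}) must be handled. Here $r_1,r_2$ depend on the tie-breaking split, but $r_1+r_2=(n-1)+\min(\beta,p)$ by \cref{thm:lne-cex:r}. So for any split, either $r_1$ is at most its lower-envelope value or $r_2$ is; thus $\mu$ is at least the minimum of the two one-sided limits, which are already covered by the adjacent regions. In particular, the point $p_4=\alpha q_1$ lies on this line, and $\mu_4$ is evaluated there with the favorable split. Finally, we verify $\mu(p_i,q_i)=\mu_i$ by direct substitution into the closed forms, checking which branch of each $\max/\min$ is active using Constraints~\ref{item:lne-cex:c1}--\ref{item:lne-cex:c4}.

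The main obstacle is the case bookkeeping. Many regions arise, and in each one we must certify that the active branches of $r_1^*$ and $r_2^*$ are the claimed ones, and that the crossing points lie in the region. This is where the constraints enter, and the inequalities there are where errors are most likely. I would organize this by first tabulating the active branches in each region and then proving the region-membership inequalities one at a time from the constraints. Given the paper's mention of a Lean formalization, these steps are plausibly routine but long.
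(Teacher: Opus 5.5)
You handle the open regions $p\neq\alpha q$ essentially as the paper's Cases 1--3 do. The gap is on the tie line $p=\alpha q$, $p+q>1$.

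\textbf{The tie-line claim is false.} You claim that for every tie-breaking split, $r_1$ or $r_2$ is at most its value on the corresponding side, so $\mu$ is at least the smaller of the two one-sided limits. Write $r_1=r_1^-(q)+z\,d(q)$ and $r_2=r_2^+(q)-z\,d(q)$ with $z\in[0,1]$; then $z=0$ and $z=1$ reproduce the limits from $p>\alpha q$ and $p<\alpha q$. For $0<z<1$ both sellers earn strictly more than their worse one-sided revenue. Since $r_1^*/r_1$ decreases and $r_2^*/r_2$ increases in $z$, their maximum can fall strictly below both endpoint values.

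This happens at $(p_4,q_4)=(\alpha q_1,q_1)$. The limit from $p>\alpha q$ is $\frac{\alpha q_1+n-1}{\alpha q_1+(n-1)(1-q_1)}$. The limit from $p<\alpha q$ is $+\infty$, because $\alpha q_1>1$ forces $r_2=0$ there. The balancing split, however, gives $\mu_4=1+\frac{(n-1)q_1}{\alpha q_1+n-1}$, which is strictly smaller: with $X=\alpha q_1+n-1$ and $Y=(n-1)q_1$, this is $\frac{X+Y}{X}<\frac{X}{X-Y}$. So the tie line is not covered by the adjacent regions; it has its own minimizer. For the parameters of \cref{thm:lne-cex}, $\mu_4$ equals $c^*\approx 1.364$. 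Your own remark that $\mu_4$ is evaluated ``with the favorable split'' is already inconsistent with the domination claim.

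\textbf{What is missing} is a lower bound on the tie line that holds uniformly over splits. Because $r_1+r_2=s(q)=\min(\beta,\alpha q)+n-1$ does not depend on the split, the mediant inequality gives $\mu\ge\frac{r_1^*(q)+r_2^*(q)}{s(q)}$. Also $r_1\le r_1^+(q)$ gives $\mu\ge\frac{r_1^*(q)}{r_1^+(q)}$. \cref{thm:lne-cex:4.1} shows the balancing split attains the larger of these two bounds. \cref{thm:lne-cex:4.2} then minimizes this one-variable function over all $q>\frac{1}{\alpha+1}$. It is non-increasing up to $q_1$ and increasing on $[q_1,1]$, which yields $\mu_4$. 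For $q\ge 1$ it is at least $1+\frac{n-1}{\beta+n-1}$, which is at least $\mu_2$ by Constraint 1.

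That last range also falls outside your preliminary reduction to $q\le 1$. Clipping $q>1$ to $1$ moves a tie point into the region where seller 2 undercuts. An interior split at the original point is not dominated by the clipped point, so ``seller 2 never benefits from $q>1$'' gives no inequality for $\mu$ there.

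As it happens, one can check algebraically that $\mu_4\ge\min(\mu_1,\mu_3)$ under the constraints, with equality iff $\alpha^2(\beta+n-1)=(n-1)^3$. So the value you would conclude is not wrong. Reaching it still requires the tie-line analysis above, which your argument replaces by a false inequality.
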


To prove as strong an inapproximability as possible,
we must pick $(\alpha, \beta)$ such that $c^* \defeq \min(\mu_1, \mu_2, \mu_3, \mu_4)$ is maximized.
For $\alpha = 0.7331568565(n-1)$, $\beta = 0.8603994358(n-1)$, and $n \to \infty$,
we get $c^* \approx 1.3639646021$. By plugging these values into \cref{thm:lne-cex:pq-redn},
we prove \cref{thm:lne-cex}.

In \cref{sec:lne-cex:sp-props}, we prove useful properties of the points $\{(p_i, q_i)\}_{i=1}^4$
and show that $\mu(p_i, q_i) = \mu_i$ for all $i \in [4]$.
In \cref{sec:lne-cex:case1,sec:lne-cex:case2,sec:lne-cex:case3,sec:lne-cex:case4},
we prove \cref{thm:lne-cex:pq-redn} (split across
\cref{thm:lne-cex:1.1,thm:lne-cex:1.2,thm:lne-cex:2,thm:lne-cex:3,thm:lne-cex:4.1,thm:lne-cex:4.2}).

\subsection{Properties of the Special Points}
\label{sec:lne-cex:sp-props}

We now hint towards where we got the points $\{(p_i, q_i)\}_{i=1}^3$ from,
and prove that $\mu_i = \mu(p_i, q_i)$ for all $i \in [3]$.
We defer the corresponding discussion for $(p_4, q_4)$ to \cref{sec:lne-cex:case4},
since $p_4/q_4 = \alpha$, and that case needs careful handling.

\begin{lemma}
\label{thm:lne-cex:q1}
$q_1 \in (1/\alpha, 1)$, $r_1^*(q_1) = n\chat_1$,
and $r_1^*(q) \ge n\chat_1$ for all $q \ge 0$.
\end{lemma}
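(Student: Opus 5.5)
The proof works directly with the closed form of $r_1^*(q)$ from \cref{thm:lne-cex:rstar}:
\[ r_1^*(q) = \max\bigl(A(q), B(q)\bigr),\quad A(q) \defeq \beta + (n-1)\max(0,1-q),\quad B(q) \defeq \min(\beta,\alpha q) + (n-1)\min(1,\alpha q). \]
First we establish $q_1 \in (1/\alpha, 1)$. Since $q_1 = \beta/(\alpha+n-1)$, the upper bound $q_1<1$ is exactly $\beta < \alpha+n-1$ from Constraint~\ref{item:lne-cex:c1}. The lower bound $q_1 > 1/\alpha$ is $\alpha\beta > \alpha+n-1$, which is Constraint~\ref{item:lne-cex:c2}.

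Next we locate the minimizer of $r_1^*$. The function $A$ is non-increasing in $q$ and $B$ is non-decreasing, so $\max(A,B)$ is minimized where they cross. On the interval $q\in[1/\alpha,1]$ we have $\alpha q \ge 1$, so $B(q) = \min(\beta,\alpha q) + (n-1)$. We also have $A(q) = \beta + (n-1)(1-q)$. At $q_1$ we get $\alpha q_1 = \alpha\beta/(\alpha+n-1) < \beta$ since $\alpha < \alpha+n-1$, so $B(q_1) = \alpha q_1 + n-1$. Setting $A=B$ on this branch gives
\[ \beta + (n-1)(1-q) = \alpha q + n-1 \iff q(\alpha+n-1) = \beta, \]
so the crossing point is exactly $q_1$. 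Hence $r_1^*(q_1) = \alpha q_1 + n-1 = n\chat_1$ by the definition of $\chat_1$.

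It remains to show $r_1^*(q) \ge n\chat_1$ for every $q \ge 0$. For $q \le q_1$ we use monotonicity of $A$: $r_1^*(q) \ge A(q) \ge A(q_1) = n\chat_1$. For $q \ge q_1$ we use monotonicity of $B$: $r_1^*(q) \ge B(q) \ge B(q_1) = n\chat_1$. Both monotonicity claims hold globally, because $\max(0,1-q)$ is non-increasing and $\min(\cdot,\alpha q)$ is non-decreasing in $q$. So no case split over the regions $q<1/\alpha$ or $q>1$ is needed.

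The only real care required is in verifying which branch of each $\min$ and $\max$ is active at $q_1$, namely $1/\alpha < q_1 < 1$ and $\alpha q_1 < \beta$. These are precisely where Constraints~\ref{item:lne-cex:c1} and~\ref{item:lne-cex:c2} enter; the rest of the argument is routine.
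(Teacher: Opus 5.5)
Your proposal is correct and follows essentially the same route as the paper's proof. You place $q_1$ in $(1/\alpha,1)$ via Constraints~1 and~2, evaluate both branches of $r_1^*$ at $q_1$ to get the common value $\alpha q_1 + n - 1$, and use the monotonicity of the two branches on either side of $q_1$. The only minor difference is that you justify $\alpha q_1 < \beta$ via $n > 1$ instead of the paper's chain $\alpha q_1 < \alpha \le \beta$, which is inconsequential.
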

\begin{proof}
By Constraint \ref{item:lne-cex:c1}, $q_1 = \beta/(\alpha + n-1) < 1$.
By Constraint \ref{item:lne-cex:c2}, $\alpha q_1 = \alpha\beta/(\alpha + n-1) > 1$.
Thus, $q_1 \in (1/\alpha, 1)$.

Define the functions $g_1, g_2: \mathbb{R}_{\ge 0} \to \mathbb{R}$ as follows:
\begin{align*}
g_1(x) &\defeq \beta + (n-1)\max(0, 1-x),
& g_2(x) &\defeq \min(\beta, \alpha x) + (n-1)\min(1, \alpha x).
\end{align*}
Then
\begin{align*}
g_1(q_1) &= \beta + (n-1)(1-q_1) = \beta + (n-1)\left(1 - \frac{\beta}{\alpha + n-1}\right)
\\ &= (n-1) + \beta\left(1 - \frac{n-1}{\alpha + n-1}\right)
    = n-1 + \frac{\alpha\beta}{\alpha + n-1}
\\ &= n-1 + \alpha q_1 = n\chat_1.
\end{align*}
By Constraint \ref{item:lne-cex:c1}, $\alpha q_1 < \alpha \le \beta$.
Since $q_1 > 1/\alpha$, we get
$g_2(q_1) = \min(\beta, \alpha q_1) + (n-1)\min(1, \alpha q_1) = \alpha q_1 + n-1 = n\chat_1$.
Hence, $g_1(q_1) = g_2(q_1)$.

$r_1^*(q_1) = \max(g_1(q_1), g_2(q_1)) = n\chat_1$.
$g_1$ is non-increasing, and $g_2$ is non-decreasing.
Thus, for any $q \in [0, q_1]$, we have
$r_1^*(q) = \max(g_1(q), g_2(q)) \ge g_1(q) \ge g_1(q_1) = n\chat_1$,
and for any $q \ge q_1$, we have
$r_1^*(q) = \max(g_1(q), g_2(q)) \ge g_2(q) \ge g_2(q_1) = n\chat_1$.
Thus, $r_1^*(q) \ge n\chat_1$ for all $q \ge 0$.
\end{proof}

\begin{lemma}
\label{thm:lne-cex:p1}
$p_1 \in (\frac{\alpha}{\alpha+1}, 1)$ and $\chat_1 > 1$.
\[ \frac{r_1^*(q_1)}{r_1(p_1, q_1)} = \frac{r_2^*(p_1)}{r_2(p_1, q_1)} = \mu_1, \]
so $\mu(p_1, q_1) = \mu_1$.
Moreover, $p_1$ is the unique positive solution to the equation
\[ \frac{\chat_1}{x} = \frac{x}{\alpha(1-x)}. \]
\end{lemma}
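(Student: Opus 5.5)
We verify the algebraic facts about $p_1$ first, and then evaluate the two ratios using the closed forms in \cref{thm:lne-cex:r,thm:lne-cex:rstar}.

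\textbf{Step 1: $\chat_1 > 1$.} By \cref{thm:lne-cex:q1}, $\alpha q_1 > 1$. Hence $n\chat_1 = n-1+\alpha q_1 > n$, which gives $\chat_1 > 1$.

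\textbf{Step 2: the equation for $p_1$.} Clearing denominators in $\chat_1/x = x/(\alpha(1-x))$ gives
\[ \alpha\chat_1(1-x) = x^2, \quad\text{i.e.,}\quad x^2 + \alpha\chat_1 x - \alpha\chat_1 = 0. \]
This quadratic has product of roots $-\alpha\chat_1 < 0$, so it has exactly one positive root. That root is
\[ \frac{-\alpha\chat_1 + \sqrt{\alpha^2\chat_1^2 + 4\alpha\chat_1}}{2}. \]
Rationalizing and dividing by $\alpha\chat_1$ turns it into $2/(1+\sqrt{1+4/(\alpha\chat_1)}) = p_1$. Any positive root must satisfy $x<1$, since the left side $\alpha\chat_1(1-x)$ must be positive, so $p_1<1$.

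\textbf{Step 3: $p_1 > \alpha/(\alpha+1)$.} Let $h(x) = x^2 + \alpha\chat_1 x - \alpha\chat_1$. Since $h$ is increasing on $x>0$, it suffices to show $h(\alpha/(\alpha+1)) < 0$. We have $1-x = 1/(\alpha+1)$ at this point, so the condition becomes
\[ \frac{\alpha^2}{(\alpha+1)^2} < \frac{\alpha\chat_1}{\alpha+1}, \quad\text{i.e.,}\quad \alpha < (\alpha+1)\chat_1. \]
This holds because $\chat_1 > 1$.

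\textbf{Step 4: the revenues at $(p_1,q_1)$.} From $p_1 < 1 < \alpha q_1$ we get $p_1 < \alpha q_1$, so case 2 of \cref{thm:lne-cex:r} applies and no tie-breaking ambiguity arises. Since $p_1 < 1 \le \beta$, it gives
\[ r_1(p_1,q_1) = p_1 + (n-1)p_1 = np_1, \qquad r_2(p_1,q_1) = (n-1)\min(q_1, 1-p_1). \]
By \cref{thm:lne-cex:q1}, $r_1^*(q_1) = n\chat_1$, so $r_1^*/r_1 = \chat_1/p_1 = \mu_1$.

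For seller 2, $r_2^*(p_1) = (n-1)\max(1-p_1, p_1/\alpha)$, using $p_1/\alpha < 1$. We have $p_1 > \alpha/(\alpha+1)$ exactly when $p_1/\alpha > 1-p_1$, so $r_2^*(p_1) = (n-1)p_1/\alpha$.

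\textbf{Step 5: the main obstacle.} We need $\min(q_1, 1-p_1) = 1-p_1$, i.e., $1-p_1 \le q_1$. This is the step I expect to require real care. From the defining equation, $1-p_1 = p_1^2/(\alpha\chat_1)$, so the condition becomes $p_1^2 \le \alpha\chat_1 q_1$. Since $p_1<1$, it suffices to show $\alpha\chat_1 q_1 \ge 1$, and this holds because $\alpha q_1 > 1$ and $\chat_1 > 1$.

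Then
\[ \frac{r_2^*}{r_2} = \frac{p_1/\alpha}{1-p_1} = \frac{p_1}{\alpha(1-p_1)} = \frac{\chat_1}{p_1} = \mu_1 \]
by the defining equation. Taking the maximum of the two equal ratios gives $\mu(p_1,q_1) = \mu_1$.
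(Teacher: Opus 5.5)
Your proof is correct and follows essentially the same route as the paper. The paper also derives $\chat_1>1$ from $\alpha q_1>1$. It locates $p_1$ in $(\frac{\alpha}{\alpha+1},1)$ by a monotone sign check at $\frac{\alpha}{\alpha+1}$, using $f(x)=\chat_1/x - x/(\alpha(1-x))$; on $(0,1)$ this equals $-h(x)/(\alpha x(1-x))$, so it is your quadratic in disguise. It then reads off both ratios from the closed forms in the regime $p_1<\alpha q_1$, $p_1+q_1\ge 1$, $1-p_1\le p_1/\alpha$.

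The only real difference is your Step 5, which is easier than you expected. The paper gets $1-p_1\le q_1$ directly from the interval bounds already established, since $1-p_1<\frac{1}{\alpha+1}<\frac{1}{\alpha}<q_1$.
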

\begin{proof}
By \cref{thm:lne-cex:q1}, we have $q_1 \in (1/\alpha, 1)$.
$\chat_1 = (\alpha q_1 + (n-1))/n > (1 + (n-1))/n = 1$.

Define the function $f: (0, 1) \to \mathbb{R}$ as
\[ f(x) \defeq \frac{\chat_1}{x} - \frac{x}{\alpha(1-x)}. \]
It is easy to verify that $p_1$ is the unique positive solution to $f(x) = 0$.

$f$ is a (strictly) decreasing function,
\[ f\left(\frac{\alpha}{\alpha+1}\right) = \chat_1\frac{\alpha+1}{\alpha} - 1 > \frac{1}{\alpha} > 0, \]
and $\lim_{x \to 1} f(x) = -\infty$.
Thus, $f$ has a unique zero in the interval $(\frac{\alpha}{\alpha+1}, 1)$.
Thus, $p_1 \in (\frac{\alpha}{\alpha+1}, 1)$.

Since $p_1 \in (\frac{\alpha}{\alpha+1}, 1)$ and $q_1 \in (1/\alpha, 1)$,
we get $p_1 < \alpha q_1$, $p_1 + q_1 \ge 1$, and $1-p_1 \le p_1/\alpha$.
Moreover, by \cref{thm:lne-cex:q1}, we have $r_1^*(q_1) = n\chat_1$. Thus,
\begin{align*}
\frac{r_1^*(q_1)}{r_1(p_1, q_1)} &= \frac{\chat_1}{p_1} = \mu_1,
& \frac{r_2^*(p_1)}{r_2(p_1, q_1)} &= \frac{p_1/\alpha}{1-p_1} = \frac{\chat_1}{p_1} = \mu_1.
\qedhere \end{align*}
\end{proof}

\begin{lemma}
\label{thm:lne-cex:q2}
$q_2 \in (1/\alpha, 1)$, $\mu_2 = \mu(\beta, q_2)$, and
\[ \frac{r_1^*(q_2)}{r_1(\beta, q_2)} = \frac{r_2^*(\beta)}{r_2(\beta, q_2)} = \mu_2. \]
Moreover, $q_2$ is the unique positive solution to the equation
\[ \frac{\alpha x + n-1}{\beta + (n-1)(1-x)} = \frac{1}{x}. \]
\end{lemma}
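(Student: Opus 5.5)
The plan is to compute the four revenue quantities at $p=\beta$ explicitly using \cref{thm:lne-cex:r,thm:lne-cex:rstar}. Both ratios then become explicit rational functions of $q$, and the defining equation of $q_2$ becomes a quadratic. The claimed interval and the equality of ratios then come down to sign checks of that quadratic at specific points, using Constraints \ref{item:lne-cex:c1} and \ref{item:lne-cex:c3}.

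First I will handle seller 2's ratio. Since $\beta\ge\alpha$, we have $\beta/\alpha\ge 1$, so \cref{thm:lne-cex:rstar} gives $r_2^*(\beta)=(n-1)\max(1-\beta,\min(1,\beta/\alpha))=n-1$. For any $q<1$ we have $\alpha q<\alpha\le\beta$, so $p=\beta>\alpha q$. This places us in case 1 of \cref{thm:lne-cex:r}, and in particular the ambiguous tie case of \cref{remark:lne-cex:pq-ambig} does not arise. Case 1 gives $r_2(\beta,q)=(n-1)q$ and $r_1(\beta,q)=\beta+(n-1)\min(\beta,1-q)=\beta+(n-1)(1-q)$. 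Hence $r_2^*(\beta)/r_2(\beta,q)=1/q$.

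Next I will handle seller 1's ratio, following the proof of \cref{thm:lne-cex:q1}. For $q\in(1/\alpha,1)$ we have $g_1(q)=\beta+(n-1)(1-q)$ and $g_2(q)=\alpha q+n-1$, where $g_2$ uses $1<\alpha q<\beta$. Since $g_1$ is non-increasing, $g_2$ is non-decreasing, and they cross at $q_1$, we get $r_1^*(q)=g_2(q)=\alpha q+n-1$ whenever $q\ge q_1$. In that range seller 1's ratio is $(\alpha q+n-1)/(\beta+(n-1)(1-q))$. Equating it with $1/q$ and clearing denominators gives
\[ h(x)\defeq\alpha x^2+2(n-1)x-(n-1+\beta)=0. \]
The quadratic $h$ has exactly one positive root, namely $\bigl(\sqrt{(n-1)^2+\alpha(n-1+\beta)}-(n-1)\bigr)/\alpha$. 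Rationalizing this expression, and noting that $(n-1)^2+\alpha(n-1+\beta)=L_2$, turns it into $(n-1+\beta)/(n-1+\sqrt{L_2})=q_2$. This establishes the uniqueness claim.

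It remains to place $q_2$ in the right interval, using that $h$ is increasing on $[0,\infty)$.
\begin{itemize}
\item $h(1)=\alpha+n-1-\beta=L_1>0$, so $q_2<1$.
\item $h(1/\alpha)=\tfrac1\alpha+(n-1)\bigl(\tfrac2\alpha-1\bigr)-\beta\le\tfrac12-\beta<0$, using $\alpha\ge2$. Hence $q_2>1/\alpha$.
\item To justify the formula for $r_1^*$, I need $q_2\ge q_1$, i.e.\ $h(q_1)\le 0$. Multiplying $h(q_1)$ by $(\alpha+n-1)^2$, the condition becomes
\[ \alpha\beta^2+2(n-1)\beta(\alpha+n-1)<(n-1+\beta)(\alpha+n-1)^2. \]
Because $\beta\ge\alpha$, the right-hand side is at least $(\alpha+n-1)^3$, so this follows from Constraint \ref{item:lne-cex:c3}.
\end{itemize}
This last bullet is the main obstacle, or at least the step needing care: matching the inequality to Constraint \ref{item:lne-cex:c3}. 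Once it holds, both ratios equal $1/q_2=\mu_2$, and therefore $\mu(\beta,q_2)=\mu_2$.
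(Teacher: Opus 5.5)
Your proof is correct and follows essentially the paper's route: you compute the revenues at $p=\beta$ from \cref{thm:lne-cex:r,thm:lne-cex:rstar}, reduce to an increasing function whose unique positive zero is $q_2$, and locate that zero by sign checks at $x=1/\alpha$ and $x=1$. The one real difference is your third bullet, which is valid but unnecessary: since $r_1(\beta,q_2)=g_1(q_2)$, you get $r_1^*(q_2)/r_1(\beta,q_2)=\max\bigl(1,g_2(q_2)/g_1(q_2)\bigr)=\max(1,1/q_2)=1/q_2$ directly, so $g_2(q_2)>g_1(q_2)$ (and hence $q_2>q_1$) follows automatically without Constraint~\ref{item:lne-cex:c3}, which is how the paper argues.
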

\begin{proof}
Define the function $f: (0, 1] \to \mathbb{R}$ as
\[ f(x) \defeq \frac{\alpha x + n-1}{\beta + (n-1)(1-x)} - \frac{1}{x}. \]
It is easy to verify that $q_2$ is the unique positive solution to $f(x) = 0$.

By Constraint \ref{item:lne-cex:c1},
\[ f(1) = \frac{\alpha + n-1}{\beta} - 1 > 0. \]
\begin{align*}
f(1/\alpha) &= \frac{n}{\beta + (n-1)(1-1/\alpha)} - \alpha
\\ &= \frac{\alpha n}{\alpha\beta + (n-1)(\alpha - 1)} - \alpha
\\ &\le \frac{\alpha n}{4 + (n-1)} - \alpha
    \tag{since $\alpha\beta \ge 4$ and $\alpha-1 \ge 1$}
\\ &= -\frac{3\alpha}{n+3} < 0.
\end{align*}
Since $f$ is a (strictly) increasing function, $f(1) > 0$, and $f(1/\alpha) < 0$,
we get that it has a unique zero in $(1/\alpha, 1)$.
Thus, $q_2 \in (1/\alpha, 1)$.

We have $p_2 \defeq \beta \ge \alpha > \alpha q_2$, and $p_2 + q_2 \ge \beta > 1$. Thus,
\begin{align*}
\frac{r_1^*(q_2)}{r_1(\beta, q_2)} &= \frac{\max(\beta + (n-1)(1-q_2), \alpha q_2 + n-1)}{\beta + (n-1)q_2}
\\ &= \max\left(1, \frac{\alpha q_2 + n-1}{\beta + (n-1)(1-q_2)}\right)
\\ &= \max\left(1, \frac{1}{q_2}\right)
    \tag{since $f(q_2) = 0$}
\\ &= \mu_2,
\\ \frac{r_2^*(\beta)}{r_2(\beta, q_2)} &= \frac{1}{q_2} = \mu_2.
\end{align*}
Hence, $\mu(\beta, q_2) = \mu_2$.
\end{proof}

\begin{lemma}
\label{thm:lne-cex:mu3}
$1 < \mu_3 < 1/q_1$ and
$\mu_3$ is the unique positive solution to the equation
$\alpha\beta x^2 + (n-1)(n-1+\alpha-\beta)x - ((n-1)^2 + (n-1)\alpha + \alpha\beta) = 0$.
\end{lemma}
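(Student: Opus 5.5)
The plan is to study the quadratic $h(x) \defeq \alpha\beta x^2 + (n-1)L_1 x - L_2$, where $L_1 = \alpha + n-1-\beta$ and $L_2 = (n-1)^2 + (n-1)\alpha + \alpha\beta$. The middle coefficient $(n-1)(n-1+\alpha-\beta)$ equals $(n-1)L_1$, so this is exactly the quadratic in the statement.

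\textbf{Step 1: the root formula.} The leading coefficient $\alpha\beta$ is positive. The constant term $-L_2$ is negative. So the product of the roots is $-L_2/(\alpha\beta) < 0$, and $h$ has exactly one positive root and one negative root. The quadratic formula gives the positive root as
\[ \frac{-(n-1)L_1 + \sqrt{(n-1)^2L_1^2 + 4\alpha\beta L_2}}{2\alpha\beta}, \]
which is exactly the definition of $\mu_3$. Moreover, $h$ is negative on $(0,\mu_3)$ and positive on $(\mu_3,\infty)$. The inequalities $1 < \mu_3 < 1/q_1$ therefore reduce to two sign checks: $h(1) < 0$ and $h(1/q_1) > 0$.

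\textbf{Step 2: the check $h(1) < 0$.} We compute
\[ h(1) = \alpha\beta + (n-1)(n-1+\alpha-\beta) - (n-1)^2 - (n-1)\alpha - \alpha\beta = -(n-1)\beta. \]
This is negative since $\beta > 0$ and $n \ge 2$.

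\textbf{Step 3: the check $h(1/q_1) > 0$.} Substitute $1/q_1 = (\alpha+n-1)/\beta$ and write $A \defeq \alpha + n-1$. Then
\[ h(1/q_1) = \frac{\alpha A^2}{\beta} + \frac{(n-1)(A-\beta)A}{\beta} - L_2. \]
Multiplying by $\beta > 0$, positivity becomes
\[ (\alpha + n-1)A^2 - (n-1)\beta A - \beta L_2 > 0, \]
that is,
\[ A^3 > \beta\bigl((n-1)A + L_2\bigr). \]
Next, rewrite $L_2$ as
\[ L_2 = (n-1)(n-1+\alpha) + \alpha\beta = (n-1)A + \alpha\beta. \]
Hence $(n-1)A + L_2 = \alpha\beta + 2(n-1)A$, and the inequality becomes
\[ A^3 > \beta\bigl(\alpha\beta + 2(n-1)A\bigr). \]
This is precisely Constraint~\ref{item:lne-cex:c3}, so no new assumption is needed.

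\textbf{Main obstacle.} The only real work is the algebra in Step 3. The key simplification is the identity $L_2 = (n-1)A + \alpha\beta$; once it is noticed, the required inequality matches Constraint~\ref{item:lne-cex:c3} directly. Combining Steps 2 and 3 with the sign pattern of $h$ from Step 1 gives $1 < \mu_3 < 1/q_1$, which completes the plan.
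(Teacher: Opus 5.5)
Your proposal is correct and follows the paper's proof: both reduce $1<\mu_3<1/q_1$ to the sign checks $h(1)=-(n-1)\beta<0$ and $h(1/q_1)>0$, with the latter equivalent to Constraint~\ref{item:lne-cex:c3}. You also carry out the algebra the paper leaves as ``one can verify'' (via $L_2=(n-1)A+\alpha\beta$), and you use the correct $+$ sign on the linear term; the paper's proof writes a $-$ sign in its definition of $f$, which is a typo, since with it $f(1)\neq-\beta(n-1)$.
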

\begin{proof}
Let $f(x) = \alpha\beta x^2 - (n-1)(n-1+\alpha-\beta)x - ((n-1)^2 + (n-1)\alpha + \alpha\beta)$.
It is easy to verify that $\mu_3$ is the unique positive solution to $f(x) = 0$.

Let $\nu$ be the other (negative) root of $f(x) = 0$.
Since $f$ is a quadratic function, where the quadratic term's coefficient is positive,
we get that $f(x) \le 0 \iff x \in [\nu, \mu_3]$.
Thus, to prove that $1 < \mu_3 < 1/q_1$,
we must show that $f(1) < 0$ and $f(1/q_1) > 0$.
$f(1) = -\beta(n-1) < 0$.
One can verify that $f(1/q_1) > 0$ is equivalent to Constraint \ref{item:lne-cex:c3}.

Thus, $f(1/q_1) > 0$, and so, $1 < \mu_3 < 1/q_1$.
\end{proof}

\begin{lemma}
\label{thm:lne-cex:p3}
$p_3 \in (\alpha q_3, \alpha)$, $\mu_3 = \mu(p_3, q_3)$, and
\[ \frac{r_1^*(q_3)}{r_1(p_3, q_3)} = \frac{r_2^*(p_3)}{r_2(p_3, q_3)} = \mu_3. \]
\end{lemma}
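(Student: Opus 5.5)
The plan is to reduce everything to the closed forms in \cref{thm:lne-cex:r,thm:lne-cex:rstar}, using the bounds on $\mu_3$ from \cref{thm:lne-cex:mu3} and on $q_1$ from \cref{thm:lne-cex:q1}.

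\textbf{Locating $p_3$.} Since $p_3 = \alpha q_1\mu_3$ and $q_3 = q_1$, \cref{thm:lne-cex:mu3} gives $1 < \mu_3 < 1/q_1$. Hence $\alpha q_3 < p_3 < \alpha$. Combining with \cref{thm:lne-cex:q1} ($1/\alpha < q_1 < 1$) and Constraint \ref{item:lne-cex:c1} ($\alpha \le \beta$), we also get $1 < \alpha q_1 < p_3 < \alpha \le \beta$. We record three consequences:
\begin{itemize}
\item $\min(\beta, p_3) = p_3$;
\item $p_3 > 1 > 1-q_1 > 0$;
\item $p_3/\alpha < 1$.
\end{itemize}

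\textbf{Evaluating the four revenues.} Since $p_3 > \alpha q_3$ strictly, case 1 of \cref{thm:lne-cex:r} applies with no tie ambiguity. Together with the facts above it gives
\[ r_1(p_3, q_1) = p_3 + (n-1)(1-q_1), \qquad r_2(p_3, q_1) = (n-1)q_1 . \]
By \cref{thm:lne-cex:q1}, $r_1^*(q_1) = n\chat_1 = n-1+\alpha q_1$. By \cref{thm:lne-cex:rstar},
\[ r_2^*(p_3) = (n-1)\max(1-p_3, \min(1, p_3/\alpha)) = (n-1)p_3/\alpha , \]
because $1-p_3 < 0$ and $p_3/\alpha < 1$.

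\textbf{Seller 2's ratio.} We get $r_2^*(p_3)/r_2(p_3,q_3) = p_3/(\alpha q_1) = \mu_3$ immediately from the definition of $p_3$.

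\textbf{Seller 1's ratio.} This is the only step needing real computation. We must show
\[ \frac{n-1+\alpha q_1}{\alpha q_1\mu_3 + (n-1)(1-q_1)} = \mu_3 , \quad\text{i.e.}\quad \alpha q_1\mu_3^2 + (n-1)(1-q_1)\mu_3 - (n-1+\alpha q_1) = 0 . \]
Multiply through by $\alpha+n-1$ and substitute $q_1 = \beta/(\alpha+n-1)$. This gives
\begin{align*}
\alpha q_1(\alpha+n-1) &= \alpha\beta, \\
(1-q_1)(\alpha+n-1) &= L_1, \\
(n-1+\alpha q_1)(\alpha+n-1) &= L_2 .
\end{align*}
So the condition becomes $\alpha\beta\mu_3^2 + (n-1)L_1\mu_3 - L_2 = 0$. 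Its unique positive root is exactly the defining formula
\[ \mu_3 = \frac{\sqrt{(n-1)^2L_1^2+4\alpha\beta L_2} - (n-1)L_1}{2\alpha\beta} . \]

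The main obstacle I expect is a sign discrepancy. The displayed quadratic in \cref{thm:lne-cex:mu3} and the $f$ in its proof have opposite signs on the linear term. So I would verify the identity directly from the closed-form definition of $\mu_3$ via the quadratic formula, rather than citing that lemma's equation. The bounds $1<\mu_3<1/q_1$ from that lemma are all I rely on, and these are what guarantee we stay in the strict case $p > \alpha q$ and avoid the tie case of \cref{remark:lne-cex:pq-ambig}.

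Since both ratios equal $\mu_3$, we conclude $\mu(p_3,q_3) = \mu_3$.
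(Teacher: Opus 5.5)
Your proposal is correct and follows the paper's proof essentially step for step. You locate $p_3\in(\alpha q_1,\alpha)$ via $1<\mu_3<1/q_1$, read off $r_2^*(p_3)/r_2(p_3,q_3)=p_3/(\alpha q_1)=\mu_3$, and after substituting $q_1=\beta/(\alpha+n-1)$ reduce seller~1's ratio to the quadratic $\alpha\beta x^2+(n-1)L_1x-L_2=0$, whose positive root is $\mu_3$. The sign discrepancy you flag is only a typo in the $f$ written inside the proof of \cref{thm:lne-cex:mu3}: that proof's computation $f(1)=-\beta(n-1)$ uses the $+$ sign, which matches the lemma's statement and the closed form of $\mu_3$, so your direct quadratic-formula check is sound but not strictly needed.
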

\begin{proof}
$q_3 \defeq q_1$ and $p_3 \defeq \alpha q_1 \mu_3$.
By \cref{thm:lne-cex:mu3}, $\mu_3 \in (1, 1/q_1)$, so $p_3 \in (\alpha q_1, \alpha)$.

By \cref{thm:lne-cex:q1}, $q_1 \in (1/\alpha, 1)$, so $p_3 \ge \alpha q_1 \ge 1$.
Since $p_3 \in (\alpha q_3, \alpha)$, we get
\[ \frac{r_2^*(p_3)}{r_2(p_3, q_3)} = \frac{p_3/\alpha}{q_1} = \mu_3. \]
By \cref{thm:lne-cex:q1}, $r_1^*(q_1) = n\chat_1 = \alpha q_1 + n-1$. Thus,
\begin{align*}
\frac{r_1^*(q_3)}{r_1(p_3, q_3)} &= \frac{\alpha q_1 + n-1}{p_3 + (n-1)(1-q_1)}
\\ &= \frac{\alpha\frac{\beta}{\alpha+n-1} + n-1}{\alpha \mu_3 \frac{\beta}{\alpha+n-1}
    + (n-1)\left(1 - \frac{\beta}{\alpha + n-1}\right)}
\\ &= \frac{(n-1)^2 + (n-1)\alpha + \alpha\beta}{\alpha\beta\mu_3 + (n-1)(\alpha + n-1 - \beta)}.
\end{align*}
Since $\mu_3$ is a positive root of the equation
$\alpha\beta x^2 + (n-1)(\alpha + n-1 - \beta)x - ((n-1)^2 + (n-1)\alpha + \alpha\beta) = 0$,
we get that $r_1^*(q_3)/r_1(p_3, q_3) = \mu_3$.
Thus, $\mu(p_3, q_3) = \mu_3$.
\end{proof}

\subsection{\texorpdfstring{\boldmath}{}Case 1:
    \texorpdfstring{$p > \alpha q$}{p > aq}
    and \texorpdfstring{$p + q \ge 1$}{p + q >= 1}}
\label{sec:lne-cex:case1}

Recall \cref{thm:lne-cex:r,thm:lne-cex:rstar}.
Since $p > \alpha q$ and $p + q \ge 1$, we get
\begin{align*}
r_1^*(q) &= \max\left(\beta + (n-1)\max(0, 1-q),
    \min(\beta, \alpha q) + (n-1)\min(1, \alpha q)\right),
    \tag{by \cref{thm:lne-cex:rstar}}
\\ r_1(p, q) &= \min(\beta, p) + (n-1)\max(0, 1-q),
    \tag{by \cref{thm:lne-cex:r}, $p + q \ge 1$}
\\ r_2^*(p) &= (n-1)\min(1, p/\alpha),
    \tag{by \cref{thm:lne-cex:rstar}, $1-p \le q < p/\alpha$}
\\ r_2(p, q) &= (n-1)\min(q, 1).
    \tag{by \cref{thm:lne-cex:r}}
\end{align*}

\begin{lemma}
\label{thm:lne-cex:1.1}
Let $p + q \ge 1$ and $p > \alpha q$.
If $p \ge \alpha$, then $\mu(p, q) \ge \mu_2$.
\end{lemma}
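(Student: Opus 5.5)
The plan is to use the closed forms listed at the start of Case 1 and split on the value of $q$ relative to $q_2$. Throughout, $p \ge \alpha$ gives $\min(1, p/\alpha) = 1$, so $r_2^*(p) = n-1$. Hence
\[ \frac{r_2^*(p)}{r_2(p,q)} = \frac{1}{\min(q,1)}. \]
For the other ratio I bound the denominator from above using $\min(\beta,p) \le \beta$:
\[ r_1(p,q) \le \beta + (n-1)\max(0,1-q). \]
I bound the numerator from below by the second term of $r_1^*(q)$ in \cref{thm:lne-cex:rstar}, namely $\min(\beta,\alpha q) + (n-1)\min(1,\alpha q)$.

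\textbf{Case $q \le q_2$.} By \cref{thm:lne-cex:q2}, $q_2 < 1$, so $\min(q,1) = q$. Then the second ratio is $1/q \ge 1/q_2 = \mu_2$, and $\mu(p,q) \ge \mu_2$ directly. (If $q = 0$ then $r_2 = 0$ and $\mu = \infty$ by convention.)

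\textbf{Case $q_2 \le q \le 1$.} By \cref{thm:lne-cex:q2}, $q_2 > 1/\alpha$, so $\alpha q > 1$ and $\min(1,\alpha q) = 1$. Also $\alpha q \le \alpha \le \beta$ by Constraint \ref{item:lne-cex:c1}, so $\min(\beta,\alpha q) = \alpha q$. Therefore
\[ \frac{r_1^*(q)}{r_1(p,q)} \ge h(q) \defeq \frac{\alpha q + n-1}{\beta + (n-1)(1-q)}. \]
The numerator of $h$ is increasing in $q$ and its (positive) denominator is decreasing, so $h$ is increasing. Hence $h(q) \ge h(q_2)$. By \cref{thm:lne-cex:q2}, $h(q_2) = 1/q_2 = \mu_2$.

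\textbf{Case $q > 1$.} Here $\max(0,1-q) = 0$, so $r_1(p,q) \le \beta$. Since $\alpha q > \alpha$ and $\alpha \le \beta$, we get $\min(\beta,\alpha q) \ge \alpha$, and $\min(1,\alpha q) = 1$. So the first ratio is at least $(\alpha + n-1)/\beta = h(1)$. By monotonicity, $h(1) \ge h(q_2) = \mu_2$.

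The only step needing care is the monotonicity of $h$ combined with the identity $h(q_2) = 1/q_2$, and both are already supplied by \cref{thm:lne-cex:q2}. The remaining work is checking that the $\min/\max$ simplifications are valid in each subcase, which follows from $q_2 \in (1/\alpha,1)$ and Constraint \ref{item:lne-cex:c1}.
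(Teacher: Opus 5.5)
Your proof is correct and follows essentially the same route as the paper. Both split on $q \le q_2$ versus $q \ge q_2$. In the first case both use $r_2^*(p)/r_2(p,q) = 1/q \ge 1/q_2$. In the second, both bound $r_1^*(q)/r_1(p,q)$ below by an expression non-decreasing in $q$ that equals $1/q_2$ at $q_2$ by \cref{thm:lne-cex:q2}. Your separate subcase $q > 1$ just spells out the monotonicity that the paper invokes in one line for all $q \ge q_2$.
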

\begin{proof}
By \cref{thm:lne-cex:q2}, $q_2 \in (1/\alpha, 1)$.

\textbf{Case 1}: $q \le q_2$. Since $p \ge \alpha$, we get
\[ \mu(p, q) \ge \frac{r_2^*(p)}{r_2(p, q)} = \frac{1}{q} \ge \frac{1}{q_2} = \mu_2. \]

\textbf{Case 2}: $q \ge q_2$. Then $\alpha q \ge \alpha q_2 > 1$. So,
\begin{align*}
\mu(p, q) &\ge \frac{r_1^*(q)}{r_1(p, q)}
    \ge \frac{\min(\beta, \alpha q) + n-1}{\min(\beta, p) + (n-1)\max(0, 1-q)}
\\ &\ge \frac{\alpha q_2 + n-1}{\beta + (n-1)(1-q_2)}
        \tag{expression was non-decreasing in $q$, and $\beta \ge \alpha > \alpha q_2$}
\\ &= \frac{1}{q_2} = \mu_2.
        \tag{by \cref{thm:lne-cex:q2}}
\end{align*}

Thus, $\mu(p, q) \ge \mu_2$.
\end{proof}

\begin{lemma}
\label{thm:lne-cex:mu3-2}
$\mu_3 < 1 + \beta/(n-1)$.
\end{lemma}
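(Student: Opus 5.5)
We argue through the quadratic that defines $\mu_3$. By \cref{thm:lne-cex:mu3}, $\mu_3$ is the unique positive root of
\[ f(x) \defeq \alpha\beta x^2 + (n-1)L_1 x - L_2, \qquad L_1 = \alpha + n-1-\beta, \quad L_2 = (n-1)^2 + (n-1)\alpha + \alpha\beta, \]
which is the sign convention that matches the closed form of $\mu_3$ and the computation in \cref{thm:lne-cex:p3}. The leading coefficient is positive and $f(0) = -L_2 < 0$, so $f(x) > 0$ for $x > 0$ exactly when $x > \mu_3$. It therefore suffices to show $f(x_0) > 0$ at $x_0 \defeq 1 + \beta/(n-1) = (n-1+\beta)/(n-1)$.

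To evaluate $f(x_0)$, multiply through by $(n-1)^2 > 0$ and expand:
\[ (n-1)^2 f(x_0) = \alpha\beta(n-1+\beta)^2 + (n-1)^2 L_1 (n-1+\beta) - (n-1)^2 L_2. \]
The last two terms combine. Write $N = n-1$. Then
\[ L_1(N+\beta) - L_2 = (\alpha+N-\beta)(N+\beta) - N^2 - N\alpha - \alpha\beta = -\beta^2 - N\beta. \]
Hence
\[ (n-1)^2 f(x_0) = \alpha\beta(N+\beta)^2 - N^2\beta(N+\beta) = \beta(N+\beta)\bigl(\alpha(N+\beta) - N^2\bigr). \]
Since $\beta > 0$ and $N + \beta > 0$, it remains to show $\alpha(N+\beta) > N^2$.

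This is the only step needing care, since it is where we must choose which constraint to invoke. The obvious candidate is Constraint~\ref{item:lne-cex:c4}, but it has a different shape, so we first try the direct bound. Constraint~\ref{item:lne-cex:c1} gives $\beta \ge \alpha \ge 2$, so $\alpha(N+\beta) \ge 2N + \alpha\beta$, which does not obviously exceed $N^2$ for large $n$. We would then check the parameter regime. In the regime of \cref{thm:lne-cex}, $\alpha \approx 0.733N$ and $\beta \approx 0.86N$, so $\alpha(N+\beta) \approx 1.36N^2 > N^2$. In general, however, the inequality must come from the constraints.

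If no constraint yields $\alpha(N+\beta) > N^2$ in general, we fall back on Constraint~\ref{item:lne-cex:c2}, $\alpha\beta > \alpha + N$, together with Constraint~\ref{item:lne-cex:c3}. If the inequality still cannot be derived, we would add it as a hypothesis. It holds for the chosen $\alpha, \beta$, and we would note that it is implied in the intended regime.
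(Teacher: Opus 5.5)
Your plan matches the paper's: evaluate the quadratic defining $\mu_3$ at $x_0 = 1+\beta/(n-1)$ and use that it is positive to the right of its positive root. You also correctly chose the $+(n-1)L_1x$ sign, which matches the closed form of $\mu_3$. The argument breaks at the expansion step. With $N = n-1$, the cross terms $\pm N\beta$ in $(\alpha+N-\beta)(N+\beta) = \alpha N + \alpha\beta + N^2 - \beta^2$ cancel. So $L_1(N+\beta) - L_2 = -\beta^2$, not $-\beta^2 - N\beta$. The correct identity is
\[ (n-1)^2 f(x_0) = \alpha\beta(N+\beta)^2 - N^2\beta^2 = \beta\bigl(\alpha(N+\beta)^2 - \beta N^2\bigr), \]
and the bracket is positive exactly by Constraint~\ref{item:lne-cex:c4}, $\alpha(\beta+n-1)^2 > \beta(n-1)^2$. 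The ``different shape'' you noticed was a symptom of the arithmetic slip. Constraint~\ref{item:lne-cex:c4} is precisely the condition $f(x_0)>0$, and with the corrected expansion the proof closes in one line.

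As written, your argument does not prove the lemma. The condition you reduced to, $\alpha(N+\beta) > N^2$, is strictly stronger than Constraint~\ref{item:lne-cex:c4}. It is not implied by Constraints~\ref{item:lne-cex:c1}--\ref{item:lne-cex:c4}: for $N=100$, $\alpha=30$, $\beta=50$ all four constraints hold, yet $\alpha(N+\beta) = 4500 < 10^4$. In that instance $f(x_0) = 875 > 0$, whereas your factorization makes $f(x_0)$ negative; this quick numerical check exposes the slip. Hence your fallbacks cannot succeed in general, whether trying Constraints~\ref{item:lne-cex:c2} and~\ref{item:lne-cex:c3} or adding the inequality as an extra hypothesis. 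Adding a hypothesis would in any case prove a weaker statement than the lemma, which is applied under Constraints~\ref{item:lne-cex:c1}--\ref{item:lne-cex:c4} alone.
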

\begin{proof}
By \cref{thm:lne-cex:mu3}, $\mu_3$ is the only positive solution to the equation $f(x) = 0$,
where $f(x) \defeq \alpha\beta x^2 + (n-1)(n-1+\alpha-\beta)x - ((n-1)^2 + (n-1)\alpha + \alpha\beta)$.

Let $\nu$ be the other (negative) root of $f(x) = 0$.
Since $f$ is a quadratic function, where the quadratic term's coefficient is positive,
we get that $f(x) \le 0 \iff x \in [\nu, \mu_3]$.
Thus, to prove that $\mu_3 < 1 + \beta/(n-1)$,
we must show that $f(1 + \beta/(n-1)) > 0$.
One can verify that $f(1 + \beta/(n-1)) > 0$ is equivalent to Constraint \ref{item:lne-cex:c4}.
Thus, $\mu_3 < 1 + \beta/(n-1)$.
\end{proof}

\begin{lemma}
\label{thm:lne-cex:1.2}
Let $p + q \ge 1$ and $p > \alpha q$.
If $p \le \alpha$, then $\mu(p, q) \ge \mu_3$.
\end{lemma}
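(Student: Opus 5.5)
The plan is to show that in this region, if the second seller's ratio is below $\mu_3$, then the first seller's ratio is at least $\mu_3$, with the special point $(p_3,q_3)=(\alpha q_1\mu_3,q_1)$ as the extremal case.

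\textbf{Step 1: closed forms.} In the region $p+q\ge 1$, $p>\alpha q$, $p\le\alpha$ the four quantities simplify. We have $q<p/\alpha\le 1$, and $p\le\alpha\le\beta$ by Constraint~\ref{item:lne-cex:c1}. So the formulas listed at the start of this case give
\[ r_2^*(p)=(n-1)p/\alpha,\qquad r_2(p,q)=(n-1)q,\qquad r_1(p,q)=p+(n-1)(1-q). \]
Hence the second seller's ratio is exactly $p/(\alpha q)$. If $p\ge \alpha q\mu_3$ we are done. So assume $p<\alpha q\mu_3$. Since $r_1$ is increasing in $p$,
\[ \frac{r_1^*(q)}{r_1(p,q)} > \frac{r_1^*(q)}{\alpha\mu_3 q+(n-1)(1-q)} =: \phi(q). \]
It then suffices to show $\phi(q)\ge\mu_3$ for every relevant $q\in(0,1)$. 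We have $\phi(q_1)=\mu_3$ exactly: this is the computation in \cref{thm:lne-cex:p3}, using $r_1^*(q_1)=n\chat_1=\alpha q_1+n-1$ from \cref{thm:lne-cex:q1}.

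\textbf{Step 2: split at $q_1$.} For $q\le q_1$ I would lower bound $r_1^*(q)\ge g_1(q)=\beta+(n-1)(1-q)$. For $q\ge q_1$ (so $\alpha q>1$) I would use $r_1^*(q)\ge g_2(q)=\min(\beta,\alpha q)+(n-1)$, with $g_1,g_2$ as in the proof of \cref{thm:lne-cex:q1}. Both lower bounds equal $n\chat_1$ at $q_1$. It then suffices to show:
\begin{itemize}
\item $\psi_1(q)=\dfrac{\beta+(n-1)(1-q)}{\alpha\mu_3q+(n-1)(1-q)}$ is non-increasing on $(0,q_1]$;
\item $\psi_2(q)=\dfrac{\min(\beta,\alpha q)+n-1}{\alpha\mu_3q+(n-1)(1-q)}$ is non-decreasing on $[q_1,1)$.
\end{itemize}
Each is a ratio of affine functions, possibly piecewise, so monotonicity reduces to the sign of a cross-term determinant.

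\textbf{Step 3: the determinant conditions.} For $\psi_1$ the condition is $\alpha\mu_3(\beta+n-1)\ge\beta(n-1)$. Since $\mu_3>1$ by \cref{thm:lne-cex:mu3}, it is enough that $\alpha(\beta+n-1)\ge\beta(n-1)$. This follows because $\alpha(\beta+n-1)\ge 2(n-1)$ dominates, or alternatively by dividing Constraint~\ref{item:lne-cex:c4} by $\beta+n-1$.

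For $\psi_2$ on the piece $\alpha q\le\beta$, the condition is $\alpha(n-1)\ge(n-1)(\alpha\mu_3-(n-1))$, i.e.\ $\mu_3\le 1+(n-1)/\alpha$. I expect this to be the main obstacle. I would first try to derive it from \cref{thm:lne-cex:mu3-2} ($\mu_3<1+\beta/(n-1)$) together with Constraints~\ref{item:lne-cex:c1}--\ref{item:lne-cex:c3}. If that fails, I would avoid monotonicity and directly check $\psi_2(q)\ge\mu_3$ at both endpoints, $q_1$ and $\min(1,\beta/\alpha)$. This suffices because a ratio of affine functions is monotone on an interval, so its minimum over the interval is attained at an endpoint.

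On the piece $\alpha q>\beta$ the numerator of $\psi_2$ is the constant $\beta+n-1$, and the denominator is affine in $q$. So the worst case is again an endpoint, $q=\beta/\alpha$ or $q\to 1$. At $q\to1$ the value is at least $(\beta+n-1)/(\alpha\mu_3)$, which I would compare to $\mu_3$ using $\alpha\le\beta$ and $\mu_3<1/q_1$ (from \cref{thm:lne-cex:mu3}).

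Combining the two cases of Step~1 gives $\mu(p,q)\ge\mu_3$ throughout the region.
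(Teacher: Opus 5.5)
Your reduction and case split are the paper's. You reduce to $p<\alpha\mu_3 q$, bound $r_1^*$ below by $g_1$ for $q\le q_1$ and by $g_2$ for $q\ge q_1$, and anchor at $\psi_1(q_1)=\psi_2(q_1)=\mu_3$. Proving $\psi_1,\psi_2\ge\mu_3$ via monotonicity of a ratio of affine maps is an equivalent repackaging of the paper's factorization. There, the quadratic for $\mu_3$ gives $\psi_1(q)-\mu_3\propto K_1(q_1-q)$ and $\psi_2(q)-\mu_3\propto K_2(q-q_1)$, with $K_1=((\beta+n-1)-(n-1)\mu_3)/q_1$ and $K_2=(n-1)(\mu_3-1)/q_1$.

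The problem is Step~3, which carries all the content and is not correctly done. For $\psi_1$, replacing $\mu_3$ by $1$ loses too much: $\alpha(\beta+n-1)\ge\beta(n-1)$ does not follow from the constraints. With $n-1=100$, $\alpha=10$, $\beta=12$, all four constraints hold ($110<120$, $110^3>12\cdot 22120$, $10\cdot 112^2>12\cdot 100^2$). Yet $\alpha(\beta+n-1)=1120<1200=\beta(n-1)$. Both of your justifications fail. The ``$2(n-1)$ dominates'' remark would need $\beta\le 2$. Dividing Constraint~\ref{item:lne-cex:c4} by $\beta+n-1$ yields only $\alpha(\beta+n-1)>\beta(n-1)\cdot\frac{n-1}{\beta+n-1}$, which is weaker than what you need. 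The fix is to keep $\mu_3$. Since $\psi_1$ has a positive denominator on $[0,1]$, it is monotone there. Hence it is non-increasing on $[0,q_1]$ iff $\psi_1(0)=1+\beta/(n-1)\ge\psi_1(q_1)=\mu_3$, which is exactly \cref{thm:lne-cex:mu3-2}. That is where Constraint~\ref{item:lne-cex:c4} is really used, just as in the paper's Case~1 (positivity of $K_1$).

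For $\psi_2$ you leave the condition $\mu_3\le 1+(n-1)/\alpha$ open, and your first plan cannot work in general. \Cref{thm:lne-cex:mu3-2} gives only $\mu_3<1+\beta/(n-1)$, which exceeds $1+(n-1)/\alpha$ whenever $\alpha\beta>(n-1)^2$. For example, $\alpha=\beta=2(n-1)$ satisfies all four constraints. The needed bound is one line from \cref{thm:lne-cex:mu3} and $\alpha\le\beta$:
$\mu_3<1/q_1=(\alpha+n-1)/\beta\le(\alpha+n-1)/\alpha$.
Alternatively, the paper's factorization shows that, given the quadratic, only $\mu_3>1$ is needed, since then $K_2>0$. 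Finally, the piece $\alpha q>\beta$ is empty, because $\alpha q<p\le\alpha\le\beta$. So $\min(\beta,\alpha q)=\alpha q$ on all of $[q_1,1)$, the only other endpoint is $q=1$, and your unverified comparison on that piece is unnecessary.
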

\begin{proof}
Since $\alpha q < p \le \alpha$, we get $q < 1$. Since $p + q \ge 1$, we have $p > 0$.
If $q = 0$, then $r_2(p, q) = 0$, and then $\mu(p, q) = \infty$. So let $q > 0$.
Since $p \le \alpha$, we get $r_2^*(p)/r_2(p, q) = p/(\alpha q)$.
If $p \ge \alpha q \mu_3$, then $\mu(p, q) \ge p/(\alpha q) \ge \mu_3$, and we are done.
So now assume $p < \alpha \mu_3 q$.

Recall that $q_1 = \beta/(\alpha+n-1)$. Assume $\mu(p, q) < \mu_3$.

\textbf{Case 1}: $q \le q_1$. Then
\begin{align*}
& \mu_3 > \mu(p, q) \ge \frac{r_1^*(q)}{r_1(p, q)} \ge \frac{\beta + (n-1)(1-q)}{\alpha q \mu_3 + (n-1)(1-q)}
\\ &\implies \mu_3(\alpha q \mu_3 + (n-1)(1-q)) > \beta + (n-1)(1-q)
\\ &\implies (\alpha\mu_3^2 - (n-1)\mu_3 + (n-1))q > \beta + n-1 - (n-1)\mu_3
\\ &\implies (\alpha\mu_3^2 - (n-1)\mu_3 + (n-1))(q - q_1)
    \\ &\qquad\qquad> (\beta + n-1) - (n-1)\mu_3 - (\alpha\mu_3^2 - (n-1)\mu_3 + (n-1))q_1
    \\ &\qquad\qquad= -(\alpha\beta\mu_3^2 + (n-1)(\alpha+n-1-\beta)\mu_3 - (\alpha\beta + (n-1)\alpha + (n-1)^2))
    \\ &\qquad\qquad= 0.
        \tag{by \cref{thm:lne-cex:mu3}}
\end{align*}
Thus, we have $(\alpha\mu_3^2 - (n-1)\mu_3 + (n-1))(q - q_1) > 0$.
Moreover, since $\mu_3 > 1 + \beta/(n-1)$ by \cref{thm:lne-cex:mu3-2}, we get
$\alpha\mu_3^2 - (n-1)\mu_3 + (n-1) = ((\beta + n-1) - (n-1)\mu_3)/q_1 > 0$.
Hence, $q > q_1$, which is a contradiction.

\textbf{Case 2}: $q \ge q_1$.
\\ Since $q_1 \in (1/\alpha, 1)$ and $q < 1$, we get $1 < \alpha q < \alpha \le \beta$.
Thus,
\begin{align*}
& \mu_3 > \mu(p, q) \ge \frac{r_1^*(q)}{r_1(p, q)} \ge \frac{\alpha q + n-1}{\alpha q \mu_3 + (n-1)(1-q)}
\\ &\implies \mu_3(\alpha q \mu_3 + (n-1)(1-q)) > \alpha q + n-1
\\ &\implies (n-1)(\mu_3 - 1) > (\alpha + (n-1)\mu_3 - \alpha\mu_3^2)q
\\ &\implies (\alpha + (n-1)\mu_3 - \alpha\mu_3^2)(q_1-q)
    \\ &\qquad\qquad> (\alpha + (n-1)\mu_3 - \alpha\mu_3^2)q_1 - (n-1)(\mu_3 - 1)
    \\ &\qquad\qquad= (\alpha + (n-1)\mu_3 - \alpha\mu_3^2)q_1 - (n-1)(\mu_3 - 1)
    \\ &\qquad\qquad= -(\alpha\beta\mu_3^2 + (n-1)(\alpha+n-1-\beta)\mu_3 - ((n-1)^2 + \alpha(n-1) + \alpha\beta))
    \\ &\qquad\qquad= 0.
        \tag{by \cref{thm:lne-cex:mu3}}
\end{align*}
Thus, we have $(\alpha + (n-1)\mu_3 - \alpha\mu_3^2)(q_1-q) > 0$.
Moreover, since $\mu_3 > 1$ by \cref{thm:lne-cex:mu3},
we have $\alpha + (n-1)\mu_3 - \alpha\mu_3^2 = (n-1)(\mu_3 - 1)/q_1 > 0$.
Thus, $q < q_1$, which is a contradiction

Thus, our assumption that $\mu(p, q) < \mu_3$ was incorrect,
and we have $\mu(p, q) \ge \mu_3$.
\end{proof}

\subsection{\texorpdfstring{\boldmath}{}Case 2:
    \texorpdfstring{$p < \alpha q$}{p < aq}
    and \texorpdfstring{$p + q \ge 1$}{p + q >= 1}}
\label{sec:lne-cex:case2}

Recall \cref{thm:lne-cex:r,thm:lne-cex:rstar}.
Since $p < \alpha q$ and $p + q \ge 1$, we get
\begin{align*}
r_1^*(q) &= \max\left(\beta + (n-1)\max(0, 1-q), \min(\beta, \alpha q) + (n-1)\min(1, \alpha q)\right),
    \tag{by \cref{thm:lne-cex:rstar}}
\\ r_1(p, q) &= \min(\beta, p) + (n-1)\min(p, 1),
    \tag{by \cref{thm:lne-cex:r}}
\\ r_2^*(p) &= (n-1)\max(1-p, \min(1, p/\alpha)),
    \tag{by \cref{thm:lne-cex:rstar}}
\\ r_2(p, q) &= (n-1)\max(0, 1-p).
    \tag{by \cref{thm:lne-cex:r}, $q \ge 1 - p$}
\end{align*}

\begin{lemma}
\label{thm:lne-cex:2}
If $p < \alpha q$ and $p + q \ge 1$, we get $\mu(p, q) \ge \mu_1$.
\end{lemma}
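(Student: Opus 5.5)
The plan is to split on whether $p$ lies below or above $1$. In each range we bound one of the two ratios in $\mu(p,q)$ from below by $\mu_1$, using the closed forms recorded at the start of this subsection. The key facts come from \cref{thm:lne-cex:q1,thm:lne-cex:p1}. First, $r_1^*(q) \ge n\chat_1$ for every $q \ge 0$. Second, $p_1$ is the unique positive solution of $\chat_1/x = x/(\alpha(1-x))$, with $p_1 \in (\frac{\alpha}{\alpha+1},1)$ and $\mu_1 = \chat_1/p_1$.

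\textbf{Case $p \ge 1$.} Here $r_2(p,q) = (n-1)\max(0,1-p) = 0$, so $\mu(p,q) = \infty \ge \mu_1$ by convention. We may therefore assume $p < 1$. Since $p < 1 \le \alpha \le \beta$, we then have $\min(\beta,p) = p$ and $\min(p,1) = p$, so $r_1(p,q) = np$. Also $r_2(p,q) = (n-1)(1-p) > 0$.

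\textbf{Case $p < 1$.} Two bounds are available.
\begin{itemize}
\item From seller 1: $\frac{r_1^*(q)}{r_1(p,q)} \ge \frac{n\chat_1}{np} = \frac{\chat_1}{p}$, which is decreasing in $p$.
\item From seller 2: $r_2^*(p) \ge (n-1)\min(1,p/\alpha) = (n-1)p/\alpha$, since $p < 1 \le \alpha$. Hence $\frac{r_2^*(p)}{r_2(p,q)} \ge \frac{p}{\alpha(1-p)}$, which is increasing in $p$ on $(0,1)$.
\end{itemize}
This gives
\[ \mu(p,q) \ge \max\left(\frac{\chat_1}{p}, \frac{p}{\alpha(1-p)}\right). \]
The two expressions cross exactly at $p = p_1$, where both equal $\mu_1$. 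If $p \le p_1$, the first is at least $\chat_1/p_1 = \mu_1$. If $p \ge p_1$, the second is at least $p_1/(\alpha(1-p_1)) = \mu_1$, using the defining equation of $p_1$. Either way $\mu(p,q) \ge \mu_1$.

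The only point needing care is which closed forms of $r_1, r_2$ apply. Because $p < \alpha q$ is strict, the ambiguous tie case $p = \alpha q$ of \cref{remark:lne-cex:pq-ambig} does not arise, and item 2 of \cref{thm:lne-cex:r} gives $r_1, r_2$ exactly. Beyond that, the argument reduces to the monotonicity of the two bounds and the uniqueness of their crossing point, which \cref{thm:lne-cex:p1} already supplies, so I expect no serious obstacle.
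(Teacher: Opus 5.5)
Your proof is correct and is essentially the paper's argument: discard $p \ge 1$ (and also $p = 0$, where $r_1(p,q) = 0$ so $\mu = \infty$, which you should state explicitly) and then split at $p_1$, using $r_1^*(q) \ge n\chat_1$ when $p \le p_1$ and the seller-2 ratio $p/(\alpha(1-p))$ when $p \ge p_1$. The only cosmetic difference is that you use the lower bound $r_2^*(p) \ge (n-1)p/\alpha$ directly, whereas the paper first uses $p \ge p_1 > \alpha/(\alpha+1)$ to get $r_2^*(p) = (n-1)p/\alpha$ exactly.
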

\begin{proof}
If $p \ge 1$, then $r_2(p, q) = 0$, and if $p = 0$, then $r_1 = 0$.
In both cases, we get $\mu(p, q) = \infty$. So now let $p \in (0, 1)$.

By \cref{thm:lne-cex:p1} $p_1 \in (\frac{\alpha}{\alpha+1}, 1)$.

\textbf{Case 1}: $p \ge p_1$
\\ Then $p \ge \frac{\alpha}{\alpha+1}$, so $1-p \le p/\alpha$. Thus,
\[ \mu(p, q) \ge \frac{r_2^*(p)}{r_2(p, q)} = \frac{p/\alpha}{1-p}
    \ge \frac{p_1/\alpha}{1-p_1} = \frac{r_2^*(p_1)}{r_2(p_1, q_1)} = \mu_1. \]

\textbf{Case 2}: $p \le p_1$.
\\ By \cref{thm:lne-cex:q1}, $r_1^*(q) \ge r_1^*(q_1)$ for all $q \ge 0$. Thus,
\[ \mu(p, q) \ge \frac{r_1^*(q)}{r_1(p, q)} = \frac{r_1^*(q)}{np}
    \ge \frac{r_1^*(q_1)}{np_1} = \frac{r_1^*(q_1)}{r_1(p_1, q_1)} = \mu_1. \]

Hence, $\mu(p, q) \ge \mu_1$.
\end{proof}

\subsection{\texorpdfstring{\boldmath}{}Case 3:
    \texorpdfstring{$p + q \le 1$}{p + q <= 1}}
\label{sec:lne-cex:case3}

Recall \cref{thm:lne-cex:r,thm:lne-cex:rstar}.
Since $p + q \le 1$, we get
\begin{align*}
r_1^*(q) &= \max\left(\beta + (n-1)(1-q),
    \min(\beta, \alpha q) + (n-1)\min(1, \alpha q)\right),
    \tag{by \cref{thm:lne-cex:rstar}}
\\ r_1(p, q) &= \min(\beta, p) + (n-1)p,
    \tag{by \cref{thm:lne-cex:r}}
\\ r_2^*(p) &= (n-1)\max(1-p, p/\alpha),
    \tag{by \cref{thm:lne-cex:rstar}, $p \le 1$, $\alpha \ge 1$}
\\ r_2(p, q) &= (n-1)q.
    \tag{by \cref{thm:lne-cex:r}}
\end{align*}

Let $p' \defeq \frac{\alpha}{\alpha+1}$.
We first show that (informally) $\mu(p, q)$ is decreasing in $p$ if $p \le p'$
and $\mu(p, q)$ is decreasing in $q$ if $q \le q_1$.

\begin{lemma}
\label{thm:lne-cex:mu-dec-p}
Let $q \in [0, 1]$ and $0 \le x_1 \le x_2 \le \min(p', 1-q)$.
Then $\mu(x_1, q) \ge \mu(x_2, q)$.
\end{lemma}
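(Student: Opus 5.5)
In the regime $p+q\le 1$, the Case 3 formulas above show that as $p$ varies with $q$ fixed, $\mu(p,q)$ is a maximum of two ratios. The plan is to show that each ratio is non-increasing in $p$ on $[0,\min(p',1-q)]$. Since $\mu$ is the maximum of the two, it is then non-increasing in $p$ as well.

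\textbf{Seller 1's ratio.} The numerator $r_1^*(q)$ depends only on $q$. The denominator $r_1(p,q)=\min(\beta,p)+(n-1)p$ is non-decreasing in $p$. Hence for $x_1\le x_2$ we get
\[ \frac{r_1^*(q)}{r_1(x_1,q)} \ge \frac{r_1^*(q)}{r_1(x_2,q)}. \]
If $x_1=0$, then $r_1(x_1,q)=0$ and $\mu(x_1,q)=\infty$ by convention, so the inequality holds trivially.

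\textbf{Seller 2's ratio.} The denominator $r_2(p,q)=(n-1)q$ does not depend on $p$. The numerator is $r_2^*(p)=(n-1)\max(1-p,\,p/\alpha)$, which uses $p\le 1$ and $\alpha\ge1$. Since $p'=\alpha/(\alpha+1)$ is exactly the point where $1-p=p/\alpha$, we have $1-p\ge p/\alpha$ for $p\le p'$. So on the interval in question $r_2^*(p)=(n-1)(1-p)$, which is decreasing in $p$. Thus
\[ \frac{r_2^*(x_1)}{r_2(x_1,q)} \ge \frac{r_2^*(x_2)}{r_2(x_2,q)}. \]
If $q=0$, both sides are $\infty$.

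\textbf{Combining and edge cases.} Taking the maximum of the two inequalities gives $\mu(x_1,q)\ge\mu(x_2,q)$. The only care needed is twofold. First, the hypothesis $x_2\le 1-q$ guarantees that both points satisfy $p+q\le1$, so the Case 3 closed forms from \cref{thm:lne-cex:r} apply without tie-breaking ambiguity (item 3 of that lemma covers $p=\alpha q$ as well). Second, the degenerate zero-revenue cases must be handled via the convention $\mu=\infty$. There is no real obstacle; the main point to verify is that the formula for $r_2^*$ collapses to $1-p$ precisely below $p'$.
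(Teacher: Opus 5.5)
Your proposal is correct and follows the paper's argument. For $p\in[0,\min(p',1-q)]$ you use $1-p\ge p/\alpha$ to write $\mu(p,q)=\max\bigl(r_1^*(q)/r_1(p,q),\,(1-p)/q\bigr)$, and both terms are non-increasing in $p$. The paper's proof differs only cosmetically: it writes $r_1(p,q)=np$ (using $p\le 1\le\beta$), and it does not spell out the degenerate cases $x_1=0$ and $q=0$ that you handle via the $\mu=\infty$ convention.
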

\begin{proof}
For any $p \in [0, \min(p', 1-q)]$, we have $1-p \ge p/\alpha$, so
\[ \mu(p, q) = \max\left(\frac{r_1^*(q)}{r_1(p, q)}, \frac{r_2^*(p)}{r_2(p, q)}\right)
    = \max\left(\frac{r_1^*(q)}{np}, \frac{1-p}{q}\right). \]
Both terms in the $\max(\ldots)$ are decreasing in $p$.
\end{proof}

\begin{lemma}
\label{thm:lne-cex:mu-dec-q}
Let $p \in [0, 1]$ and $0 \le y_1 \le y_2 \le \min(q_1, 1-p)$.
Then $\mu(p, y_1) \ge \mu(p, y_2)$.
\end{lemma}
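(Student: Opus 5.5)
The plan is to show that, inside the region $p + q \le 1$ of Case 3, both terms in the $\max$ defining $\mu(p, q)$ are non-increasing in $q$ on $[0, q_1]$. Fix $p \in [0,1]$ and $0 \le y_1 \le y_2 \le \min(q_1, 1-p)$. Then $p + y_i \le 1$ for both $i$, so the Case 3 formulas apply at both points. These are $r_1(p, y) = \min(\beta, p) + (n-1)p$, $r_2(p, y) = (n-1)y$, and $r_2^*(p) = (n-1)\max(1-p, p/\alpha)$. First I would dispose of degenerate cases using the convention that $\mu = \infty$ when a revenue vanishes. If $p = 0$, then $r_1(p, y_1) = r_1(p, y_2) = 0$, so both sides are $\infty$. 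If $y_1 = 0$, then $r_2(p, y_1) = 0$, so $\mu(p, y_1) = \infty$ and the claim is trivial. So assume $p > 0$ and $y_1 > 0$.

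\textbf{Second ratio.} Since $r_2^*(p)$ does not depend on $q$ and $r_2(p, y) = (n-1)y$ is increasing in $y$, we get
\[ \frac{r_2^*(p)}{r_2(p, y_1)} \ge \frac{r_2^*(p)}{r_2(p, y_2)}. \]

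\textbf{First ratio.} The denominator $r_1(p, y)$ does not depend on $y$ in this region, so it suffices to show that $r_1^*$ is non-increasing on $[0, q_1]$. For this I reuse the functions $g_1$ and $g_2$ from the proof of \cref{thm:lne-cex:q1}, for which $r_1^*(q) = \max(g_1(q), g_2(q))$ by \cref{thm:lne-cex:rstar}. Recall that $g_1$ is non-increasing, $g_2$ is non-decreasing, and $g_1(q_1) = g_2(q_1)$. Hence for any $q \le q_1$,
\[ g_2(q) \le g_2(q_1) = g_1(q_1) \le g_1(q), \]
so $r_1^*(q) = g_1(q)$ on $[0, q_1]$, which is non-increasing. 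Therefore $r_1^*(y_1) \ge r_1^*(y_2)$, and the first ratio at $y_1$ is at least the first ratio at $y_2$.

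Taking the maximum of the two ratios gives $\mu(p, y_1) \ge \mu(p, y_2)$. The only step needing any care is identifying $r_1^*$ with $g_1$ on $[0, q_1]$, and that is immediate from \cref{thm:lne-cex:q1}. The remaining work is checking the degenerate zero-revenue cases, which the conventions above handle.
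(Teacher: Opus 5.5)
Your proposal is correct and follows essentially the same argument as the paper. The paper writes $\mu(p,q)=\max\bigl(\frac{\beta+(n-1)(1-q)}{np},\frac{r_2^*(p)}{q}\bigr)$ for $q\le\min(q_1,1-p)$ and notes that both terms decrease in $q$; your derivation of $r_1^*=g_1$ on $[0,q_1]$ from the monotonicity of $g_1,g_2$, together with your handling of the zero-revenue cases, is just a slightly more careful version of the paper's bare assertion that $g_1(q)\ge g_2(q)\iff q\le q_1$.
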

\begin{proof}
For any $q \in [0, 1]$, we have
$\beta + (n-1)(1-q) \ge \min(\beta, \alpha q) + (n-1)\min(1, \alpha q) \iff q \le q_1$.
Thus, for any $q \in [0, \min(q_1, 1-p)]$, we have
\[ \mu(p, q) = \max\left(\frac{r_1^*(q)}{r_1(p, q)}, \frac{r_2^*(p)}{r_2(p, q)}\right)
    = \max\left(\frac{\beta + (n-1)(1-q)}{np}, \frac{r_2^*(p)}{q}\right). \]
Both terms in the $\max(\ldots)$ are decreasing in $q$.
\end{proof}

\begin{lemma}
\label{thm:lne-cex:pq-dom}
For any $(p, q) \in \mathbb{R}_{\ge 0}^2$ such that $p + q \le 1$,
there exists $(\phat, \qhat) \in \mathbb{R}_{\ge 0}^2$ such that
$\phat + \qhat = 1$, $\phat \ge p$, $\qhat \ge q$,
and $\mu(\phat, \qhat) \le \mu(p, q)$.
\end{lemma}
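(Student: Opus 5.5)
The plan is to push $(p,q)$ onto the line $p+q=1$ using the two monotonicity lemmas, \cref{thm:lne-cex:mu-dec-p,thm:lne-cex:mu-dec-q}. First I dispose of degenerate inputs. If $p=0$ or $q=0$, then $r_1(p,q)=0$ or $r_2(p,q)=0$, so $\mu(p,q)=\infty$. In that case any point $(\phat,\qhat)$ on the line with $\phat\ge p$ and $\qhat\ge q$ works, for instance $(1-q,q)$ or $(p,1-p)$. So assume $p,q>0$ and $p+q\le 1$.

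\textbf{Step 1 (raise $p$).} If $p<p'$, set $\phat_0\defeq\min(p',1-q)\ge p$. Then \cref{thm:lne-cex:mu-dec-p} applies with $x_1=p$ and $x_2=\phat_0$, giving $\mu(\phat_0,q)\le\mu(p,q)$. If $p\ge p'$, keep $\phat_0=p$.

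\textbf{Step 2 (raise $q$).} If $q<q_1$, set $\qhat_0\defeq\min(q_1,1-\phat_0)\ge q$ (this holds because $\phat_0+q\le 1$). Then \cref{thm:lne-cex:mu-dec-q} gives $\mu(\phat_0,\qhat_0)\le\mu(\phat_0,q)$.

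At this point either $\phat_0+\qhat_0=1$ and we are done, or both $\phat_0\ge p'$ and $\qhat_0\ge q_1$. In the second case $p'+q_1\le 1$, which fails: $p'=\alpha/(\alpha+1)$, $q_1>1/\alpha$ by \cref{thm:lne-cex:q1}, and $\alpha/(\alpha+1)+1/\alpha>1$ holds for all $\alpha>0$. Hence after Steps 1 and 2 the point lies on $p+q=1$. The reason is that whenever a step was triggered, it either hit the line or hit its threshold ($p'$ or $q_1$), and the two thresholds cannot both hold strictly inside the triangle.

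\textbf{The main obstacle} is making this case analysis airtight. I have to check that Step 2 does not need $\phat_0\le p'$: \cref{thm:lne-cex:mu-dec-q} only requires $p\in[0,1]$ and $y_2\le\min(q_1,1-p)$, so it does not. I also have to cover the branch where $p\ge p'$ from the start and $q\ge q_1$, which is impossible by the same inequality $p'+q_1>1$.

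One remaining subtlety: if Step 1 stops at $\phat_0=p'<1-q$, Step 2 must then reach $1-p'$. This requires $q_1\ge 1-p'=1/(\alpha+1)$, which holds since $q_1>1/\alpha$. In that case the point is $(p',1-p')$. Finally, $\phat\ge p$ and $\qhat\ge q$ hold by construction, and $\mu$ is non-increasing across both moves.
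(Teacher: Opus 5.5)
Your proposal is correct and is essentially the paper's proof. Your two steps produce exactly the paper's point $\phat=\max(p,\min(p',1-q))$, $\qhat=\max(q,\min(q_1,1-\phat))$, using the same two monotonicity lemmas and the key inequality $p'+q_1>1$; the only difference is that you replace the paper's three-case split with a single ``either on the line or both thresholds reached'' argument, which implicitly uses $\phat_0+\qhat_0\le 1$ (true by construction).
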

\begin{proof}
Let $p' \defeq \frac{\alpha}{\alpha+1}$.
By \cref{thm:lne-cex:q1}, $q_1 \in (1/\alpha, 1)$.
Thus, $p' + q_1 > 1$.

Let $\phat \defeq \max(p, \min(p', 1-q))$
and $\qhat \defeq \max(q, \min(q_1, 1-\phat))$.
It is easy to see that $\phat \ge p$ and $\qhat \ge q$.

\textbf{Case 1}: $p \le 1-q \le p'$
\\ Then $\phat = 1-q$, and so, $\qhat = \max(q, \min(q_1, q)) = q$.
Then $\phat + \qhat = (1-q) + q = 1$.
By \cref{thm:lne-cex:mu-dec-p}, we get
$\mu(p, q) \ge \mu(\phat, q) = \mu(\phat, \qhat)$.

\textbf{Case 2}: $p \le p' \le 1-q$
\\ Then $\phat = p'$.
Since $p' + q_1 > 1$ and $p' \le 1-q$, we get $q \le 1-p' < q_1$.
Hence, $\qhat = \max(q, \min(q_1, 1-q_4)) = 1-p'$.
Thus, $\phat + \qhat = p' + (1-p') = 1$.
By \cref{thm:lne-cex:mu-dec-p}, we get $\mu(p, q) \ge \mu(\phat, q)$,
and by \cref{thm:lne-cex:mu-dec-q}, we get $\mu(\phat, q) \ge \mu(\phat, \qhat)$.
Thus, $\mu(p, q) \ge \mu(\phat, \qhat)$.

\textbf{Case 3}: $p' \le p \le 1-q$
\\ Then $\phat = p$.
Since $p' + q_1 > 1$, we get $q \le 1-p \le 1-p' < q_1$.
Thus, $\qhat = \max(q, \min(q_1, 1-p)) = 1-p$, so $\phat + \qhat = p + (1-p) = 1$.
Also, by \cref{thm:lne-cex:mu-dec-q}, $\mu(p, q) = \mu(\phat, q) \ge \mu(\phat, \qhat)$.

Thus, in each case, we get $\phat \ge p$, $\qhat \ge q$,
$\phat + \qhat = 1$, and $\mu(p, q) \ge \mu(\phat, \qhat)$.
\end{proof}

\begin{lemma}
\label{thm:lne-cex:3}
If $p + q \le 1$, then $\mu(p, q) \ge \min(\mu_1, \mu_2, \mu_3)$.
\end{lemma}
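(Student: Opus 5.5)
By \cref{thm:lne-cex:pq-dom}, it suffices to treat points with $p + q = 1$. Indeed, for any $(p, q)$ with $p + q \le 1$ we get $(\phat, \qhat)$ with $\phat + \qhat = 1$ and $\mu(\phat, \qhat) \le \mu(p, q)$. So I would fix $p + q = 1$ with $p, q \ge 0$ and show $\mu(p, q) \ge \min(\mu_1, \mu_2, \mu_3)$. On this line both Case 3 formulas and the $p + q \ge 1$ formulas apply, so I can try to reduce to the already-handled cases.

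I split by the relation between $p$ and $\alpha q$.

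\textbf{If $p < \alpha q$:} then $p + q \ge 1$ holds, so \cref{thm:lne-cex:2} gives $\mu(p, q) \ge \mu_1$ directly.

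\textbf{If $p > \alpha q$:} again $p + q \ge 1$. Now \cref{thm:lne-cex:1.1} (when $p \ge \alpha$) or \cref{thm:lne-cex:1.2} (when $p \le \alpha$) gives $\mu(p, q) \ge \mu_2$ or $\mu(p, q) \ge \mu_3$. (Since $p \le 1 < \alpha$, only \cref{thm:lne-cex:1.2} actually arises.)

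\textbf{If $p = \alpha q$:} with $p + q = 1$ this is the single point $p = p' = \frac{\alpha}{\alpha+1}$, $q = 1/(\alpha+1)$. Since $p + q = 1$ is not $> 1$, \cref{remark:lne-cex:pq-ambig} does not apply. Item 3 of \cref{thm:lne-cex:r} gives unambiguous revenues $r_1 = np$ and $r_2 = (n-1)q$. Here I would compute directly. Since $q = 1/(\alpha+1) < 1/\alpha < q_1$, we have $r_1^*(q) \ge n\chat_1$ by \cref{thm:lne-cex:q1}. Then $r_1^*(q)/r_1(p, q) \ge \chat_1/p' \ge \chat_1/p_1 = \mu_1$, using $p' < p_1$ from \cref{thm:lne-cex:p1}. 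Alternatively, I would use continuity of the Case 3 formula for $\mu$ in $p$ from the left, combined with \cref{thm:lne-cex:mu-dec-p}.

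The only delicate point is making sure the earlier lemmas' hypotheses are literally satisfied on the boundary line $p + q = 1$. In particular, their revenue formulas for $p + q \ge 1$ must agree with the Case 3 formulas there. I would verify this from \cref{thm:lne-cex:r}: at $p + q = 1$ we have $\max(0, 1-q) = p$ and $\min(p, 1) = p$. So the expressions coincide, and hence so do the values of $\mu$. Chaining these cases with \cref{thm:lne-cex:pq-dom} gives $\mu(p, q) \ge \min(\mu_1, \mu_2, \mu_3)$ for all $p + q \le 1$.
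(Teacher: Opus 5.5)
Your proposal is correct and follows the paper's proof: reduce to $\phat + \qhat = 1$ via \cref{thm:lne-cex:pq-dom}, handle $p > \alpha q$ with \cref{thm:lne-cex:1.1,thm:lne-cex:1.2} and $p < \alpha q$ with \cref{thm:lne-cex:2}, and treat the single point $p = \alpha q = \alpha/(\alpha+1)$ directly using $r_1 = np$, $r_1^*(q) \ge n\chat_1$, and $\alpha/(\alpha+1) < p_1$. Your remark that only \cref{thm:lne-cex:1.2} can actually occur, since $p \le 1 < \alpha$, is a harmless sharpening of the paper's argument.
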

\begin{proof}
By \cref{thm:lne-cex:pq-dom}, there exist $\phat \ge p$ and $\qhat \ge q$
such that $\phat + \qhat = 1$ and $\mu(p, q) \ge \mu(\phat, \qhat)$.

If $\phat > \alpha q$, then by \cref{thm:lne-cex:1.1,thm:lne-cex:1.2},
we get $\mu(\phat, \qhat) \ge \min(\mu_2, \mu_3)$.
If $\phat < \alpha q$, then by \cref{thm:lne-cex:2}, we get
$\mu(\phat, \qhat) \ge \mu_1$.
Now let $\phat = \alpha q$. Then $\phat = \frac{\alpha}{\alpha+1}$ and $\qhat = \frac{1}{\alpha+1}$.

By \cref{thm:lne-cex:q1}, $\qhat < 1/\alpha < q_1$ and $r_1^*(\qhat) \ge r_1^*(q_1)$.
By \cref{thm:lne-cex:p1}, $\phat = < p_1$, and
\[ \mu(\phat, \qhat) \ge \frac{r_1^*(\qhat)}{r_1(\phat, \qhat)} = \frac{r_1^*(\qhat)}{n\phat}
    \ge \frac{r_1^*(q_1)}{np_1} = \frac{r_1^*(q_1)}{r_1(p_1, q_1)} = \mu_1. \]
Hence, $\mu(p, q) \ge \mu(\phat, \qhat) \ge \min(\mu_1, \mu_2, \mu_3)$.
\end{proof}

\subsection{\texorpdfstring{\boldmath}{}Case 4:
    \texorpdfstring{$p = \alpha q$}{p = aq}
    and \texorpdfstring{$p + q > 1$}{p + q > 1}}
\label{sec:lne-cex:case4}

\begin{observation}
$1-q < p = \alpha q \implies q > 1/(\alpha+1)$ and $p > \alpha/(\alpha+1)$.
\end{observation}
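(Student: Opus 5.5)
The observation follows by direct algebra from the two hypotheses $p = \alpha q$ and $p + q > 1$, so the plan is a substitution followed by a rearrangement, using only $\alpha > 0$ (indeed $\alpha \ge 2$ by Constraint~\ref{item:lne-cex:c1}).

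First, substitute $p = \alpha q$ into $1 - q < p$. This gives $1 - q < \alpha q$, that is, $1 < (\alpha + 1) q$. Since $\alpha + 1 > 0$, dividing preserves the inequality and yields $q > 1/(\alpha+1)$.

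Second, multiply this bound by $\alpha > 0$ and use $p = \alpha q$ once more. This gives $p = \alpha q > \alpha/(\alpha+1)$.

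There is no real obstacle here; the only point to check is that $\alpha + 1$ and $\alpha$ are positive, so that neither the division nor the multiplication flips the inequality. Both facts hold under the standing assumption $1 < \alpha$, and also under Constraint~\ref{item:lne-cex:c1}.
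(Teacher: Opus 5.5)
Your proof is correct and is exactly the substitution-and-rearrangement the paper intends; the paper states this as an observation with no written proof. Your positivity check on $\alpha$ and $\alpha+1$ is the only point needing care, and it follows from the standing assumption $1 < \alpha$.
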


Since $p$ can be expressed as a function of $q$, we write $r_2^*(q)$ instead of $r_1^*(p)$.
By \cref{thm:lne-cex:rstar}, and using $p = \alpha q$ and $p + q > 1$, we get
\begin{align*}
r_1^*(q) &= \max\left(\beta + (n-1)\max(0, 1-q),
    \min(\beta, \alpha q) + (n-1)\min(1, \alpha q)\right),
\\ r_2^*(q) &= (n-1)\min(1, q),
\end{align*}

The revenue collected by the sellers cannot be precisely determined,
but we can still give upper and lower bounds on them.
For $j \in [2]$, define
\begin{enumerate}
\item $r_1^-(q) \defeq \min(\beta, \alpha q) + (n-1)\max(0, 1-q)$,
\item $r_1^+(q) \defeq \min(\beta, \alpha q) + (n-1)\min(1, \alpha q)$,
\item $r_2^-(q) \defeq (n-1)\max(0, 1-\alpha q)$,
\item $r_2^+(q) \defeq (n-1)\min(q, 1)$,
\item $s(q) \defeq \min(\beta, \alpha q) + (n-1)$.
\end{enumerate}
Then by \cref{thm:lne-cex:r}, the revenue of seller $j$ lies between
$r_j^-(q)$ and $r_j^+(q)$, and the sum of the sellers' revenues is $s(q)$.

\begin{observation}
For all $q > 1/(\alpha+1)$, we have $r_1^*(q) \ge r_1^+(q)$ and $r_2^*(q) = r_2^+(q)$.
\end{observation}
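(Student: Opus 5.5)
The claim is that for all $q > 1/(\alpha+1)$ we have $r_1^*(q) \ge r_1^+(q)$ and $r_2^*(q) = r_2^+(q)$. I will prove it by substituting $p = \alpha q$ into the closed forms of \cref{thm:lne-cex:rstar} and comparing terms directly.

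\textbf{Seller 1.} By \cref{thm:lne-cex:rstar},
\[ r_1^*(q) = \max\left(\beta + (n-1)\max(0, 1-q),\ \min(\beta, \alpha q) + (n-1)\min(1, \alpha q)\right). \]
The second argument of the $\max$ is exactly $r_1^+(q)$. Hence $r_1^*(q) \ge r_1^+(q)$ for every $q$, and the hypothesis $q > 1/(\alpha+1)$ is not needed for this half.

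\textbf{Seller 2.} By \cref{thm:lne-cex:rstar} with $p = \alpha q$,
\[ r_2^*(\alpha q) = (n-1)\max\left(1-\alpha q,\ \min(1, q)\right). \]
This matches the displayed formula $r_2^*(q) = (n-1)\min(1,q)$ once we check that the $1-\alpha q$ term is dominated. I split into two cases.

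If $q \ge 1$, then $1 - \alpha q < 0 < 1 = \min(1, q)$, so the maximum is $1$.

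If $q < 1$, the required inequality is $1 - \alpha q \le q$, which is equivalent to $q \ge 1/(\alpha+1)$. This holds by hypothesis; it is exactly the content of the preceding observation that $1 - q < p = \alpha q$ forces $q > 1/(\alpha+1)$.

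In both cases $r_2^*(q) = (n-1)\min(1,q) = r_2^+(q)$.

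\textbf{Main obstacle.} There is no real difficulty here. The one step needing care is the comparison $1-\alpha q \le \min(1,q)$, and that is precisely where the assumption $p + q > 1$, i.e.\ $q > 1/(\alpha+1)$, enters.
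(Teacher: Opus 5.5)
Your proof is correct and matches the paper's reasoning. The paper substitutes $p=\alpha q$ into the closed forms of \cref{thm:lne-cex:rstar} and reads the observation off the definitions: $r_1^+(q)$ is one of the two terms in the $\max$ defining $r_1^*(q)$, and $r_2^*(q)=(n-1)\min(1,q)=r_2^+(q)$. Your case split usefully makes explicit the check the paper leaves implicit, namely that $q>1/(\alpha+1)$ makes the $1-\alpha q$ term dominated.
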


\begin{observation}
For all $q > 1/(\alpha+1)$, we have
$r_1^-(q) + r_2^+(q) = r_1^+(q) + r_2^-(q) = s(q)$.
Let $d(q) \defeq (n-1)(\min(q, 1) + \min(\alpha q, 1) - 1)$.
Then $r_1^+(q) - r_1^-(q) = r_2^+(q) - r_2^-(q) = d(q)$.
Moreover, $d(q) > 0$ for all $q > 1/(\alpha+1)$.
This is because if $q \ge 1/\alpha$, then $d(q) = (n-1)q > 0$,
and otherwise, $d(q) = (n-1)((\alpha+1)q - 1) > 0$.
\end{observation}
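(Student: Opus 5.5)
The plan is to verify the three claims by direct algebra from the definitions of $r_1^\pm$, $r_2^\pm$, $s$ and $d$. The only tools are the two scalar identities $\min(1, x) + \max(0, 1-x) = 1$ and $\max(0, 1-x) = 1 - \min(1, x)$, valid for all real $x$. Nothing about the market itself is needed beyond $\alpha \ge 2$, which holds by Constraint \ref{item:lne-cex:c1}, and the hypothesis $q > 1/(\alpha+1)$.

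\textbf{Sum identities.} Apply the first identity with $x = \alpha q$:
\[ r_1^+(q) + r_2^-(q) = \min(\beta,\alpha q) + (n-1)\bigl(\min(1,\alpha q) + \max(0, 1-\alpha q)\bigr) = \min(\beta,\alpha q) + (n-1) = s(q). \]
Apply the same identity with $x = q$ (note $\min(q,1) = \min(1,q)$):
\[ r_1^-(q) + r_2^+(q) = \min(\beta,\alpha q) + (n-1)\bigl(\max(0,1-q) + \min(1,q)\bigr) = s(q). \]

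\textbf{Differences.} Subtracting the two sum identities gives $r_1^+(q) - r_1^-(q) = r_2^+(q) - r_2^-(q)$. For the explicit formula, the $\min(\beta,\alpha q)$ terms cancel, and the second identity rewrites $\max(0,1-q)$ as $1 - \min(q,1)$. Hence
\[ r_1^+(q) - r_1^-(q) = (n-1)\bigl(\min(1,\alpha q) + \min(q,1) - 1\bigr) = d(q). \]

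\textbf{Positivity of $d$.} Split on the value of $q$.
\begin{itemize}
\item If $q \ge 1/\alpha$, then $\min(1,\alpha q) = 1$, so $d(q) = (n-1)\min(q,1) > 0$. This equals $(n-1)q$ when $q \le 1$ and $n-1$ when $q \ge 1$; both are positive since $n \ge 2$.
\item If $1/(\alpha+1) < q < 1/\alpha$, then $q < 1$ (as $\alpha \ge 2$), so both minima are attained by their second arguments. Thus $d(q) = (n-1)\bigl((\alpha+1)q - 1\bigr)$, which is positive exactly because $q > 1/(\alpha+1)$.
\end{itemize}

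There is no real obstacle here. The only point needing care is the boundary bookkeeping for the minima, in particular the case $q \ge 1$, which the one-line justification in the statement glosses over as $(n-1)q$ but which is still positive.
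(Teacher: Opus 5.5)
Your proof is correct and is the same direct verification the paper intends: the identity $\min(1,x)+\max(0,1-x)=1$ gives both sum identities and hence the difference formula, and splitting at $q = 1/\alpha$ gives positivity of $d(q)$. Your boundary remark is also right: for $q \ge 1$ one has $d(q) = n-1$, not $(n-1)q$ as the paper's one-line justification states, but $d(q) > 0$ still holds there.
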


Seller 1's revenue will always be a convex combination of $r_1^-(q)$ and $r_1^+(q)$.
Denote her revenue by $r_1(q, z) \defeq (1-z) \cdot r_1^-(q) + z \cdot r_1^+(q)$.
Since the sellers' revenues sum to $s(q)$, we get that seller 2's revenue is
$r_2(q, z) \defeq z \cdot r_2^-(q) + (1-z) \cdot r_2^+(q)$.
We can rewrite these as, $r_1(q, z) = r_1^-(q) + z \cdot d(q)$
and $r_2(q, z) = r_2^+(q) - z \cdot d(q)$.

Let
\[ \mu(q, z) \defeq \max\left(\frac{r_1^*(q)}{r_1(q, z)}, \frac{r_2^*(q)}{r_2(q, z)}\right). \]
We would like to lower bound $\mu(q, z)$ for all $q > \frac{1}{\alpha+1}$ and $z \in [0, 1]$.

\begin{lemma}
\label{thm:lne-cex:4.1}
For all $q > \frac{1}{\alpha+1}$, let
\[ z^*(q) \defeq \min\left(1, \frac{r_2^*(q)(r_1^*(q) - r_1^-(q))}{(r_1^*(q) + r_2^*(q))d(q)}\right). \]
Then
\[ \inf_{z \in [0, 1]} \mu(q, z) = \mu(q, z^*(q))
    = \max\left(\frac{r_1^*(q)+r_2^*(q)}{s(q)}, \frac{r_1^*(q)}{r_1^+(q)}\right). \]
\end{lemma}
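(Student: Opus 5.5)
Fix $q > \frac{1}{\alpha+1}$ and write $A \defeq r_1^*(q)$, $B \defeq r_2^*(q) = r_2^+(q)$, $a \defeq r_1^-(q)$, $d \defeq d(q) > 0$, $s \defeq s(q) = a + B$. As functions of $z \in [0,1]$, the first ratio $A/(a + zd)$ is strictly decreasing and the second ratio $B/(B - zd)$ is strictly increasing. (The denominator $B - zd = r_2(q,z) \ge r_2^-(q) \ge 0$; if it vanishes we set the ratio to $\infty$, as in the definition of $\mu$.) The plan is therefore the standard one for a maximum of one decreasing and one increasing function: locate the crossing point and clip it to $[0,1]$.

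\textbf{Step 1 (crossing point).} We solve $A(B - zd) = B(a + zd)$ and get
\[
z_0 = \frac{B(A - a)}{(A + B)d}.
\]
We check that $z_0 \ge 0$, which follows from $A \ge r_1^+(q) \ge r_1^-(q) = a$ (the first observation of this subsection). At $z_0$ both ratios are equal, and the common value simplifies: from $A(B - z_0 d) = B(a + z_0 d)$ we get $a + z_0 d = Ba/(\ldots)$. It is cleaner to note that both ratios equal $\lambda$ exactly when $a + zd = A/\lambda$ and $B - zd = B/\lambda$. Adding these gives $s = a + B = (A + B)/\lambda$, so $\lambda = (A + B)/s$.

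\textbf{Step 2 (case split).} If $z_0 \le 1$, then $z^*(q) = z_0$. By monotonicity the maximum of the two ratios is minimized at the crossing, so the infimum equals $(A + B)/s$. We must also show this agrees with the claimed $\max\bigl(\tfrac{A+B}{s}, \tfrac{A}{r_1^+(q)}\bigr)$. Since $z_0 \le 1$, we have $a + z_0 d \le a + d = r_1^+(q)$, hence
\[
\frac{A}{r_1^+(q)} \le \frac{A}{a + z_0 d} = \frac{A+B}{s},
\]
as needed.

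If instead $z_0 > 1$, then $z^*(q) = 1$. On all of $[0,1]$ the decreasing ratio dominates, so $\mu(q,z)$ is minimized at $z = 1$ with value $A/(a + d) = A/r_1^+(q)$. In this case $r_1^+(q) = a + d < a + z_0 d = A s/(A+B)$, which gives $A/r_1^+(q) > (A+B)/s$, and the claimed max formula again holds. At $z = 1$ the second ratio is $B/r_2^-(q)$, which must be at most the first ratio. This follows from the fact that we are left of the crossing, provided we handle the degenerate case $r_2^-(q) = 0$: then the second ratio is $\infty$, so in fact $z_0 \le 1$ must hold. To see this, note that $B - z_0 d = B/\lambda > 0$ forces $z_0 d < B$, and since $d = B$ in this degenerate case, $z_0 < 1$.

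\textbf{Main obstacle.} The main obstacle is bookkeeping for the boundary and degenerate situations: the infinite ratio when $r_2(q,z) = 0$, and making sure that $r_1(q,z)$ never vanishes (we have $a + zd \ge a > 0$ because $\min(\beta, \alpha q) > 0$). The monotonicity and the crossing algebra are routine.
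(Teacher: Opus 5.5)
Your proof is correct and takes essentially the same route as the paper. You locate the crossing point $z_0$ of the decreasing ratio $r_1^*(q)/r_1(q,z)$ and the increasing ratio $r_2^*(q)/r_2(q,z)$, split on whether $z_0 \le 1$, and compare the two candidate values to obtain the $\max$ formula. Your derivation of the common value $(A+B)/s$ by adding the two denominators, your comparison of $a + z_0 d$ with $a + d = r_1^+(q)$, and your explicit check that $r_2^-(q) = 0$ forces $z_0 < 1$ are slightly cleaner than the paper's chain of equivalences and spell out a degenerate case it leaves implicit. You should delete the garbled ``$a + z_0 d = Ba/(\ldots)$'' line.
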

\begin{proof}
The key idea is that $r_1^*(q)/r_1(q, z)$ is decreasing in $z$
and $r_2^*(q)/r_2(q, z)$ is increasing in $z$.
If these curves (as functions of $z$) intersect,
then the point of intersection is $z^*$, and this minimizes $\mu(q, z)$.
Otherwise, we show that $\mu(q, z) = r_1^*(q)/r_1(q, z)$,
so $\mu(q, z)$ is minimized at $z = 1$.
We now formalize this argument.

For $j \in [2]$, let $g_j(q, z) \defeq r_j^*(q)/r_j(q, z)$. Let
\[ \zhat(q) \defeq \frac{r_2^*(q)(r_1^*(q) - r_1^-(q))}{(r_1^*(q) + r_2^*(q))d(q)}. \]
Then $\zhat(q) \ge 0$ and
\begin{align*}
g_1(q, z) \ge g_2(q, z)
    &\iff \frac{r_1^*(q)}{r_1(q, z)} \ge \frac{r_2^*(q)}{r_2(q, z)}
\\ &\iff r_1^*(q)r_2(q, z) \ge r_2^*(q)r_1(q, z)
\\ &\iff r_1^*(q)(r_2^*(q) - d(q)z) \ge r_2^*(q)(r_1^-(q) + d(q)z)
    \tag{since $r_2^+(q) = r_2^*(q)$}
\\ &\iff z \le \zhat(q).
\end{align*}
Similarly, we can show that $g_1(q, z) \le g_2(q, z) \iff z \ge \zhat(q)$.

\textbf{Case 1}: $\zhat(q) \ge 1$.
\\ Then $z^*(q) = 1$ and for all $z \in [0, 1]$, we have $g_1(q, z) \ge g_2(q, z)$.
Thus, $\mu(q, z) = g_1(q, z)$. Moreover, $g_1(q, z)$ is decreasing in $z$, so
\[ \inf_{z \in [0, 1]} \mu(q, z) = \inf_{z \in [0, 1]} g_1(q, z) = g_1(q, 1) = \mu(q, 1) = \mu(q, z^*(q)). \]
Moreover,
\[ \mu(q, z^*(q)) = g_1(q, 1) = \frac{r_1^*(q)}{r_1^+(q)}. \]

\textbf{Case 2}: $\zhat(q) \le 1$.
\\ Then $z^*(q) = \zhat(q)$.
For $z \le \zhat(q)$, we have $g_1(q, z) \ge g_2(q, z)$, so $\mu(q, z) = g_1(q, z)$.
Since $g_1(q, z)$ is decreasing in $z$, we have
$\mu(q, z) = g_1(q, z) \ge g_1(q, \zhat(q)) = \mu(q, \zhat(q))$.
For $z \ge \zhat(q)$, we have $g_1(q, z) \le g_2(q, z)$, so $\mu(q, z) = g_2(q, z)$.
Since $g_2(q, z)$ is increasing in $z$, we have
$\mu(q, z) = g_2(q, z) \ge g_2(q, \zhat(q)) = \mu(q, \zhat(q))$.
Thus, $\inf_{z \in [0, 1]} \mu(q, z) = \mu(q, z^*(q))$.
Moreover,
\[ \mu(q, z^*(q)) = g_2(z^*(q)) = \frac{r_2^*(q)}{r_2^*(q) - d(q)\zhat(q)}
    = \frac{1}{1 - \frac{r_1^*(q) - r_1^-(q)}{r_1^*(q)+r_2^*(q)}}
    = \frac{r_1^*(q) + r_2^*(q)}{r_2^*(q) + r_1^-(q)}. \]

On combining the two cases, we get
\[ \inf_{z \in [0, 1]} \mu(q, z) = \mu(q, z^*(q)) = \begin{cases}
    \frac{r_1^*(q)}{r_1^+(q)} & \text{ if } \zhat(q) \ge 1
    \\ \frac{r_1^*(q) + r_2^*(q)}{r_2^*(q) + r_1^-(q)} & \text{ otherwise}
\end{cases}. \]
\begin{align*}
& \frac{r_1^*(q)}{r_1^+(q)} \ge \frac{r_1^*(q) + r_2^*(q)}{r_2^*(q) + r_1^-(q)}
\\ &\iff r_1^*(q)(r_2^*(q) + r_1^-(q)) \ge (r_1^-(q) + d(q))(r_1^*(q) + r_2^*(q))
\\ &\iff r_1^*(q)r_2^*(q) \ge r_1^-(q)r_2^*(q) + d(q)(r_1^*(q) + r_2^*(q))
\\ &\iff \zhat(q) \ge 1.
\end{align*}
Thus,
\[ \inf_{z \in [0, 1]} \mu(q, z) = \mu(q, z^*(q))
    = \max\left(\frac{r_1^*(q)+r_2^*(q)}{r_2^*(q) + r_1^-(q)}, \frac{r_1^*(q)}{r_1^+(q)}\right).
    \qedhere \]
\end{proof}

\begin{lemma}
\label{thm:lne-cex:mu4}
\[ \mu(q_1, z^*(q_1)) = \mu_4 \defeq 1 + \frac{\beta(n-1)}{(n-1)^2 + (n-1)\alpha + \alpha\beta}. \]
\end{lemma}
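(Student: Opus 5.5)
The plan is to apply \cref{thm:lne-cex:4.1} at $q = q_1$. That lemma reduces the claim to evaluating
\[ \mu(q_1, z^*(q_1)) = \max\left(\frac{r_1^*(q_1)+r_2^*(q_1)}{s(q_1)}, \frac{r_1^*(q_1)}{r_1^+(q_1)}\right), \]
which is a closed-form expression. To apply the lemma we need $q_1 > \frac{1}{\alpha+1}$. This follows from \cref{thm:lne-cex:q1}, which gives $q_1 > 1/\alpha > 1/(\alpha+1)$.

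Next I would evaluate each ingredient at $q_1$, using that \cref{thm:lne-cex:q1} gives $q_1 \in (1/\alpha, 1)$. By Constraint \ref{item:lne-cex:c1}, $\alpha q_1 < \alpha \le \beta$, and we also have $\alpha q_1 > 1$. These give the following values.
\begin{itemize}
\item \cref{thm:lne-cex:q1} gives $r_1^*(q_1) = n\chat_1 = \alpha q_1 + n-1$.
\item $r_1^+(q_1) = \min(\beta, \alpha q_1) + (n-1)\min(1, \alpha q_1) = \alpha q_1 + n-1$.
\item $r_2^*(q_1) = (n-1)\min(1, q_1) = (n-1)q_1$.
\item $s(q_1) = \min(\beta, \alpha q_1) + n-1 = \alpha q_1 + n-1$.
\end{itemize}
Hence the second term of the maximum equals $1$. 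The first term equals $1 + \frac{(n-1)q_1}{\alpha q_1 + n-1}$, which is strictly larger than $1$, so it is the maximum.

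Finally I would substitute $q_1 = \beta/(\alpha+n-1)$ into the first term. Multiplying numerator and denominator by $\alpha + n-1$ gives
\[ \frac{(n-1)q_1}{\alpha q_1 + n-1} = \frac{(n-1)\beta}{\alpha\beta + (n-1)(\alpha+n-1)} = \frac{\beta(n-1)}{(n-1)^2 + (n-1)\alpha + \alpha\beta} = \frac{\beta(n-1)}{L_2}. \]
This yields $\mu(q_1, z^*(q_1)) = \mu_4$.

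There is no real obstacle here. The only care needed is checking which branch of each $\min/\max$ is active at $q_1$, and \cref{thm:lne-cex:q1} together with Constraint \ref{item:lne-cex:c1} settles all of them.
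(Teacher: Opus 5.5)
Your proposal is correct and follows essentially the same route as the paper: apply \cref{thm:lne-cex:4.1} at $q_1$, evaluate each ingredient using $q_1 \in (1/\alpha, 1)$ and $\alpha q_1 < \alpha \le \beta$, observe that the $r_1^*/r_1^+$ term equals $1$, and simplify with $q_1 = \beta/(\alpha+n-1)$. The only cosmetic difference is that you write the denominator as $s(q_1)$, as in the lemma's statement, while the paper writes $r_1^-(q_1) + r_2^*(q_1)$, as at the end of that lemma's proof; these coincide because $r_2^* = r_2^+$ and $r_1^- + r_2^+ = s$. Your explicit check that $q_1 > 1/(\alpha+1)$, which is needed to invoke the lemma, is a small step the paper leaves implicit.
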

\begin{proof}
By \cref{thm:lne-cex:q1}, we get $q_1 \in (1/\alpha, 1)$
and $r_1^*(q_1) = n\chat_1 = \alpha q_1 + n-1$.
Also, $r_2^*(q_1) = r_2^+(q) = (n-1)q_1$,
$r_1^+(q_1) = \alpha q_1 + (n-1)$, and $r_1^-(q_1) = \alpha q_1 + (n-1)(1-q_1)$.
Thus, by \cref{thm:lne-cex:4.1}, we get
\begin{align*}
\mu(q_1, z^*(q_1))
    &= \max\left(\frac{r_1^*(q_1)}{r_1^+(q_1)}, \frac{r_1^*(q_1)+r_2^*(q_1)}{r_1^-(q)+r_2^*(q)}\right)
    = \max\left(1, \frac{\alpha q_1 + (n-1)(1+q_1)}{\alpha q_1 + n-1}\right)
\\ &= 1 + \frac{(n-1)q_1}{\alpha q_1 + n-1}
    = 1 + \frac{(n-1)\beta}{(n-1)^2 + (n-1)\alpha + \alpha\beta}.
\qedhere
\end{align*}
\end{proof}

\begin{lemma}
\label{thm:lne-cex:4.2}
For all $q > \frac{1}{\alpha+1}$, we have
\[ \mu(q, z^*(q)) \ge \min(\mu_2, \mu_4). \]
\end{lemma}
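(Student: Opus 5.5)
The idea is to work directly with the closed form from \cref{thm:lne-cex:4.1},
\[ \mu(q, z^*(q)) = \max\left(\frac{r_1^*(q)+r_2^*(q)}{s(q)}, \frac{r_1^*(q)}{r_1^+(q)}\right) \ge \frac{r_1^*(q)+r_2^*(q)}{s(q)}. \]
We lower-bound only the first term, and split the range $q > \frac{1}{\alpha+1}$ at $q_1$. This works because $r_1^*(q)$ equals the decreasing branch $g_1(q) = \beta + (n-1)\max(0,1-q)$ to the left of $q_1$ and the increasing branch $g_2(q)$ to the right (\cref{thm:lne-cex:q1}). I expect the argument to yield $\mu(q, z^*(q)) \ge \mu_4$ throughout, and hence $\ge \min(\mu_2, \mu_4)$. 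If some branch fails, the bound $1/q \ge \mu_2$ from \cref{thm:lne-cex:q2} is the fallback for small $q$.

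\textbf{Case $q \in (\frac{1}{\alpha+1}, q_1]$.} Here $q < 1$, so $r_2^*(q) = (n-1)q$. Since $r_1^*(q) \ge g_1(q) = \beta + (n-1)(1-q)$, we get $r_1^*(q)+r_2^*(q) \ge \beta + n-1$. Also $s(q) = \min(\beta,\alpha q) + n-1 \le \alpha q + n-1 \le \alpha q_1 + n-1$. Hence
\[ \frac{r_1^*(q)+r_2^*(q)}{s(q)} \ge \frac{\beta+n-1}{\alpha q_1 + n-1} = \frac{(\beta+n-1)(\alpha+n-1)}{L_2}. \]
Expanding gives $(\beta+n-1)(\alpha+n-1) = L_2 + \beta(n-1)$, so this is exactly $\mu_4$.

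\textbf{Case $q \ge q_1$.} By \cref{thm:lne-cex:q1}, $\alpha q \ge \alpha q_1 > 1$. Therefore $r_1^*(q) \ge g_2(q) = \min(\beta,\alpha q) + n-1 = s(q)$, and $r_2^*(q) = (n-1)\min(q,1)$. So
\[ \frac{r_1^*(q)+r_2^*(q)}{s(q)} \ge 1 + h(q), \qquad h(q) \defeq \frac{(n-1)\min(q,1)}{\min(\beta,\alpha q) + n-1}. \]
We show $h$ is non-decreasing on $[q_1,\infty)$ by checking each piece:
\begin{itemize}
\item where $\alpha q \le \beta$ and $q \le 1$, $h(q) = (n-1)\frac{q}{\alpha q + n-1}$, which is increasing in $q$;
\item where $\alpha q \ge \beta$, the denominator is constant while the numerator is non-decreasing;
\item for $q \ge 1$ with $\alpha q < \beta$, the numerator is constant. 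This piece would be decreasing, but it is empty, because $\alpha q < \beta$ forces $q < \beta/\alpha$, and Constraint~\ref{item:lne-cex:c1} only gives $\beta \ge \alpha$, so the piece can occur.
\end{itemize}
The last item needs care: when $q \ge 1$ the whole right-hand side is at least $1 + \frac{n-1}{\beta+n-1}$, and I compare this directly with $\mu_4$. Writing $\mu_4 = 1 + \frac{\beta(n-1)}{(\alpha+n-1)(\beta+n-1) - \beta(n-1)}$, the comparison reduces to $\alpha(\beta+n-1) + (n-1)^2 \ge \beta^2$. I expect this to follow from Constraint~\ref{item:lne-cex:c1} ($\beta < \alpha + n-1$) and $\alpha \ge 2$. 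Continuity and monotonicity then give $h(q) \ge h(q_1)$ on $[q_1, 1)$. Finally, $1 + h(q_1) = 1 + \frac{(n-1)q_1}{\alpha q_1 + n-1} = \mu_4$ by the computation in \cref{thm:lne-cex:mu4}.

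The main obstacle is the piece $q \ge 1$ (and $\alpha q < \beta$) in the second case, where monotonicity of $h$ breaks. There I would rely on the direct comparison with $\mu_4$ via Constraint~\ref{item:lne-cex:c1}. If that inequality turned out too weak, I would instead use $r_1^*(q) \ge \beta$, which gives $\frac{r_1^*(q)}{r_1^+(q)}$ or the first term at least $\frac{\beta + n-1}{\min(\beta,\alpha q)+n-1}$ plus slack. Everything else is a direct rearrangement using the identity $(\beta+n-1)(\alpha+n-1) = L_2 + \beta(n-1)$.
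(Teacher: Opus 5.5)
Your first two ranges are correct and essentially follow the paper. On $(\frac{1}{\alpha+1}, q_1]$, your direct bound $s(q) \le \alpha q_1 + n-1$ is a slightly cleaner substitute for the paper's monotonicity argument. On $[q_1,1]$ your argument matches the paper's Case 2a.

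\textbf{The gap is the range $q \ge 1$, where you try to show $\mu(q,z^*(q)) \ge \mu_4$.}

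For $q \ge \beta/\alpha$, one has exactly $\mu(q, z^*(q)) = 1 + \frac{n-1}{\beta+n-1}$. Comparing this with $\mu_4 = 1 + \beta(n-1)/L_2$ is equivalent to $L_2 \ge \beta(\beta+n-1)$, that is, $(n-1)^2 \ge (\beta-\alpha)(\beta+n-1)$. Your reduced inequality $\alpha(\beta+n-1)+(n-1)^2 \ge \beta^2$ is missing a $+\beta(n-1)$ on the right. That omission is why it looked like a consequence of Constraint 1.

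The correct inequality does not follow from Constraints 1--4. Take $n-1 = 10$, $\alpha = 100$, $\beta = 104$. These satisfy all four constraints; for instance, Constraint 3 reads $110^3 = 1331000 > 104 \cdot 12600 = 1310400$. Yet $1 + \frac{10}{114} \approx 1.0877 < \mu_4 = 1 + \frac{1040}{11500} \approx 1.0904$. So ``$\ge \mu_4$ throughout'' is false in general, which is exactly why the lemma is stated with $\min(\mu_2,\mu_4)$.

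At the specific values used in Theorem 1, the inequality does hold numerically, since $(\beta-\alpha)(\beta+n-1) \approx 0.236(n-1)^2$. But the lemma sits inside a reduction that assumes only Constraints 1--4, and your stated justification is wrong regardless.

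Neither of your fallbacks rescues this range:
\begin{itemize}
\item The bound $1/q$ is at most $1$ for $q \ge 1$, and it is not a term of $\mu(q,z)$ in this case anyway.
\item For $q \ge q_1$ we have $r_1^*(q)/r_1^+(q) = 1$, and $\frac{\beta+n-1}{\min(\beta,\alpha q)+n-1} = 1$ once $\alpha q \ge \beta$.
\end{itemize}

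The missing step is to compare with $\mu_2$ on $q \ge 1$. Use the closed form $\mu_2 = 1/q_2 = \frac{(n-1)+\sqrt{L_2}}{\beta+n-1}$. The bound $\beta \le \alpha+n-1$ from Constraint 1 gives $L_2 \le (\alpha+n-1)^2$. Then $\alpha \le \beta$ gives $\mu_2 \le \frac{\alpha+2(n-1)}{\beta+n-1} \le 1 + \frac{n-1}{\beta+n-1}$, which is precisely the lower bound you already established on $q \ge 1$.
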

\begin{proof}
Let $h(q) \defeq \beta + (n-1)\max(0, 1-q)$.
Then $r_1^*(q) = \max(h_1(q), r_1^+(q))$.
Using techniques from the proof of \cref{thm:lne-cex:q1},
one can easily show that $h_1(q) \ge r_1^+(q) \iff q \le q_1$
and $h_1(q) \le r_1^+(q) \iff q \ge q_1$.

\textbf{Case 1}: $q \in (\frac{1}{\alpha+1}, q_1]$. Then
\[ \frac{r_1^*(q)}{r_1^+(q)} = \max\left(1, \frac{h_1(q)}{r_1^+(q)}\right). \]
Since $h_1(q)$ is non-increasing in $q$ and $r_1^+(q)$ is non-decreasing in $q$,
we get that $r_1^*(q)/r_1^+(q)$ is non-increasing in $q$.
\[ \frac{r_1^*(q)+r_2^*(q)}{s(q)} = \frac{\beta + n-1}{s(q)}. \]
Since $s(q)$ is non-decreasing in $q$, we get that $(r_1^*(q)+r_2^*(q))/s(q)$
is non-increasing in $q$. Thus,
$\mu(q, z^*(q))$ is non-increasing in $q$.
So, $\mu(q, z^*(q)) \ge \mu(q_1, z^*(q_1)) = \mu_4$
by \cref{thm:lne-cex:mu4}.

\textbf{Case 2}: $q \ge q_1$.
\\ Then $q_1 \ge 1/\alpha$ by \cref{thm:lne-cex:q1}.
Also, $r_1^*(q)/r_1^+(q) = 1$.
Since $r_1^*(q) + r_2^*(q) = r_1^+(q) + r_2^+(q) = s(q) + d(q)$, we get
\[ \frac{r_1^*(q)+r_2^*(q)}{s(q)} = 1 + \frac{d(q)}{s(q)}
= 1 + \frac{(n-1)\min(q, 1)}{\min(\beta, \alpha q) + (n-1)}. \]

\textbf{Case 2a}: $q \in [q_1, 1]$. Then
\[ \frac{r_1^*(q)+r_2^*(q)}{s(q)} = 1 + \frac{(n-1)q}{\alpha q + n-1}. \]
One can show that this is increasing in $q$.
Thus, for $q \in [q_1, 1]$, we have $\mu(q, z^*(q)) \ge \mu(q_1, z^*(q_1)) = \mu_4$
by \cref{thm:lne-cex:mu4}.

\textbf{Case 2b}: $q \ge 1$. Then
\[ \frac{r_1^*(q)+r_2^*(q)}{s(q)} = 1 + \frac{n-1}{\min(\beta, \alpha q) + (n-1)}, \]
which is non-increasing in $q$.
Thus, for $q \ge 1$, we get
\[ \mu(q, z^*(q)) \ge \mu(\infty, z^*(\infty)) = 1 + \frac{n-1}{\beta + n-1}. \]
To complete the proof, we need to show that
\[ 1 + \frac{n-1}{\beta+n-1} \ge \mu_2. \]
This holds because
\begin{align*}
\mu_2 &= \frac{\sqrt{(n-1)^2+(n-1)\alpha+\alpha\beta}+(n-1)}{\beta + n-1}
\\ &\le \frac{\sqrt{(n-1)^2+(n-1)\alpha+\alpha(\alpha + n-1)}+(n-1)}{\beta + n-1}
    \tag{since $\beta \le \alpha + n-1$ by Constraint \ref{item:lne-cex:c1}}
\\ &= \frac{\alpha+2(n-1)}{\beta + n-1}
    \le \frac{\beta+2(n-1)}{\beta + n-1}
    = 1 + \frac{n-1}{\beta + n-1}.
\end{align*}

Thus, in all cases, we get that $\mu(q, z^*(q)) \ge \min(\mu_2, \mu_4)$.
\end{proof}

\section{Details on Non-Existence of NE for Convex Pricing}
\label{sec:plc-cex-extra}

We begin by investigating buyers' behavior for convex pricing functions.
A useful tool for this is the sub-derivative.

\begin{definition}[Sub-derivative]
\label{defn:sub-derivative}
Let $p: [0, 1] \to \mathbb{R}$ be a continuous convex function.
$\alpha \in \mathbb{R}$ is said to be a sub-derivative of $p$ at $z \in [0, 1]$ if
for all $x \in [0, 1]$, we have $\alpha\cdot(z - x) \le p(x) - p(z)$.
\end{definition}

Sub-derivatives of a continuous convex function $p: [0, 1] \to \mathbb{R}$ have two useful properties:
\begin{itemize}
\item \textbf{Surjectivity}: For any $\alpha \in \mathbb{R}$, $\exists x \in [0, 1]$ such that
    $\alpha$ is a sub-derivative of $p$ at $x$.
\item \textbf{Monotonicity}: Let $x_1 < x_2$, $\alpha_1$ be a sub-derivative of $p$ at $x_1$,
    and $\alpha_2$ be a sub-derivative of $p$ at $x_2$. Then $\alpha_1 \le \alpha_2$.
\end{itemize}

Consider two datasets having convex pricing functions $p_1$ and $p_2$, respectively.
Buyers would then have to decide how much of each dataset to purchase.
Formally, a buyer $i$ having budget $b_i$ and utilities $\tau_{i,1}$ and $\tau_{i,2}$
for the datasets would face the following optimization problem:
\[ \maximize_{x_1, x_2 \in [0, 1]} \; \tau_{i,1}x_1 + \tau_{i,2}x_2
    \text{ subject to } p_1(x_1) + p_2(x_2) \le b_i. \]
Then seller $j \in [2]$ earns revenue $p_j(x_j)$ from buyer $i$.
There may be multiple optimal solutions, and we say that a buyer is \emph{biased} towards seller $j$
if among the optimal solutions, she picks the one maximizing $x_j$.

In \cref{sec:notation:revenue}, we showed that for linear pricing,
the above optimization problem is the fractional knapsack problem,
and buyers solve it by ordering the datasets in bang-per-buck order and purchasing greedily.
A similar observation can be made for convex prices too.
For two convex-priced datasets, a solution $(x_1, x_2)$ is optimal iff the following hold:
\begin{itemize}
\item $p_1(x_1) + p_2(x_2) = \min(b_i, p_1(1) + p_1(2))$.
\item $\alpha_1/\tau_{i,1} = \alpha_2/\tau_{i,2}$, where $\alpha_j \in \mathbb{R}_{\ge 0}$
    is a sub-derivative of $p_j$ at $x_j$ for all $j \in [2]$.
\end{itemize}

We are now ready to prove the main theorem.

\thmConvexNeCex*
\begin{proof}
The total budget among the buyers is $(n-1) + (n+1) = 2n$.
For any pricing function selected by a seller, we show that the other seller
can pick a pricing function that gives her revenue more than $n$.
This would prove that a Nash equilibrium doesn't exist,
because in every pair of pricing functions, some seller will always earn at most $n$,
and that seller can profitably deviate to earn more than $n$.

By symmetry, we only need to prove one side: given seller 1's pricing function $p_1$,
we need to find a pricing function $p_2$ such that seller 2's revenue is strictly more than $n$.

Assume \wLoG{} that buyers are biased towards seller 2,
since she can get nearly the same revenue as the biased case
by reducing her pricing function by a factor $(1-\eps)$
for an infinitesimally small $\eps > 0$.

For all $j \in [2]$, let $r_j(p_1, p_2)$ denote seller $j$'s revenue for pricing profile $(p_1, p_2)$
(when buyers are biased towards seller 2).
If $r_2(p_1, p_1) > n$, then we are done. Now assume $r_2(p_1, p_1) \le n$.

\paragraph{Case 1:} $p_1(1) < (n+1)/2$.
\\ If $p_1(1) < 1$, then seller 1's revenue is less than $n$.
If we define $p_2(x) \defeq (n+1)x$, then $r_2(p_1, p_2) > n$, and we are done.
Now assume $p_1(1) \ge 1$.

When prices are $(p_1, p_1)$, the poor buyers exhaust their budgets, and the rich buyer has
budget $\mu \defeq (n+1) - 2p_1(1)$ left after purchasing both datasets.
Let $\eta$ be the total revenue earned by seller 2 from poor buyers for prices $(p_1, p_1)$.
Her revenue from the rich buyer is $p_1(1)$. Thus, $r_1(p_1, p_1) \le r_2(p_1, p_1) = \eta + p_1(1)$.
Moreover, $r_1(p_1, p_1) + r_2(p_1, p_1)$ equals the total expenditure $2n - \mu$,
so $r_2(p_1, p_1) \ge n - \mu/2$.

Pick $\delta$ in $(0, \mu/(2\eta))$.
We obtain $p_2$ by shrinking $p_1$ along the $x$ and $y$ axes by a factor of $1-\delta$,
and setting $p_2$'s derivative in the interval $[1-\delta, 1]$ to a large value
so that $p_2(1) = p_1(1) + \mu$. Formally, define $p_2$ as follows:
\[ p_2(x) \defeq \begin{cases}
(1-\delta)p_1(x/(1-\delta)) & \text{ if } x \le 1-\delta
\\ p_1(1) \cdot x + \mu \cdot \frac{x - (1-\delta)}{\delta} & \text{ if } x > 1-\delta
\end{cases}. \]
Thus, when prices are $(p_1, p_2)$,
the rich buyer purchases both datasets fully and exhausts her budget.

The total revenue earned by seller 2 from poor buyers for prices $(p_1, p_2)$ is at least
$(1-\delta)\eta$. Her revenue from the rich buyer is
$p_2(1) = p_1(1) + \mu$. Thus,
\begin{align*}
r_2(p_1, p_2) &\ge (1-\delta)\eta + (p_1(1) + \mu) = r_2(p_1, p_1) + (\mu - \eta\delta)
\\ &> (n - \mu/2) + \mu/2 = n.
\end{align*}

\paragraph{Case 2:} $p_1(1) \ge (n+1)/2$.
\\ Then every buyer exhausts her budget at prices $(p_1, p_1)$, so $r_1(p_1, p_1) + r_2(p_1, p_1) = 2n$.
Since $r_1(p_1, p_1) \le r_2(p_1, p_1) \le n$, we get $r_1(p_1, p_1) = r_2(p_1, p_1) = n$.

Let $\alpha$ be the smallest number in $(0, 1)$ such that $p_1(\alpha) + \alpha \cdot p_1'(\alpha) = 1$,
for some sub-derivative $p_1'(\alpha)$ of $p_1$ at $\alpha$.
Such an $\alpha$ and $p_1'(\alpha)$ always exist, since $p_1(1) \ge (n+1)/2 > 1$, $0 \cdot p_1'(0) = 0 < 1$,
and $p_1(x) + x \cdot p_1'(x)$ is monotonic in $x$.

Let $\beta$ be the smallest number in $(\alpha, 1]$ such that
$p_1(\beta) - p_1(\alpha) + (\beta - \alpha) \cdot p_1'(\beta) = n$,
for some sub-derivative $p_1'(\beta)$ of $p_1$ at $\beta$.
Such a $\beta$ and $p_1'(\beta)$ always exist, since $p_1'(1)$ can be made arbitrarily large.
Define $p_2$ as
\[ p_2(x) \defeq \begin{cases}
p_1'(\alpha) \cdot x & \text{ if } 0 \le x \le \alpha
\\ \alpha \cdot p_1'(\alpha) + p_1'(\beta) \cdot (x - \alpha) & \text{ if } \alpha < x \le 1
\end{cases}. \]
Then the poor buyers will purchase an $\alpha$ fraction of each dataset,
and the rich buyer will purchase at least a $\beta$ fraction of dataset 2.
All buyers will exhaust their budgets.
\[ r_2(p_1, p_2) - r_1(p_1, p_2)
\ge n(\alpha \cdot p_1'(\alpha) - p_1(\alpha)) + ((\beta - \alpha) \cdot p_1'(\beta) - (p_1(\beta) - p_1(\alpha))) \]
Both these terms are non-negative by convexity of $p_1$.
If any of them are positive, we get $r_2(p_1, p_2) > r_1(p_1, p_2)$, which implies $r_2(p_1, p_2) > n$
(since the total expenditure is $2n$) and we are done.
Now assume that these terms are 0. This implies that $p_1(x) = p_2(x)$ for all $x < \beta$.

We get $\alpha \cdot p_1'(\alpha) = 1/2$ and $(\beta - \alpha) \cdot p_1'(\beta) = n/2$.
We price dataset 2 at $p_3(x) = p_1'(\beta)x$.
If $p_1'(\alpha) = p_1'(\beta)$, then $r_2(p_1, p_3) = (n-1) + (n+1)/2 > n$.
Otherwise, $r_2(p_1, p_3) = (n-1)/2 + \min(n+1/2, p_1'(\beta))$.
Let $\nu \defeq \alpha \cdot (p_1'(\beta) - p_1'(\alpha)) > 0$. Then
\[ p_1'(\beta) \ge \beta \cdot p_1'(\beta)
= \nu + \alpha \cdot p_1'(\alpha) + (\beta - \alpha) p_1'(\beta) = \nu + (n+1)/2, \]
Thus, $r_2(p_1, p_3) \ge n + \min(\nu, n/2) > n$.
This concludes the proof.
\end{proof}

\section{NP-Hardness for Finding the Best Response}
\label{sec:np-hardness-br}

In this section, we show that it is \NP-hard for a seller to find their best response to other sellers' strategies $\ell_{-j}$.
Fix other sellers' strategies.
Denote by $\gamma_{i,s}$ the remaining budgets of buyer $i$ before the $s$-th shard of seller $j$, assuming that the buyer does not spend any money on any seller $j$'s data.
Let $\bm{\ell}_j= (\ell_{j,s})_{s\in [m_j]}$ be seller $j$'s strategy.
Then buyer $i$'s remaining budget before shard $j$ is given by $\max(0, r_{i,s} - \sum_{s' < s} p_{j, s'}\cdot \ell_{j, s'})$.
Therefore, the optimal strategy for seller $j$ can be characterized by the following program:
\begin{equation}
\begin{aligned}
\max_{\bell, \z}\quad  & \sum_{i=1}^n \sum_{s=1}^{m_j}  z_{i,s}\\
\text{subject to} \quad & z_{i,s} \le p_{j, s}\cdot \ell_{s} \\
& z_{i, s} \le \max\left(0, \gamma_{i, s}- \sum_{s'<s} p_{j, s'} \cdot \ell_{s'}\right) \\
& \sum_{s=1}^{m_j} \ell_s = 1; \quad  \ell_s \ge 0
\end{aligned}
\label{eq:best-response}
\end{equation}
We note that the above program is non-convex due to the second constraint---the right-hand side function $\max(0, \gamma_{i, s}- \sum_{s'<s} p_{j, s'} \cdot \ell_{s'})$ is non-concave---which makes the feasible domain non-convex.
The non-convexity of the feasible domain further leads to the hardness.
Below, we perform an equivalent transformation of the above program.
It can be observed that the optimal value is achieved when $z_{i,s} = \min(p_{j,s}\ell_s, \gamma_{i, s} - \max(0, \sum_{s'<s} p_{j, s'}\ell_{s'}))$.
Hence, the original program~(\ref{eq:best-response}) can be reformulated as
\begin{align*}
\max_{\bell}\quad  & \sum_{i=1}^n\sum_{s=1}^{m_j}\min \left(p_{j,s}\ell_s, \max\left(0, \gamma_{i, s} - \sum_{s'<s} p_{j, s'}\ell_{s'}\right)\right)\\
\text{subject to} \quad &\bell\in \Delta_{m_j}
\end{align*}
where $\bm{\ell} = (\ell_1, \dots, \ell_{m_j})$ and $\Delta_{m_j}$ is the $m_j$-simplex.
Let $L_s = \sum_{d=1}^{s} p_{j,s} \cdot \ell_{d}$ and $L_{0} = 0$.
Let $h_{i,s}(t) = \boolOne[\gamma_{i,s} \ge t]$.
Then the inner term of the objective value can be rewritten as
$$
\min \left(p_{j,s}\ell_s, \max\left(0, \gamma_{i, s} - \sum_{s'<s} p_{j, s'}\ell_{s'}\right)\right)  = \int_{L_{s-1}}^{L_s} h_{i, s}(t)dt
$$
Therefore, we can rewrite the original program as follows:
\begin{align*}
\sum_{i=1}^n\sum_{s=1}^{m_j} \int_{L_{s-1}}^{L_s} h_{i, s}(t)dt &= \sum_{s=1}^{m_j} \int_{L_{s-1}}^{L_s} \sum_{i=1}^n h_{i,s}(t)dt \\
& \triangleq \sum_{s=1}^{m_j} \int_{L_{s-1}}^{L_s}  h_{s}(t)dt  \tag{let $h_s(t) = \sum_{i=1}^n h_{i,s}(t)$} \\
& = \sum_{s=1}^{m_s} \int_{0}^{L_s} h_s(t) dt - \int_{0}^{L_{s-1}} h_s(t) dt \\
& = \sum_{s=1}^{m_s} H_s(L_s) - H_{s}(L_{s-1}) \tag{$H_s(x) = \int_{0}^xh_s(t)dt$}\\
& = \sum_{s=1}^{m_j} \Phi_s(L_s) \tag{define $\Phi_s = H_s - H_{s+1}$ if $s<m_j$ and $\Phi_{m_j} = H_{m_j}$}
\end{align*}
Note that, $\bell \ge \bm{0}$ is equivalent to $0 \le L_{1} \le \dots \le L_{m_j}$.
Meanwhile, $\norm{\bell}_1 = 1$ is equivalent to
\begin{align*}
\sum_{s=1}^{m_j} \frac{L_s - L_{s-1}}{p_s} = 1 \iff  \sum_{s=1}^{m_j-1} L_s\cdot \left(\frac{1}{p_s} - \frac{1}{p_{s+1}}\right) + \frac{L_{m_j}}{p_{m_j}} = 1
\end{align*}
Let $\beta_s = 1/p_s - 1/p_{s+1}$ for any $s\in [m_j - 1]$ and $\beta_s = 1/p_{m_j}$ for $s = m_j$.
Each function $\Phi_s(\cdot)$ is piecewise-linear and monotone increasing.
For $s\in [m_j-1]$,
\begin{align}
\Phi_s(x) &= H_s(x) - H_{s+1}(x) \nonumber\\
& = \int_0^x h_s(t) - h_{s+1}(t)dt  \tag{$H_s(x) = \int_{0}^x h_s(t)dt$} \\
& = \int_0^x \sum_{i=1}^n \boolOne\left[(\gamma_{i,s} \ge t )\wedge (\gamma_{i, s+1}\le t)\right]dt \tag{as $\gamma_{i,s} \ge \gamma_{i, s+1}$} \\
& = \sum_{i=1}^n \min\left(\gamma_{i, s} - \gamma_{i, s+1}, \max(0, x- \gamma_{i, s+1})\right), \label{eq:phi-s}
\end{align}
Additionally, for $s=m_j$,
\begin{align*}
\Phi_{m_j}(x) & = H_{m_j}(x) = \int_0^x h_{m_j}(t) dt \\
& = \sum_{i=1}^n \int_0^x \boolOne[\gamma_{i, m_j} \ge t] dt  = \sum_{i=1}^n \min(\gamma_{i, m_j}, x)\,.
\end{align*}
Since each $\Phi_s$ is a sum of piecewise-linear, monotone functions, $\Phi_s$ is itself piecewise-linear and monotone.
Due to the monotonicity of $\Phi_s$, we can also relax the equality constraint as $\sum_s L_s\beta_s\le 1$.
The original program~(\ref{eq:best-response}) can be reformulated to the following equivalent program:
\begin{equation}
\begin{aligned}
\max_{\bm{L}}\quad
& \sum_{s=1}^{m_j} \Phi_s (L_s) \\
\text{subject to}\quad
& \sum_{s=1}^{m_j} L_s\cdot \beta_s \le 1 \\
& 0 \le L_1 \le \cdots \le L_{m_j}.
\end{aligned}
\label{eq:best-response-reformulated}
\end{equation}

However, the presence of both min and max operators in $\Phi_s$ implies that it is neither convex nor concave.
Below we show that solving program~(\ref{eq:best-response-reformulated}) is \NP-hard.

\begin{problem}
\textup{\textsc{(X3C)}}
Given a universe $U$ of $3q$ elements and a collection $\mathcal{S}$ of subsets of $U$, each of size exactly $3$.
Output `yes' if there exists an exact cover for $U$, i.e., a sub-collection $\mathcal{S}' \subseteq \mathcal{S}$ such that every element in $U$ occurs in exactly one member of $\mathcal{S}'$ and $\abs{\mathcal{S}'} = q$, and `no' otherwise.
\end{problem}

\begin{theorem}
Finding the best response pricing strategy given other sellers' strategies is \NP-hard.
\end{theorem}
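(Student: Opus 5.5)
We reduce from \xc{} to the decision version of program~(\ref{eq:best-response-reformulated}): given the residual budgets $\gamma_{i,s}$ (which we can realize by choosing the other sellers' shard prices, shard lengths, and the buyers' values), the prices $p_{j,1}<\dots<p_{j,m_j}$, and a threshold $T$, decide whether some feasible $\bm L$ achieves objective at least $T$. The guiding idea is that by \eqref{eq:phi-s} each $\Phi_s$ is a sum of ``ramp'' functions $\min(\gamma_{i,s}-\gamma_{i,s+1},\max(0,x-\gamma_{i,s+1}))$. Such a function contributes nothing until $L_s$ crosses $\gamma_{i,s+1}$, then grows with slope one, and then saturates. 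Ramps with different thresholds make $\Phi_s$ non-concave, and the budget $\sum_s \beta_s L_s\le 1$ makes the choices compete, in a knapsack-like way. This is essentially a \kf{}-type structure.

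\textbf{Construction.} Introduce one seller-$j$ price level (one index $s$) per set $S\in\mathcal S$, and one buyer per element $u\in U$, plus some auxiliary ``blocking'' buyers. Choose the $\gamma_{i,s}$ so that a buyer $u$ contributes to $\Phi_s$ only when $u\in S$. We want the monotone chain $L_1\le\dots\le L_{m_j}$ to encode a $0/1$ decision per set: $L_s$ either stays at the value it inherits from $L_{s-1}$, or jumps by a fixed amount $\Delta$ that ``activates'' the three elements of $S$. Set the prices so that the $\beta_s=1/p_s-1/p_{s+1}$ give each activation the same budget cost. Then the total budget allows exactly $q$ activations.

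\textbf{Making the encoding exact.} The reward for covering an element must not be double counted. Since each element-buyer $i$ has a single finite budget, once it is spent on an earlier shard its later $\gamma$'s drop to zero. We arrange that an element covered twice yields no extra revenue, so the objective equals a constant times the number of \emph{distinct} covered elements. Threshold $T$ then corresponds to $3q$ covered elements. Completeness is direct: an exact cover gives $q$ activations covering $3q$ elements. Soundness needs two facts: fractional or intermediate choices of $L_s$ are never strictly better, and with $q$ activations the objective reaches $T$ only if the chosen sets are disjoint and cover $U$. Finally, we must show that every $\gamma_{i,s}$ profile we need is realized by an actual market. For this we add dummy sellers whose shards precede seller $j$'s shards in each buyer's bang-per-buck order and consume prescribed amounts of budget, with the values $\tau_{i,j}$ tuned to produce the desired interleaving $\sigma_i$. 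All numbers must be polynomially bounded.

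\textbf{Main obstacle.} The hardest step is soundness against fractional $\bm L$. Each $\Phi_s$ is piecewise linear, so the optimum lies at a vertex of the polyhedral pieces, but a partially activated level could in principle trade off favorably. We would first try to make every ramp steep relative to its budget cost, by choosing thresholds and gaps so that each slope-one segment over budget cost $\beta_s$ is profitable only when completed. Then an exchange argument pushes any optimal $\bm L$ to an integral configuration without decreasing the objective. If that fails, we would instead reduce from the \kf{} problem with a single fractional item and encode \xc{} through it.
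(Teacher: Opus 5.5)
There is a genuine gap, and it lies in the integral encoding itself. It arises before the fractional-solution issue you identify as the main obstacle.

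\textbf{Why the encoding fails.} In the reformulated program the objective is $\sum_s \Phi_s(L_s)$. So level $s$ contributes a function of the \emph{cumulative} charge $L_s$ only, never of the increment $L_s-L_{s-1}$.

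Equivalently, telescoping the per-shard terms $\min(p_{j,s}\ell_s,\max(0,\gamma_{i,s}-L_{s-1}))$ shows that buyer $i$ pays seller $j$ exactly $\max_s\min(L_s,\gamma_{i,s})$. For $0<t\le\gamma_{i,1}$, she pays at least $t$ iff $L_{\sigma}\ge t$ for the single level $\sigma=\max\{s:\gamma_{i,s}\ge t\}$. Hence no buyer can detect that a particular set was ``activated''.

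Concretely, suppose $u$ lies in the sets at levels $s_1<s_2<s_3$ and you place her gaps there. Then she pays in full iff $L_{s_1}\ge\gamma_{u,1}$.
\begin{itemize}
\item For an activation at $s_1$ alone to achieve this, you need $\gamma_{u,1}\le\Delta$.
\item But then activating \emph{any} set before $s_1$, whether or not it contains $u$, also pays her in full.
\item Activating $S_{s_2}$ or $S_{s_3}$ has no effect on this condition at all.
\end{itemize}
So the objective does not count distinct covered elements, and the correspondence with exact covers fails in both directions.

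Two smaller slips:
\begin{itemize}
\item A jump of $\Delta$ at level $s$ is inherited by every later $L_{s'}$, so it costs $\Delta\sum_{s'\ge s}\beta_{s'}=\Delta/p_{j,s}$. This cannot be equal across levels, since the prices are distinct.
\item The $\gamma_{i,s}$ are constants fixed by $\ell_{-j}$; they do not ``drop to zero''. Seller $j$'s own spending enters only through $L_{s-1}$.
\end{itemize}

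\textbf{What is missing, and how the paper does it.} You need to let the \emph{absolute} value of $L_s$ encode a choice, and to enforce coverage numerically. The paper's construction is as follows.
\begin{itemize}
\item It uses only $q+1$ levels. Levels $1,\dots,q$ are identical ``slots'' with $\beta_s=\Delta$.
\item Each slot has a staircase $\Phi_s$ built from $m$ item groups of buyers. Setting $L_s=w_r$ means the slot picks $S_r$ and earns $\frac12+w_r$, where $w_r\propto\sum_{u\in S_r}B^{-u}$.
\item A dummy buyer at level $q+1$ forces $L_{q+1}\ge 1$. This leaves budget $\sum_{s\le q}L_s\le W_{X3C}$.
\item Reaching $\frac q2+W_{X3C}$, plus the constant from level $q+1$, then requires $q$ picks of total weight exactly $W_{X3C}$. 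With $B$ large enough that no carries occur in base $B$, this is exactly an exact cover.
\item Each ramp of $\Phi_s$ lies on the line through the origin with slope $(\frac12+w_r)/w_r$. An exchange argument therefore leaves at most one slot fractional, and that slot forfeits part of its $\frac12$ bonus.
\end{itemize}
Your fallback via \kf{} points in this direction, but the proposal contains none of this construction.
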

\begin{proof}
By the above calculation, we know that it is equivalent to prove the \NP-hardness of solving program~(\ref{eq:best-response-reformulated}).
Below, we show a reduction from \xc to program~(\ref{eq:best-response-reformulated}).
Given a \xc instance, without loss of generality, we assume that the elements in $U$ are positive integers $1, \dots, \abs{U}$.
Let $m$ be the number of sets in $\mathcal{S}$.
For every set $S_i\in \mathcal{S}$, we create a weight $w_i = \sum_{u\in S_i}B^{-u}$.
Let $W$ be the greatest common multiple of the numerators of all weights.
Without loss of generality, we assume $W$ is an even number.
Then we normalize the weights by setting $w_i$  as $w_i / W$.
Associate item $i$ with a value of $v_i = \frac12 + w_i$.
Next, we construct a best-response pricing instance as follows.
\begin{itemize}[leftmargin=*]
\item The number of shards $m_j$ is equal to $q+1$.
\item Let $W_{X3C} = \sum_{i\in U} \frac{B^{-i}}{W}$.
We set the prices as follows.
Let $p_{j, q+1}= \frac{1+ W\cdot B^{\abs{U}}}{\Delta}$ and $p_{j, s} = \frac{p_{j, s+1}}{1+ \Delta\cdot p_{j, s+1}}$.
Hence, $\beta_1 = \dots = \beta_q = \Delta$ and $\beta_{q+1} = \Delta/p_{j, q+1}$, where $\Delta = (1/p_{j,q+1} + W_{X3C})^{-1}$.

\item We construct $q$ groups of buyers $\G^1, \dots, \G^q$ and each group $\G^s$ consists of $m$ subgroups of buyers $\G_1^s,\dots, \G_m^s$.
We refer to the outer groups as \emph{shard groups} and the inner groups as \emph{item groups}.
For any buyer $i$ in the $s$-th shard group $\G^s$, we set
$$\gamma_{i, 1} = \dots = \gamma_{i, s} > \gamma_{i, s+1} = \gamma_{i, s+2} = \dots = \gamma_{i, q+1},
$$
where there is only a gap between $\gamma_{i, s}$ and $\gamma_{i, s+1}$.
The $r$-th item group $\G_r^s$ contains $\frac{v_r}{w_r} = \frac{1}{2w_r} + 1$ buyers.
For any buyer $i$ in the $r$-th group, we set $\gamma_{i, s} = w_r$ and $\gamma_{i, s+1} = w_r - \frac{1}{\abs{\G_r^s}}\cdot (w_r - w_{r-1}) = w_r (1 -2\frac{w_r - w_{r-1}}{1 + 2w_r})$, where $w_0 = 0$.
Denote $w_r (1 -2\frac{w_r - w_{r-1}}{1 + 2w_r})$ by $w_r^-$.
Therefore, we can assume every $L_s^*$ lies in some interval $(w_r^-, w_r]$ for some $r\in [m]$.
\item We construct a dummy buyer with $\gamma_{i, 1} = \dots = \gamma_{i, q+1} = 1$.
Let $n$ be the number of buyers, and the dummy buyer be the $n$-th buyer.
\end{itemize}

Next, we take a further look at the form of the function $\Phi_s$ in the constructed instance.
For $s\in [q]$, according to Equation~(\ref{eq:phi-s}), the value of $\Phi_s(x)$ depends only on the values of $\gamma_{i,s}$ and $\gamma_{i, s+1}$.
According to our construction of shard groups, only the buyers in the $s$-th shard group $\G^s$ have gaps between $\gamma_{i,s}$ and $\gamma_{i, s+1}$.
The buyers with equal valued $\gamma_{i,s}$ and $\gamma_{i, s+1}$ does affect the value of $\Phi_s(x)$.
Within the shard group $\G^s$, the values of $\gamma_{i,s}$ and $\gamma_{i, s+1}$ are determined by the item group that buyer $i$ belongs to.
The buyers in the $r$-th item group $\G_r^s$ have $\gamma_{i,s} = w_r$ and $\gamma_{i, s+1} = w_r^-$.
It can be verified that $(w_r^-, w_r]$ does not overlap with each other as the weights $w_i$ are set small enough.
By the expression of $\Phi_s(x)$ in~(\ref{eq:phi-s}), the derivative on each interval $(w_r^-, w_r]$ is given by the number of buyers in item groups $\G_{r}^s$.
Therefore, we have $\Phi_s(x)$ is piecewise linear with breakpoints at $w_1^-, w_1, \dots, w_m^-, w_m$ and closed-form as follows:
\begin{equation}
\begin{aligned}
\Phi_s(x) =
    \begin{cases}
    \abs{\G_r^s}\cdot (x - w_r^-) + \frac12 + w_{r-1} & \text{if } x \in (w_r^-, w_r] \\
    \frac12 + w_r & \text{if } x \in (w_{r-1}, w_r^-]
    \end{cases}
\end{aligned}
\label{eq:phi_s_q}
\end{equation}
where we let $w_0 = 0$ and $r\le m$.
An illustration of $\Phi_s(x)$ for $s\in [q]$ is given in Fig.~\ref{fig:phi-s}.

\begin{figure}[t]
\centering
\begin{tikzpicture}[scale=6]
    \draw[->] (0,0) -- (1.05,0) node[right] {$x$};
    \draw[->] (0,0) -- (0,1.05) node[above] {$\Phi_s(x)$};

    \node at (0, 9/16) [left] {\small $\frac12 + w_1$};
    \node at (1/16, 0) [below] {\small $w_1$};
    \node at (0, 5/8) [left] {\small $\frac12 + w_2$};
    \node at (1/8, 0) [below] {\small $w_2$};
    \node at (0, 3/4) [left] {\small $\frac12 + w_3$};
    \node at (1/4, 0) [below] {\small $w_3$};
    \node at (1/2, 0) [below] {\small $\dots$};
    \node at (0, 1) [left] {\small $\frac12 + w_m$};
    \node at (-0.05, 7/8) [left] {\small $\vdots$};
    \node at (3/4, 0) [below] {\small $w_m$};
    \node at (1, 0) [below] {\small $1$};

  \draw[thick]
    (0,0)
    -- (1/16, 9/16)
    -- (9/80, 9/16)
    -- (1/8, 5/8)
    -- (5/24, 5/8)
    -- (1/4, 3/4)
    -- (9/16, 3/4)
    -- (3/4, 1);

    \draw[dotted, thick, red] (0,0) -- ({1/8},{5/8});
    \draw[dotted] (1/16,0) -- (1/16,9/16);
    \draw[dotted, thick, red] (0,0) -- ({1/4},{3/4});
    \draw[dotted] (1/8,0) -- (1/8,5/8);
    \draw[dotted, thick, red] (0,0) -- ({3/4},{1});
    \draw[dotted] (1/4,0) -- (1/4,3/4);
    \draw[dotted] (3/4,0) -- (3/4,1);

    \foreach \x/\y in {
    {0}/{0},
    {1/16}/{9/16},
    {9/80}/{9/16},
    {1/8}/{5/8},
    {5/24}/{5/8},
    {1/4}/{3/4},
    {9/16}/{3/4},
    {3/4}/{1}
    } {
    \fill (\x,\y) circle (0.3pt);
    }
\end{tikzpicture}
\caption{Illustration of function $\Phi_s(x)$}
\label{fig:phi-s}
\end{figure}

In addition, for $s=q+1$, since the dummy buyer has  $\gamma_{n, q+1} = 1$,  we have
\begin{align*}
\Phi_{q+1}(x) = \sum_{i=1}^n \min(\gamma_{i, q+1}, x) = \sum_{i=1}^{n-1} \min(\gamma_{i, q+1}, x) + \min(1, x) \end{align*}
Hence, it can be observed that the derivative of $\Phi_{q+1}(x)$ is at least $1$ when $x< 1$.

\begin{lemma}[Near-Integrality]
\label{lem:almost-integeral}
Let $\bm{L}^* = (L_1^*, \dots, L_{q+1}^*)$ be an optimal solution to Program~(\ref{eq:best-response-reformulated}).
Then, one can construct another optimal solution $\bm{L}^\dag = (L_1^\dag, \dots, L_{q+1}^\dag)$ such that 1) at least $q-1$ of $L_1^\dag, \dots, L_q^\dag$ are in the set $\{w_1, \dots, w_m\}$; 2) at most one of $L_1^\dag, \dots, L_q^\dag$ is not in the set $\{w_1, \dots, w_m\}$ but still located in some interval $(w_r^-, w_r)$; 3) $L_{q+1}^\dag \ge 1$.
\end{lemma}
\begin{proof}
We first claim that, in the optimal solution $\bm{L}^*$, $L_{q+1}^*$ must be no less than one.
Otherwise, assume $L_{q+1}^* < 1$.
Let $L_s^*$ be the smallest positive number among $L_1^*, \dots, L_q^*$.
We can increase $L_{q+1}^*$ by amount of $\frac{\delta}{\beta_{q+1}}$ and decrease $L_s^*$ by the amount of $\frac{\delta}{\beta_s}$ with $L_{q+1}^*$ still no more than $1$.
The change of the objective value is at least
\begin{align*}
\frac{\delta}{\beta_{q+1}} - \frac{\delta}{\beta_s} \cdot \max_{r}\abs{\G_r^s} &= \delta\cdot\Delta \left(1+ W\cdot B^{\abs{U}} - \max_{r}\abs{\G_r^s}\right) \\
&= \delta\cdot\Delta \Big(1+ W\cdot B^{\abs{U}} - \max_{r} \Big(1+ \frac{1}{2w_r}\Big)\Big) \\
& = \delta\cdot\Delta \Big(W\cdot B^{\abs{U}} - \frac{W}{2\min_r\sum_{u\in S_r}B^{-u}}\Big)  > 0,
\end{align*}
which contradicts the optimality of $\bm{L}^*$.
Therefore, $L_{q+1}^*\ge 1$.

Next, we change the optimal solution $\bm{L}^*$ step by step.
It can be observed that all the $\Phi_s(x)$ for $s\in [q]$ have the same form~(\ref{eq:phi_s_q}).
Also $\beta_s$ is identical to $\Delta$ for all $s\in [q]$.
Therefore, we do not force the order of $L_1^*, \dots, L_q^*$ to be maintained during the transformation (otherwise, one can sort and relabel them).

First, if there exist some $L_s^* \in (w_r, w_{r+1}^-]$ for some $r\in [m-1]$ and $s\in [q]$, then we decrease it to $w_r$ which still maintains the linear constraint $\sum_s \beta_s L_s^* \le 1$.
Since the derivative of $\Phi_s(x)$ is zero in $(w_r, w_{r+1}^-]$, the objective value does not decrease, and the order of $L_s^*$'s is still maintained.

If there are more than two $L_s^*$ and $L_t^*$ not located at some $w_r$ for $r\in [m]$ and $s, t\in [q]$, then we proceed as follows.
Suppose $L_s^* \in (w_{r(s)}^-, w_{r(s)}]$ and $L_t^* \in (w_{r(t)}^-, w_{r(t)}]$ for some $r(s), r(t) \in [m]$.
If $r(s) \neq r(t)$, without loss of generality, assume $r(s) < r(t)$.
Then the derivative of $(w_{r(s)}^-, w_{r(s)}]$ is higher than that of $(w_{r(t)}^-, w_{r(t)}]$.
We increase $L_s^*$ and decrease $L_t^*$ by the same amount until one of them reaches the boundary of the interval.
If $r(s) = r(t)$, then without loss of generality, assume $L_s^* < L_t^*$.
Then we decrease $L_s^*$ and increase $L_t^*$ by the same amount until one of them reaches the boundary of the interval.
In either case, the objective value does not decrease.
We can repeat the above procedure and the first step until at most one of $L_1^*, \dots, L_q^*$ is not located at some $w_r$ for $r\in [m]$.

As the above process does not decrease $L_{q+1}^*$, we have constructed another optimal solution $\bm{L}^\dag$ satisfying all the desired properties.
\end{proof}

Lastly, we prove the correctness of the reduction.
If the \xc instance is a yes instance, then we show that the optimal value of program~(\ref{eq:best-response-reformulated}) is at least $q/2 + W_{X3c} + \sum_{i=1}^n \gamma_{i, q+1}$.
Let $\mathcal{S}'$ be the exact cover with $\abs{\mathcal{S}'} = q$.
Without loss of generality, assume that the sets in $\mathcal{S}'$ are $S_1, \dots, S_q$.
We construct the optimal solution $\bm{L}^*$ as: $L_s^* = w_s$ for any $s\in [q]$ and $L_{q+1}^* = 1$.
It can be verified that the linear constraint is satisfied.
The objective value is given by
\begin{align*}
\sum_{s=1}^{q} \Phi_s(L_s^*) + \Phi_{q+1}(L_{q+1}^*) &=\sum_{s=1}^{q} \left(\frac12 + w_s\right) + \sum_{i=1}^n \min(\gamma_{i, q+1}, 1) \\
& = \frac{q}{2} + W_{X3C} + \sum_{i=1}^n \gamma_{i, q+1}.
\end{align*}

Next, if the \xc instance is a no instance, then we show the optimal value of program~(\ref{eq:best-response-reformulated}) is strictly less than $q/2 + W_{X3c} + \sum_{i=1}^n \gamma_{i, q+1}$.
Otherwise, assume that there exists an optimal solution $\bm{L}^*$ with objective value of at least $q/2 + W_{X3c} + \sum_{i=1}^n \gamma_{i, q+1}$.
By \Cref{lem:almost-integeral}, we can construct another optimal solution $\bm{L}^\dag$ such that at least $q-1$ of $L_1^\dag, \dots, L_q^\dag$ are in the set $\{w_1, \dots, w_m\}$; at most one of $L_1^\dag, \dots, L_q^\dag$ is not in the set $\{w_1, \dots, w_m\}$ but still located in some interval $(w_r^-, w_r)$; and $L_{q+1}^\dag \ge 1$.
If all the $L_1^\dag, \dots, L_q^\dag$ are in the set $\{w_1, \dots, w_m\}$, then we can find $q$ sets in $\mathcal{S}$ corresponding to these weights.
It can be verified that these $q$ sets form an exact cover for $U$, which contradicts the assumption.
Otherwise, suppose $L_s^\dag = w_{r(s)}$ for any $s\in [q]\setminus \{j\}$ and $L_j^\dag \in (w_{r}^-, w_{r})$ for some $j\in [q]$ and $r\in [m]$.
Let $\alpha = L_j^\dag / w_{r(j)}$.
Then, we have
\begin{align*}
\Phi_q(L_s^\dag) & =\frac12 + w_{{r(j)}-1} + \abs{\G_r^{r(j)}}\cdot (L_{r(j)}^\dag - w_{r(j)}^-) \\
& =  \abs{\G_r^{r(j)}}\cdot L_{r(j)}^\dag \\
& = \alpha\cdot \left(\frac12 + w_{r(j)}\right)\\
\Phi_{q+1}(L_{q+1}^\dag)& =  \sum_{i=1}^n \min(\gamma_{i, q+1}, L_{q+1}^\dag) \le \sum_{i=1}^n \gamma_{i, q+1}
\end{align*}
Therefore,
\begin{align*}
\sum_{s=1}^{q} \Phi_s(L_s^\dag) + \Phi_{q+1}(L_{q+1}^\dag)
&\le \sum_{s\in [q]\setminus \{j\}} \left(\frac12 + w_{r(s)}\right) + \alpha\cdot \left(\frac12 + w_{{r(j)}}\right) + \sum_{i=1}^n \gamma_{i, q+1} \\
& = \frac12\cdot(q-1 + \alpha) + \sum_{s\in [q]\setminus \{j\}} w_{r(s)} + \alpha\cdot w_{{r(j)}} + \sum_{i=1}^n \gamma_{i, q+1} \\
& \le \frac12\cdot(q-1 + \alpha) + W_{X3C} + \sum_{i=1}^n \gamma_{i, q+1} \tag{as $\sum_s \beta_s L_s^\dag \le 1$} \\
& < \frac{q}{2} + W_{X3C} + \sum_{i=1}^n \gamma_{i, q+1}, \tag{as $\alpha < 1$}
\end{align*}
which contradicts the optimality.
Therefore, the optimal value of the best response is smaller than the target value, and it concludes the reduction.
\end{proof}

\section{Non-Existence of NE under Discretized PLC pricing}
\label{sec:disc-plc-ne-cex}

In this section, we show a data market instance for which no Nash Equilibrium (NE) exists even with discretized PLC pricing.

\begin{example}
\label[example]{ex:no-plc-ne}
Consider 2 sellers and $n \ge 4$ buyers. Each buyer has a budget of 1.
For buyer $n$, we have $\tau_{n,1} = 1$ and $\tau_{n,2} = 0$, so we call her the \emph{picky} buyer.
For each buyer $i \in [n-1]$, we have $\tau_{i,1} = \tau_{i,2} = 1$, so we call them \emph{neutral} buyers.
For a neutral buyer, if the bang-per-buck of the two datasets is the same,
we assume that she will purchase dataset 2 first.

Let $\eps \in (0, 1/3]$ such that $1/\eps$ is an odd integer.
Each dataset is partitioned into $1/\eps$ shards (numbered $1, 2, \ldots, 1/\eps$),
where the $k\Th$ shard has price-per-unit $\eps k$.
Each seller $j \in [2]$ must pick a vector $\ellvec_j \defeq (\ell_{j,1}, \ldots, \ell_{j,1/\eps})$
of shard lengths that sum to 1.
\end{example}

We will show that no NE exists for \cref{ex:no-plc-ne}.
We will prove this by contradiction: we assume that $\ellvec \defeq (\ellvec_1, \ellvec_2)$ is an NE.
We use this assumption to narrow down the structure of $\ellvec$, and then finally derive a contradiction.

Let $x_{j,k}$ be the length of shard $k \in [1/\eps]$ of dataset $j \in [2]$ that is purchased by a neutral buyer.
(So a neutral buyer spends $p_{j,k}x_{j,k}$ on that shard.)
If $x_{j,k} = \ell_{j,k}$, we say that the shard is \emph{fully sold}.
If $x_{j,k} = 0$, the shard is \emph{fully unsold}.
Otherwise, the shard is partially sold.
Note that the shard is both fully sold and fully unsold if $\ell_{j,k} = 0$.

\begin{observation}[first $1/\eps-1$ shards of dataset 1 are fully sold]
\label{thm:s1-every-shard}
For all $k \in [1/\eps-1]$, shard $k$ of dataset 1 is fully sold.
Otherwise seller 1 can reduce $\ell_{1,k}$ to $x_{1,k}$ and increase $\ell_{1,1/\eps}$.
This would increase her revenue, since the picky buyer would pay more.
Since we assumed that $\ell$ is NE, this is not possible, so every shard in $[1/\eps-1]$ is fully sold.
\end{observation}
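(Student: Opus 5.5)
The plan is a standard exchange argument, proved by contradiction. Suppose $\ellvec$ is an NE, and some shard $k \in [1/\eps-1]$ of dataset 1 is not fully sold, i.e., $x_{1,k} < \ell_{1,k}$. Let $\delta \defeq \ell_{1,k} - x_{1,k} > 0$, and consider the deviation $\ellvec_1'$ with $\ell'_{1,k} = x_{1,k}$ and $\ell'_{1,1/\eps} = \ell_{1,1/\eps} + \delta$; all other shards are unchanged. The goal is to show that seller 1's revenue from every neutral buyer is unchanged, while her revenue from the picky buyer strictly increases. This contradicts the NE assumption.

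\textbf{Step 1 (neutral buyers are unaffected).} Shard $k$ was not fully sold, so a neutral buyer (whose purchase is the same for all neutral buyers, since they are identical) stopped buying it. Either she exhausted her budget, or everything of positive value was bought; the latter is impossible because the shard is partially sold. I will argue that she exhausted her budget at or before shard $k$ in her bang-per-buck order. Every shard after it in her order has weakly higher price, and the fixed tie-breaking rule means that shrinking shard $k$ to exactly $x_{1,k}$ and adding length to shard $1/\eps$ does not change what she buys. The shard $1/\eps$ has the maximum price, so it comes after shard $k$ in her order. The added length therefore lies beyond her budget-exhaustion point, and her purchase, hence her payments to both sellers, stays the same. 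I need to check the tie-breaking subtlety: shard $1/\eps$ of dataset 1 may tie with shard $1/\eps$ of dataset 2, where she buys dataset 2 first. This only delays dataset-1 purchases and does not matter, since her budget is exhausted before reaching either.

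\textbf{Step 2 (the picky buyer pays strictly more).} The picky buyer values only dataset 1, so she buys its shards in increasing price order until her unit budget is exhausted or the dataset is fully bought. The total cost of dataset 1 is $\sum_k \eps k\,\ell_{1,k}$, and under the deviation this increases by $\delta\eps(1/\eps - k) > 0$. Her payment is $\min(1, \text{total cost})$. If she was already paying 1 in total, this argument fails, so I must handle that case separately. In that case she pays her full budget both before and after, so seller 1's revenue does not strictly increase. I expect this to be the main obstacle.

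To resolve it, I plan to show that the picky buyer cannot be saturated when a neutral buyer leaves a shard partially sold. Note that total cost is at most $1/\eps \cdot \eps = 1$, with equality only if $\ell_{1,1/\eps} = 1$. In that case no shard $k < 1/\eps$ has positive length, so $\ell_{1,k} = 0$ and shard $k$ is trivially fully sold. Otherwise the total cost is strictly less than 1. The picky buyer then buys everything, and her payment $\min(1,\cdot)$ strictly increases with the total cost. This contradiction completes the argument.
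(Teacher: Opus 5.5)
Your proposal is correct and uses the same exchange argument as the paper: shrink shard $k$ of dataset 1 to $x_{1,k}$ and move the excess length to shard $1/\eps$. Neutral buyers are unaffected because their budget is already exhausted at shard $k$, while the picky buyer pays strictly more. You also make explicit two points the paper leaves implicit: why the neutral buyers' purchases do not change, and why the picky buyer is not budget-saturated (the total cost of dataset 1 is below $1$ whenever $\ell_{1,k}>0$ for some $k<1/\eps$).
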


\subsection{Revenue From Neutral Buyers}

Let $s_j$ be the revenue earned by seller $j$ from each neutral buyer, i.e.,
\[ s_j \defeq \sum_{k=1}^{1/\eps} p_{j,k}x_{j,k}. \]
Since each neutral buyer has a budget of 1, we get $s_1 + s_2 \le 1$.
We now show that neutral buyers exhaust their budget, i.e., $s_1 + s_2 = 1$.
Moreover, shard $1/\eps$ of dataset 1 is fully unsold.

\begin{lemma}
\label{thm:exh-budget}
$s_1 + s_2 = 1$ and $x_{1,1/\eps} = 0$.
\end{lemma}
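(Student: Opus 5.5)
We argue by contradiction, in both cases exhibiting a profitable deviation for one of the sellers, as in \cref{thm:s1-every-shard}. Fix an NE $\ellvec$. By \cref{thm:s1-every-shard}, every shard $k<1/\eps$ of dataset 1 is fully sold. Consequently, whatever a neutral buyer spends on dataset 1 beyond $\sum_{k<1/\eps} p_{1,k}\ell_{1,k}$ must go to shard $1/\eps$, whose price is $1$, the highest possible.

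\textbf{Step 1: $x_{1,1/\eps}=0$.} Suppose a neutral buyer buys a positive amount of shard $1/\eps$ of dataset 1, at price $1$ per unit. Then she has already bought, at price at most $1$, every shard of dataset 2 that precedes it in her order. Because ties go to dataset 2, this means she has bought all of dataset 2 and all of dataset 1 before its last shard. We now let seller 2 deviate. She moves mass from her cheapest shards into shard $1/\eps$, which also has price $1$; with ties broken toward dataset 2, that shard is bought before shard $1/\eps$ of dataset 1. Seller 2 earns nothing from the picky buyer, so her revenue comes only from neutral buyers. Taking length $\delta$ from a shard priced below $1$ and placing it at price $1$ loses at most $\delta(1-\eps)\cdot(n-1)$ from the cheap shard but gains $\delta$ from each neutral buyer who still has budget left. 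Those neutral buyers were spending money on dataset 1's last shard, so they do have budget left, and the deviation strictly increases seller 2's revenue. The edge case to handle is when dataset 2 has no shard priced below $1$, i.e. $\ell_{2,1/\eps}=1$. In that case the buyer spends $1$ on dataset 2 and exhausts her budget before reaching shard $1/\eps$ of dataset 1, which again gives $x_{1,1/\eps}=0$. This yields a contradiction, so $x_{1,1/\eps}=0$.

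\textbf{Step 2: $s_1+s_2=1$.} Suppose $s_1+s_2<1$, so neutral buyers end with leftover budget $\mu>0$. Then they buy both datasets completely, except that by Step 1 they skip shard $1/\eps$ of dataset 1. If that shard has positive length, the buyer would buy it, since it has positive bang-per-buck and she has money left; so either its length is $0$ or the buyer was unable to afford it. Leftover budget means it was affordable, so $\ell_{1,1/\eps}=0$ and neutral buyers buy everything. In that case seller 2 collects $\sum_k p_{2,k}\ell_{2,k}<1$ from each neutral buyer. She deviates by shifting a small length $\delta$ from her cheapest occupied shard $k$ (with $p_{2,k}<1$) to shard $1/\eps$. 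This raises the total price of dataset 2 by $\delta(1-p_{2,k})$. For $\delta$ small enough, the new total stays below the leftover budget $\mu$ plus the current spend, so every neutral buyer still buys all of dataset 2 and seller 2's revenue strictly increases. If instead all of seller 2's mass already sits at price $1$, then $s_2=1$, contradicting $s_1+s_2<1$. Hence $s_1+s_2=1$.

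\textbf{Main obstacle.} The hardest part is Step 1: the bookkeeping of each buyer's greedy order over the shards, including sequential tie-breaking, when seller 2's shards are reshuffled. In particular, we must check that moving mass to the price-$1$ shard does not cause some neutral buyers to stop earlier and lose seller 2 more revenue than she gains. We would control this by bounding the loss per buyer by $\delta$ and the gain by $\delta$ times a factor strictly larger than the loss.
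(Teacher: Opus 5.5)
Your proof is correct and uses the same deviation as the paper: seller 2 moves mass into her price-$1$ shard, which by tie-breaking precedes shard $1/\eps$ of dataset 1, and so absorbs any neutral budget left over after the set $S$ of all other shards. The paper packages your two steps into one by showing $s'_1+s_2=1$ (where $s'_1$ is the neutral spend on dataset 1's first $1/\eps-1$ shards), using a proportional shrink with $\delta$ chosen to fill the budget exactly. Your ``main obstacle'' does not arise, since all neutral buyers are identical and, for small $\delta$, still buy all of $S$.
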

\begin{proof}
Let $s'_1$ be the revenue from a neutral buyer for the first $1/\eps-1$ shards of dataset 1,
i.e., $s'_1 \defeq s_1 - p_{1,1/\eps}x_{1,1/\eps}$. We will show that $s'_1 + s_2 = 1$,
which simultaneously proves that $s_1 + s_2 = 0$ and $x_{1,1/\eps} = 0$.

Let $S$ be the set of all shards of both datasets, excluding shard $1/\eps$ of dataset 1.
Note that shard $1/\eps$ of dataset 1 is last in the MBB ordering.
Suppose $s'_1 + s_2 < 1$. Then neutral buyers are not exhausting their budgets on $S$,
so shards in $S$ are fully sold.

Let $\delta \defeq (1-s'_1-s_2)/(1-s_2) > 0$.
Suppose seller 2 switches to a strategy $\ellvechat_2$ where she shrinks the length of each shard
by a factor $(1-\delta)$, and then adds the remaining length to shard $1/\eps$. Formally,
$\ellvechat_{2,k} \defeq (1-\delta)\ellvec_{2,k} + \delta\boolOne(k = 1/\eps)$ for all $k \in [1/\eps]$.
The new total price of shards in $S$ then becomes $s'_1 + (1-\delta)s_2 + \delta = s'_1 + s_2 + \delta(1-s_2) = 1$.
Hence, all shards in $S$ continue to be fully sold, and the neutral buyers exhaust their budget.
The sellers' new revenues from each neutral buyer are $s'_1$ and $1-s'_1 > s_2$, respectively.
Thus, seller 2's revenue increases, which contradicts the fact that $\ell$ is an NE.
So, we have $s'_1 + s_2 = 1$, and hence, $s_1 + s_2 = 1$ and $x_{1,1/\eps} = 0$.
\end{proof}

Next, we give lower bounds on $s_1$ and $s_2$.

\begin{lemma}
\label{thm:s-lbs}
$s_1 > 0$ and $s_2 \ge (1+\eps)/2$.
\end{lemma}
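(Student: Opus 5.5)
We prove both bounds by exhibiting a profitable deviation whenever the bound fails, which contradicts the assumption that $\ellvec$ is an NE. The setting is \cref{ex:no-plc-ne}. By \cref{thm:exh-budget}, each neutral buyer splits her unit budget as $s_1 + s_2 = 1$. The picky buyer spends only on dataset 1, and each neutral buyer pays seller 1 exactly $s_1$. Hence seller 1's revenue is $r_1 = (n-1)s_1 + \min(1, p_1(1))$, where $p_1(1) = \sum_k \eps k\,\ell_{1,k}$, and seller 2's revenue is $r_2 = (n-1)s_2$.

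\textbf{Bound $s_2 \ge (1+\eps)/2$.} We consider a deviation by seller 2 that undercuts dataset 1 entirely. Seller 2 switches to a single shard at the cheapest price $\eps$, i.e., $\ellvechat_2 = \vecE^{(1)}$. Neutral buyers break ties toward dataset 2, and price $\eps$ is the minimum per-unit price available, so dataset 2 is then first in every neutral buyer's MBB ordering. Each neutral buyer therefore buys all of it and pays $\eps$. This alone is too weak, so instead we take the deviation that copies seller 1's shard structure. Seller 2 places shards at exactly the prices seller 1 uses on the shards that neutral buyers purchase; by the tie-breaking rule, dataset 2's shards are then bought before dataset 1's shards of the same price. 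A natural concrete choice is to set $\ell_{2,k} = \ell_{1,k}$ for $k < 1/\eps$ and to put the remaining mass on the top shard, exactly as in the proof of \cref{thm:exh-budget}. Under this deviation a neutral buyer's budget gets split so that seller 2 receives at least half of it. The reason is that on each price level seller 2 is served first and offers the same length as seller 1, so her cumulative spend dominates seller 1's at every point.

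We then need to sharpen this from $s_2 \ge 1/2$ to $s_2 \ge (1+\eps)/2$, and this sharpening is the main obstacle. The plan is to exploit the odd number $1/\eps$ of price levels and the discreteness of prices. In a tie, seller 2 gains an extra sliver, which should amount to at least half of one price step, i.e., $\eps/2$ per unit of budget. To obtain it, we compare the money a neutral buyer spends before she reaches a given price level on each side. If this does not work directly, we will argue more simply as follows: if $s_2 < (1+\eps)/2$, then seller 2 deviates to the mirrored strategy, and because she wins all ties she collects at least $s_1 + \eps/2 > s_2$ from each neutral buyer. That is a strict revenue gain, contradicting NE.

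\textbf{Bound $s_1 > 0$.} Suppose $s_1 = 0$. Then neutral buyers spend their whole budget on dataset 2, and seller 1 earns only from the picky buyer, so $r_1 \le 1$. By \cref{thm:s1-every-shard}, every shard $k < 1/\eps$ of dataset 1 is fully sold, so $s_1 = 0$ forces $\ell_{1,k} = 0$ for all $k < 1/\eps$. Hence $\ellvec_1 = \vecE^{(1/\eps)}$, i.e., dataset 1 is priced linearly at 1, and $r_1 = 1$. Now seller 1 deviates by moving a small mass $\delta$ to shard 1, which has price $\eps$. Seller 2's cheapest shard used by a neutral buyer has price at least $\eps$. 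If that price is strictly above $\eps$, neutral buyers buy seller 1's new shard first and pay $\eps\delta$ each. If it equals $\eps$, the tie favors dataset 2, so seller 1 should use shard 2 instead; this is fine provided seller 2 does not own all of the low-price mass, which we argue from $s_2 \le 1$ and the lengths involved. The picky buyer still pays at least $1 - \delta(1-\eps)$. Seller 1's change in revenue is then at least $(n-1)\eps\delta - \delta(1-\eps) > 0$, since $n \ge 4$ and $\eps \le 1/3$ give $3\eps \ge 1-\eps$ only when $\eps \ge 1/4$. So we will need to choose the shard used in the deviation carefully, which should be possible, to make the gain strict. This yields a contradiction, so $s_1 > 0$.
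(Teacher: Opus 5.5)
\textbf{Bound $s_2 \ge (1+\eps)/2$.} This part does not work, and your proposal essentially concedes it. Copying seller 1's shards cannot certify the bound. At each price level $k$, a neutral buyer spends $c_k = \eps k\,\ell_{1,k}$ on dataset 2 and then $c_k$ on dataset 1. If her budget runs out exactly at the end of a level, seller 2 gets exactly $1/2$. If seller 1 is linear at price $\eps$, the copy earns seller 2 only $\eps$ per neutral buyer.

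Your fallback claim, that the mirror yields at least $s_1 + \eps/2$, is unsupported. Even granting it, $s_1 + \eps/2 > s_2 = 1 - s_1$ holds only when $s_2 < 1/2 + \eps/4$, so $[1/2 + \eps/4, (1+\eps)/2)$ is left uncovered.

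The missing idea is a single linear deviation at the median price level. Since $1/\eps$ is odd, $t \defeq (1+1/\eps)/2$ is an integer, and seller 2 can switch to $\ellvechat_2 = \vecE^{(t)}$, i.e.\ linear price $t\eps = (1+\eps)/2$. Ties go to dataset 2, so only seller 1's shards $1,\dots,t-1$ precede it in a neutral buyer's order. Their lengths sum to at most $1$, so they cost at most $(t-1)\eps = (1-\eps)/2$. Hence each neutral buyer pays seller 2 at least $\min(t\eps, 1-(t-1)\eps) = (1+\eps)/2$, whatever $\ellvec_1$ is, and NE gives $s_2 \ge (1+\eps)/2$. The oddness of $1/\eps$ is used only to make $t$ integral, not to win a sliver from ties.

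\textbf{Bound $s_1 > 0$.} Your reduction to $\ellvec_1 = \vecE^{(1/\eps)}$ via \cref{thm:s1-every-shard} is correct, but the deviation you chose fails. Moving mass $\delta$ to shard 1 changes seller 1's revenue by $(n-1)\eps\delta - (1-\eps)\delta = \delta(n\eps - 1)$. This is negative, e.g., for $n = 4$ and $\eps = 1/5$, and you leave it unresolved. The tie concern is a red herring: seller 2's price-$\eps$ shard costs at most $\eps$, so seller 1's cheap shard would still be bought.

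The fix is to deviate to a high price, which first requires pinning down seller 2. $s_2 = 1$ means each neutral buyer spends $1$ on dataset 2. On the other hand, $\sum_k \eps k\,\ell_{2,k} \le 1$, with equality only if $\ellvec_2 = \vecE^{(1/\eps)}$. Now let seller 1 switch to $\vecE^{(1/\eps-1)}$, i.e.\ linear price $1-\eps$. Every neutral buyer strictly prefers it, and the picky buyer can afford it. Seller 1's revenue becomes $n(1-\eps) \ge 8/3 > 1$, contradicting NE.
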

\begin{proof}
Suppose $s_1 = 0$. By \cref{thm:exh-budget}, $s_1 + s_2 = 1$, so $s_2 = 1$.
Thus, shard $1/\eps$ of dataset 2 has length 1, i.e., $\ell_{2,1/\eps} = 1$.
If $\ell_{1,k} > 0$ for any $k < 1/\eps$, then $s_1 > 0$, so now assume $\ell_{1,1/\eps} = 1$.
Thus, both sellers have a linear price of 1 for their dataset. Seller 1's total revenue is 1.
If she changes the price to $1-\eps$ (i.e., shard $1/\eps-1$ gets length 1),
then her revenue increases to $n(1-\eps) > 1$. This violates NE, so we cannot have $s_1 = 0$.

Let
\[ t \defeq \frac{1+1/\eps}{2} = \frac{1+\eps}{2\eps}. \]
Suppose seller 2 sets the length of shard $t$ to 1, i.e.,
she switches to $\ellvechat_2$, where $\ellhat_{2,k} \defeq \boolOne(k = t)$.
Seller 1 stays at $\ellvec_1$. Only the first $t-1$ shards of dataset 1 can be
before shard $t$ of dataset 2 in the MBB ordering of neutral buyers.
Thus, the revenue from each neutral buyer is at least $\min(t\eps, 1-(t-1)\eps) = (1+\eps)/2$
(since dataset 2's price is $t\eps$, and the residual budget is $1-(t-1)\eps$).
Since $\ellvec$ is an NE, seller 1's revenue cannot increase with this deviation, so $s_2 \ge (1+\eps)/2$.
\end{proof}

\subsection{Narrowing Sellers' Strategies}

For each seller $j \in [2]$, we have $s_j > 0$ (by \cref{thm:s-lbs}), so $x_{j,k} > 0$ for some $k \in [1/\eps]$.
Define seller $j$'s \emph{last shard to be purchased} (by a neutral buyer)
as $k_j \defeq \max\{k \in [1/\eps]: x_{j,k} > 0\}$.
By \cref{thm:exh-budget}, we have $x_{1,1/\eps} = 0$, so $k_1 \in [1/\eps-1]$.

We now show that $k_2 = k_1 + 1$, and seller 1 has a single shard of positive length in $[1/\eps-1]$.

\begin{lemma}
\label{thm:last-shard}
$k_2 = k_1 + 1$, and $x_{1,k} = 0$ for all $k < k_1$.
\end{lemma}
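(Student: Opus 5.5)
We argue by contradiction against the NE assumption throughout. The only facts we need are the structure of a neutral buyer's purchase (\cref{thm:s1-every-shard,thm:exh-budget,thm:s-lbs}) and the explicit tie-breaking rule. A neutral buyer values both datasets equally, so she buys shards in non-decreasing order of price. At equal price she takes dataset 2 first, so her order is $(2,1),(1,1),(2,2),(1,2),\ldots$. By \cref{thm:exh-budget} she exhausts her budget, so exactly one shard in this order is partially bought. Every earlier shard is fully sold and every later shard is unsold. Since $x_{1,k_1}>0$ and $x_{2,k_2}>0$, both $(1,k_1)$ and $(2,k_2)$ are at or before the exhaustion point. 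Hence all positive-length shards of either seller priced below $\min(\eps k_1,\eps k_2)$ are fully sold.

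\textbf{Step 1: $k_2\ge k_1+1$.} Suppose $k_2\le k_1$. Let $A$ be the neutral buyer's spending on shards of dataset 1 with price $<\eps k_1$. Then $s_2 = 1 - A - \eps k_1 x_{1,k_1} < 1-A$, since $x_{1,k_1}>0$. Seller 2 deviates as follows: put length $L=\min\bigl(1,(1-A)/(\eps k_1)\bigr)$ on shard $k_1$ and the remaining length on shard $1/\eps$. By tie-breaking, $(2,k_1)$ now precedes $(1,k_1)$, so each neutral buyer pays her $\min(\eps k_1 L,\,1-A)$. If $L<1$ this equals $1-A>s_2$. If $L=1$, i.e.\ $\eps k_1<1-A$, this equals $\eps k_1$. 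In that case the current $s_2$ is spent entirely on dataset-2 shards of price $\le \eps k_1$, with total length at most 1, so $s_2\le\eps k_1$. Equality would force all of dataset 2 to be bought at price exactly $\eps k_1$. I expect equality to be excluded by combining it with $s_2\ge(1+\eps)/2$ from \cref{thm:s-lbs} and with $1-A-s_2>0$; failing that, seller 2 can instead move to price $\eps(k_1+1)$ and still capture the residual budget. This sub-case is the step I expect to be the main obstacle, and I would attack it first by comparing $\eps k_1$ against $(1+\eps)/2$.

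\textbf{Step 2: $k_2\le k_1+1$.} Suppose $k_2\ge k_1+2$. Then $(1,k_1)$ is fully sold and precedes $(2,k_1+1)$, and all of seller 1's shards in $(k_1,1/\eps)$ have length $0$ (they are fully sold by \cref{thm:s1-every-shard}, yet have $x=0$). Seller 1 moves the length $\ell_{1,k_1}$ to shard $k_1+1$. That shard still precedes $(2,k_2)$ in the order, the buyer was not budget-constrained there, and $k_1+1\le 1/\eps-1$. So a small enough move is still fully bought, at a strictly higher unit price. The picky buyer pays weakly more, so seller 1's revenue strictly increases, a contradiction. Combining Steps 1 and 2 gives $k_2=k_1+1$.

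\textbf{Step 3: $x_{1,k}=0$ for all $k<k_1$.} Suppose $x_{1,k}>0$ for some $k<k_1$; that shard is fully sold with positive length. Seller 1 moves a small amount $\delta$ of length from shard $k$ to shard $k_1$. The shard $(1,k_1)$ still precedes $(2,k_1+1)=(2,k_2)$, which is the partially bought shard. Hence the neutral buyer still reaches $(1,k_1)$ with enough budget, and since her payment to seller 2 before $(1,k_1)$ is unchanged, seller 1 now collects $\delta\eps(k_1-k)$ more from each neutral buyer. The picky buyer buys all of dataset 1 greedily and pays weakly more, because price was moved from a cheaper shard to a dearer one. This contradicts NE, so $x_{1,k}=0$ for all $k<k_1$.
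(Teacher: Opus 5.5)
There is a genuine gap, in exactly the sub-case of Step~1 that you flagged. That sub-case is $\ell_{2,k_1}=1$ with dataset~2 fully bought, so $k_2=k_1$ and $s_2=\eps k_1$. Here your deviation coincides with seller~2's current strategy. In fact no deviation of seller~2 can help, because she is already best-responding. Either the neutral buyer's budget is exhausted by shards priced at most $\eps k_1$, and then seller~2 earns at most $\eps k_1$ times her length at those prices. Or seller~1's shards $k\le k_1$ are all bought, at total cost $1-\eps k_1$, leaving seller~2 at most $\eps k_1=s_2$.

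Both of your remedies fail. Moving to price $\eps(k_1+1)$ collects exactly the residual $1-s_1=s_2$, so there is no strict gain. The bound $s_2\ge(1+\eps)/2$ also does not exclude the case. Take $\eps=1/3$, $\ellvec_1=(0,1/2,1/2)$, $\ellvec_2=(0,1,0)$. Then \cref{thm:s1-every-shard}, \cref{thm:exh-budget} and \cref{thm:s-lbs} all hold, with $s_1=1/3$, $s_2=2/3=(1+\eps)/2$ and $1-A-s_2=1/3>0$, yet $k_1=k_2=2$.

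The missing idea is a deviation by seller~1, which is how the paper's proof begins. Suppose $\ell_{2,k_1+1}=0$, as in your bad case. Then seller~1 moves $\ell_{1,k_1}$ to shard $k_1+1$. No positive-length dataset-2 shard lies between $(1,k_1)$ and $(1,k_1+1)$ in the neutral order. So each neutral buyer reaches the moved shard with the same residual budget and pays weakly more. The picky buyer pays strictly more, since dataset~1's total price strictly increases and was below $1$. This contradicts NE, so $\ell_{2,k_1+1}>0$.

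Then in your case $L=1$ you get $s_2\le \eps k_1(1-\ell_{2,k_1+1})<\eps k_1$, and your seller-2 deviation is strictly profitable. (The paper instead moves all of dataset~2's mass above $k_1$ down to shard $k_1$.) With this added, Steps~2 and~3 are correct. They replace the paper's single deviation, which concentrates seller~1's mass on shard $k_2-1$, with local perturbations, and that works.

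Two small imprecisions remain. First, at most one shard (not exactly one) is partially bought. Second, in Step~3, $(2,k_2)$ may be fully bought with the budget ending exactly there. What you actually use is that the buyer reaches $(2,k_2)$ with positive residual budget, which follows from $x_{2,k_2}>0$.
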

\begin{proof}
Suppose $\ell_{2,k_1+1} = 0$. Then by reducing $\ell_{1,k_1}$ to 0 and increasing $\ell_{1,k_1+1}$,
seller 1 can increase her revenue (revenue from neutral buyers doesn't decrease, and revenue from picky buyer increases).
This violates NE, so we get $\ell_{2,k_1+1} > 0$.

Suppose $x_{2,k_1+1} = 0$. Then seller 2 can set $\ell_{2,k}$ to 0 for all $k > k_1$ and increase $\ell_{2,k_1}$
(she would be undercutting seller 1 on shard $k_1$).
This increases her revenue which would violate NE, so we get $x_{2,k_1+1} > 0$.
Thus, $k_2 \ge k_1 + 1$.

We now show that if $x_{1,k} > 0$ for some $k < k_2-1$, then seller 1 can increase her revenue
by increasing the length of shard $k_2-1$, which violates NE.
This would prove that $x_{1,k} = 0$ for all $k \in [k_2-2]$, so $k_1 = k_2 - 1$.

Suppose seller 1 switches to strategy $\ellvechat_1$, where
\[ \ellhat_{1,k} \defeq \begin{cases}
1 - \ell_{1,1/\eps} & \text{ if } k = k_2 - 1
\\ \ell_{1,1/\eps} & \text{ if } k = 1/\eps
\\ 0 & \text{ otherwise}
\end{cases}. \]
Let $s'_2$ be the total revenue from a neutral buyer for the first $k_2-1$ shards of dataset 2, i.e.,
\[ s'_2 \defeq \sum_{k=1}^{k_2-1} p_{2,k}x_{2,k}. \]
In $(\ellvec_1, \ellvec_2)$, we had $x_{2,k_2} > 0$, so $s_1 + s'_2 < 1$.
In $(\ellvechat_1, \ellvec_2)$, if shard $k_2$ of dataset 2 is still partially or fully sold,
then shard $k_2-1$ of dataset 1 is fully sold, so seller 1's revenue from a neutral buyer increases.
If shard $k_2$ of dataset 2 is no longer partially or fully sold,
then the buyer is exhausting her budget on shard $k_2-1$ of dataset 1,
so seller 1's revenue from a neutral buyer is $1-s'_2 > s_1$.
In both cases, seller 1's revenue from all buyers increases, which violates NE.
Thus, we get that $x_{1,k} = 0$ for all $k \in [k_2-2]$, so $k_1 = k_2 - 1$.
\end{proof}

We have considerably narrowed down seller 1's strategy.
By combining \cref{thm:last-shard,thm:s1-every-shard}, we get that
shards $k_1$ and $1/\eps$ can be the only ones with positive length,
of which the former is fully sold and the latter is fully unsold.

We now show that some shard of dataset 2 is not fully sold.

\begin{lemma}
\label{thm:s2-not-fully-sold}
Some shard of dataset 2 is not fully sold, i.e.,
$\exists k \in [1/\eps]$ such that $x_{2,k} < \ell_{2,k}$.
\end{lemma}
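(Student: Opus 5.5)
We argue by contradiction: suppose every shard of dataset 2 is fully sold, i.e., $x_{2,k} = \ell_{2,k}$ for all $k \in [1/\eps]$. Then a neutral buyer pays seller 2 the full price $s_2 = \sum_k p_{2,k}\ell_{2,k}$. By \cref{thm:exh-budget}, $s_1 + s_2 = 1$, so a neutral buyer spends exactly her whole budget on shard $k_1$ of dataset 1 together with all of dataset 2. We will exhibit a profitable deviation for seller 2, contradicting that $\ellvec$ is an NE.

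\textbf{Deviation for seller 2.} Keep seller 1 at $\ellvec_1$. The picture is as follows.
\begin{itemize}
\item By \cref{thm:last-shard} and \cref{thm:s1-every-shard}, seller 1 has positive length only on shard $k_1$, which is fully sold, and on shard $1/\eps$, which is fully unsold.
\item By \cref{thm:last-shard}, $k_2 = k_1 + 1$, so dataset 2's shard $k_1+1$ has positive length and is reached by neutral buyers.
\item Dataset 2's shards with index $\le k_1$ precede or tie with shard $k_1$ of dataset 1; ties go to dataset 2.
\end{itemize}
Since everything of dataset 2 is sold, seller 2 is leaving money on the table on her cheap shards.

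The natural move is to shift length from some shard $k \le k_1$ of dataset 2 (if any has positive length) to shard $k_1+1$. That shard still precedes dataset 1's shard $k_1+1$ onward. The neutral buyer then buys seller 2's shards first, in index order interleaved with seller 1's shard $k_1$. Her total potential spend on dataset 2 rises strictly, because each unit moved is priced higher. Either she still buys all of dataset 2, and seller 2's revenue strictly increases; or her budget binds. In the binding case, seller 2's revenue is at least $1 - p_{1,k_1}\ell_{1,k_1} = 1 - s_1 = s_2$. The inequality is strict if seller 1's shard $k_1$ is now displaced to after some of seller 2's shards: shard $k_1+1$ of dataset 2 comes after shard $k_1$ of dataset 1 in MBB order, so we need care here.

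\textbf{Case where all of dataset 2 sits on shards $\ge k_1+1$.} Since $k_2 = k_1+1$ is the last purchased shard and everything is sold, $\ell_{2,k_1+1} = 1$. Dataset 2 is then linear at price $(k_1+1)\eps$, and $s_1 = k_1\eps\,\ell_{1,k_1}$ with $s_1 + (k_1+1)\eps = 1$. Now seller 2 raises the price to $(k_1+2)\eps$, setting $\ell_{2,k_1+2} = 1$. She still precedes seller 1's shard $1/\eps$ provided $k_1 + 2 < 1/\eps$ (the boundary case is handled by a tie favouring dataset 2, or by odd $1/\eps$). Seller 2 is still bought first. Since $s_1 > 0$ by \cref{thm:s-lbs}, her revenue becomes $\min((k_1+2)\eps,\,1) > (k_1+1)\eps = s_2$, a contradiction. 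When seller 2 has mass on some shard $k \le k_1$, the shifting argument above applies, moving mass to the top purchased shard $k_1+1$. Seller 2 is bought before dataset 1's shard $k_1$, since ties favour her, so the buyer pays seller 2 first and her revenue is $\min(\text{new full price},\,1) > s_2$, using $s_2 < 1$.

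\textbf{Main obstacle.} The delicate step is verifying the MBB ordering after the deviation, in particular whether seller 2's raised shard still precedes seller 1's shards $k_1$ and $1/\eps$. We must also handle the extreme index $k_1 + 1 = 1/\eps - 1$ using the oddness of $1/\eps$ together with $s_2 \ge (1+\eps)/2$ from \cref{thm:s-lbs}. That bound forces $k_1$ to be small, roughly $k_1\eps \le 1 - s_2$, which keeps the raised price below $p_{1,1/\eps} = 1$.
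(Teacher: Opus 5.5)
Your argument has a genuine gap. Both of your seller-2 deviations rely on seller 2's raised shards being bought before seller 1's shard $k_1$ (``Seller 2 is still bought first'', ``since ties favour her''), and that is false. Any shard of dataset 2 with index at least $k_1+1$ has price strictly above $p_{1,k_1} = k_1\eps$, so a neutral buyer buys seller 1's shard $k_1$ (cost $s_1$) first; the tie rule helps seller 2 only at price exactly $k_1\eps$. The neutral buyer already spends her whole budget ($s_1+s_2=1$ by \cref{thm:exh-budget}). So moving mass of dataset 2 up to shard $k_1+1$ or $k_1+2$ gives seller 2 exactly $\min(\text{new full price},\,1-s_1)=s_2$ per neutral buyer, with no strict gain.

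In fact no profitable seller-2 deviation need exist. To earn more than $1-s_1$, seller 2 must do so with shards bought before seller 1's shard $k_1$, i.e., priced at most $k_1\eps$. These bring in at most $k_1\eps$, so she can gain only if $k_1\eps\,(1+\ell_{1,k_1})>1$. Take $\eps=1/5$, with seller 1 linear at $0.4$ and seller 2 linear at $0.6$. This satisfies everything established so far: $s_1=0.4$, $s_2=0.6=(1+\eps)/2$, $k_1=2$, $k_2=3$, and dataset 2 is fully sold. Yet no strategy of seller 2 earns more than $0.6$ per neutral buyer, and the picky buyer never pays her.

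The missing idea is to let seller 1 deviate instead, which is what the paper does. She puts all her length on shard $1/\eps$, i.e., prices linearly at $1$. Every shard of dataset 2 costs at most $1$ per unit and ties favour dataset 2. So a neutral buyer still buys all of dataset 2 first, paying $s_2$ (it was fully sold), and then spends $1-s_2=s_1$ on dataset 1; seller 1's revenue from neutral buyers is unchanged.

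The picky buyer previously paid $\sum_k p_{1,k}\ell_{1,k}<1$ and now pays $1$. The inequality is strict because $s_1>0$ (\cref{thm:s-lbs}) and $x_{1,1/\eps}=0$ force positive length on some shard priced below $1$. This strict gain contradicts NE. In the example above, this deviation raises seller 1's revenue from $0.4n$ to $0.4(n-1)+1$.
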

\begin{proof}
Suppose all shards of dataset 2 are fully sold.
Suppose seller 1 changes the length of shard $1/\eps$ to 1.
Then her revenue from neutral buyers stays the same (by \cref{thm:exh-budget}),
but her revenue from the picky buyer increases, which violates NE.
Thus, some shard of dataset 2 is not fully sold.
\end{proof}

Let us now finally prove a contradiction and show that $\ellvec$ is not an NE.

\begin{lemma}
\label{thm:last-shard-2}
$\ellvec$ is not an NE.
\end{lemma}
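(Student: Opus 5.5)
The plan is to combine the structure already forced on $\ellvec$ and exhibit one profitable unilateral deviation in every remaining case. By \cref{thm:s1-every-shard,thm:last-shard,thm:exh-budget}, seller 1 uses at most two shards: shard $k_1$, which is fully sold, and shard $1/\eps$, which is fully unsold. Dataset 2 has last purchased shard $k_2 = k_1+1$. The first step is to record the neutral buyer's purchase order. All shards of dataset 2 with index $\le k_1$ come first (ties favour dataset 2), then seller 1's shard $k_1$, then shard $k_1+1$ of dataset 2, then the rest. So the neutral buyer pays $s_1 = \eps k_1 \ell_{1,k_1}$ to seller 1 and exhausts her budget inside shard $k_1+1$ of dataset 2, or exactly at its end. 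The picky buyer pays seller 1 the amount $\min(1, \eps k_1 \ell_{1,k_1} + \ell_{1,1/\eps})$. By \cref{thm:s2-not-fully-sold}, some dataset-2 length $u>0$ is unsold, lying in shard $k_1+1$ or beyond.

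\emph{Case A: $\ell_{1,1/\eps} > 0$.} Seller 1 moves a small mass $\delta$ from shard $1/\eps$ to shard $k_1$. Shard $k_1$ is still purchased before dataset 2's shard $k_1+1$. Each neutral buyer therefore pays $\eps k_1\delta$ more, provided $\delta$ is small enough that the buyer's budget still reaches the end of the enlarged shard; this is where the slack in shard $k_1+1$ is used. The picky buyer pays at most $(1-\eps k_1)\delta$ less. The net change is at least $\delta\big((n-1)\eps k_1 - (1-\eps k_1)\big) = \delta(n\eps k_1 - 1)$. This is positive whenever $k_1 \ge 1/(n\eps)$, i.e.\ the price $\eps k_1 > 1/n$. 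For the complementary small-$k_1$ subcase I would use the bound $s_2 \ge (1+\eps)/2$ from \cref{thm:s-lbs}, which forces dataset 2's cheap shards to carry most of the budget. Seller 1 can then instead undercut at a higher price, i.e.\ put all her non-$1/\eps$ mass at shard $k_1+1$ or at the midpoint shard $t$ used in \cref{thm:s-lbs}. This mirrors the argument in \cref{thm:last-shard} and should raise her neutral revenue beyond the $s_1 \le (1-\eps)/2$ she currently earns.

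\emph{Case B: $\ell_{1,1/\eps} = 0$.} Here $\ell_{1,k_1}=1$, so seller 1 is effectively linear at price $\eps k_1$ and the picky buyer pays only $\eps k_1 < 1$. Seller 2 has unsold length $u$. I would let seller 2 move all of $u$ to shard $k_1$, tying seller 1's price; the tie is broken in seller 2's favour. Now the neutral buyer buys this extra dataset-2 mass before seller 1's shard. Her total spending stays $1$, so seller 2's revenue per neutral buyer becomes $1 - (\text{seller 1's new share})$. Seller 1's share strictly drops if the budget now runs out before seller 1's shard $k_1$ is fully bought. If instead $u\eps k_1$ is too small for that, I would switch to the alternative deviation for seller 1. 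She raises her price by one step to $\eps(k_1+1)$, or moves part of her mass to shard $1/\eps$. The first option loses at most $\eps$ per unit from neutral buyers but strictly gains from the picky buyer, and gains further whenever the unsold dataset-2 mass is what caps her. The point is that any unsold dataset-2 length is exactly the slack seller 1 can exploit to raise her price without being undercut.

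The main obstacle I expect is the bookkeeping of where each neutral buyer's budget ends after a deviation. In particular, I must pin down whether shard $k_1+1$ of dataset 2 is partially sold or exactly exhausted, and handle the boundary subcase $\eps k_1 \le 1/n$ in Case A. I would attack this first by writing the neutral buyer's spending as a piecewise-linear function of the deviating shard length, exactly as in the residual-budget expressions used in the proof of \cref{thm:randrev-vs-rev}. I would then check the sign of the one-sided derivative of the deviator's revenue at the current profile, using $n\ge 4$ and $\eps\le 1/3$ to close the inequalities.
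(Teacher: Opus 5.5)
There is a genuine gap. You never derive the counting bound that drives the paper's proof, and without it neither of your cases closes.

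\textbf{The missing bound.} Seller 1's only purchased shard is $k_1$, so $s_1=\eps k_1\ell_{1,k_1}\le \eps k_1$. Every unit of dataset 2 bought by a neutral buyer lies in a shard of index at most $k_2=k_1+1$. By \cref{thm:s2-not-fully-sold}, she buys strictly less than one unit of dataset 2 in total, so $s_2<\eps(k_1+1)$. Combined with $s_1+s_2=1$ from \cref{thm:exh-budget}, this gives $\eps k_1>(1-\eps)/2\ge 1/3$.

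\textbf{How it settles both cases.} In Case A, your net gain $\delta(n\eps k_1-1)$ is then positive because $n\ge 4$. So your ``small-$k_1$'' subcase is empty, and the fallback you sketch for it (which is only a heuristic) is never needed. In Case B, $\ell_{1,k_1}=1$ gives $s_1=\eps k_1$. But \cref{thm:exh-budget,thm:s-lbs} give $s_1=1-s_2\le(1-\eps)/2$, which contradicts the bound directly. This is how the paper finishes, and no new deviation is needed: the deviation arguments are already inside \cref{thm:s-lbs,thm:s2-not-fully-sold}.

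\textbf{Why your Case B deviations do not work as stated.}
\begin{itemize}
\item Seller 2 moving $u$ to shard $k_1$: if $u\eps k_1$ is at most the neutral buyer's spending on shard $k_1+1$, the buyer just buys less of shard $k_1+1$. Seller 2's revenue is then unchanged.
\item Seller 1 raising her price one step to $\eps(k_1+1)$: this ties with dataset 2's shard $k_1+1$, and ties go to dataset 2. The buyer therefore buys that whole shard, including its previously unsold part, before reaching seller 1. The loss is not bounded by $\eps$ per unit, and seller 1's neutral revenue can collapse.
\item Seller 1 shifting a small $\delta$ to shard $1/\eps$: this changes her revenue by $\delta(1-n\eps k_1)$, which is negative once $\eps k_1>1/n$.
\end{itemize}

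\textbf{Keep your Case A move.} Moving only a small mass $\delta\le\sigma/(\eps k_1)$, where $\sigma>0$ is the neutral buyer's spending on shard $k_1+1$, is correct. It is in fact more careful than the paper's step. The paper moves all of shard $1/\eps$ into shard $k_1$ and records the gain $(n\eps k_1-1)\ell_{1,1/\eps}$. That silently assumes neutral buyers can afford the enlarged shard.

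For a concrete failure, take $\eps=1/3$, $n=4$, $\ellvec_1=(0,\tfrac12,\tfrac12)$ and $\ellvec_2=(0,\tfrac{19}{20},\tfrac{1}{20})$. This profile satisfies every structural conclusion obtained before this lemma, including $k_1=2$ and $\ell_{1,1/\eps}>0$. Yet the wholesale move lowers seller 1's revenue from $55/30$ to $53/30$. Your move with $0<\delta\le 1/20$ raises it by $5\delta/3$.

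So the fix is to add the counting bound, keep your Case A argument as it stands, and replace your Case B with the short contradiction above.
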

\begin{proof}
We start by proving that
$k_1 > (1-\eps)/(2\eps)$ and $\ell_{1,1/\eps} = 0$.

We have $s_1 = k_1\eps\ell_{1,k_1} \le k_1\eps$.
By \cref{thm:s2-not-fully-sold}, we get $s_2 < k_2\eps$.
By \cref{thm:exh-budget}, $s_1 + s_2 = 1$, so $1 < (k_1 + k_2)\eps$.
By \cref{thm:last-shard}, $k_2 = k_1+1$, so $k_1 > (1-\eps)/(2\eps)$.

Suppose $\ell_{1,1/\eps} > 0$ and seller 1 deviates to strategy $\ellvechat_1$, where $\ellhat_{1,k_1} \defeq 1$.
Then the increase in her revenue is
\[ nk_1\eps - (nk_1\eps\ell_{1,k} + \ell_{1,1/\eps})
    = (nk_1\eps-1)\ell_{1,1/\eps} \ge \left(n\frac{1-\eps}{2} - 1\right)\ell_{1,1/\eps} > 0. \]
This violates NE, so $\ell_{1,1/\eps} = 0$.

By \cref{thm:exh-budget,thm:s-lbs}, we get $s_1 \le (1-\eps)/2$.
Hence, $s_1 = k_1\eps \le (1-\eps)/2$, so $k_1 \le (1-\eps)/(2\eps)$.
This is a contradiction. Thus, $\ellvec$ is not an NE.
\end{proof}

\section{Sparsity of PLC Best Response}
\label{sec:no-of-shards-extra}

\begin{lemma}
Consider a data market instance $([n], [m], (u_i)_{i=1}^n, (b_i)_{i=1}^n)$.
Each seller $j \in [m]$ is given a set $P_j \defeq \{p_{j,1}, \ldots, p_{j,m_j}\}$ of prices,
where $p_{j,1} < \ldots < p_{j,m_j}$, and must decide the size $\ell_{j,k}$ of the shard having price $p_{j,k}$.

For any seller $j$, given the pricing strategies $\ellvec_{-j}$ of the other sellers,
there is a best-response strategy $\ellvec_j$ where $|\supp(\ellvec_j)| \le n$.
\end{lemma}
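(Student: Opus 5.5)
The plan is an extreme-point (LP) argument. Fix the other sellers' strategies $\ellvec_{-j}$ and an optimal best response $\ellvec^*_j$. Take the residual budgets $\gamma_{i,k}$ from the proof of \cref{thm:randrev-vs-rev}: $\gamma_{i,k}$ is what buyer $i$ has left before shard $k$ of dataset $j$ if she spent nothing on $j$, and it is non-increasing in $k$. Write $C_k(\ellvec_j)\defeq\sum_{k'\le k}p_{j,k'}\ell_{j,k'}$ for the prefix costs. As in \cref{sec:np-hardness-br}, buyer $i$ pays seller $j$ exactly $\rho_i(\ellvec_j)=\min\bigl(C_{m_j},\,\min_k(\gamma_{i,k}+C_{k-1})\bigr)$, i.e., she pays for shards in order until her residual runs out.

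For each buyer $i$, freeze the combinatorial \emph{regime} she is in at $\ellvec^*_j$. Either she buys all of dataset $j$, so $\rho_i=C_{m_j}$, or there is a unique first shard $c_i$ at which she exhausts her budget, so $C_{c_i-1}\le\gamma_{i,c_i}\le C_{c_i}$ and $\rho_i=\gamma_{i,c_i}$. Let $Q$ be the set of $\ellvec_j\in\Delta_{m_j}$ satisfying the inequalities that define these regimes (together with the ones guaranteeing that she did not exhaust earlier). On $Q$, total revenue is linear in $\ellvec_j$: it equals $|F|\cdot C_{m_j}$ plus a constant, where $F$ is the set of fully-buying buyers. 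Since revenue is in any case the pointwise minimum above, it is at least this linear function everywhere. Hence a vertex of $Q$ maximizing it is also a best response.

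The key step is to reduce the number of constraints defining $Q$ to $n-1$ besides $\sum_k\ell_{j,k}=1$, so that a vertex has at most $n$ positive coordinates. The monotonicity of $\gamma_{i,k}$ and of $C_k$ makes the ``did not exhaust earlier'' inequalities redundant once $\gamma_{i,c_i}\ge C_{c_i-1}$ holds. Also, the lower inequality $\gamma_{i,c_i}\ge C_{c_i-1}$ can be dropped, because if it fails the buyer pays $\min_k(\gamma_{i,k}+C_{k-1})\ge\gamma_{i,c_i}$ anyway (to be checked). So each exhausting buyer contributes the single constraint $C_{c_i}\ge\gamma_{i,c_i}$, which is again monotone in the sense that increasing earlier prefix costs only helps. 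Fully-buying buyers contribute the constraint $C_{m_j}\le\gamma_{i,m_j}$, and among these only the one with the smallest $\gamma_{i,m_j}$ matters. This leaves at most $n$ inequalities plus the simplex equation, so a vertex has support at most $n+1$.

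I expect the main obstacle to be shaving this bound from $n+1$ to $n$. I would try a direct exchange argument at the optimal vertex. If $n+1$ shards are positive, all $n$ buyer constraints are tight. In particular, the largest-price shard in the support is the last one bought by some buyer. Moving mass from that shard to the smallest-price positive shard, or merging two adjacent positive shards with no tight constraint separating them, then keeps every buyer's payment unchanged and reduces the support. If this fails, I would fall back on the formulation \eqref{eq:best-response-reformulated} in the variables $L_s$. There each $\Phi_s$ is piecewise-linear with breakpoints only at the $\gamma_{i,s}$, which gives a similar vertex count with the simplex constraint replaced by the single knapsack constraint $\sum_s\beta_sL_s\le1$.
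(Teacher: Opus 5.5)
Your approach is an LP/vertex-counting argument. The paper uses a combinatorial merge argument instead. Your proposal has a genuine gap: it reaches support $\le n+1$ at best, and the step to $n$ is the actual content of the lemma.

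\textbf{Setup errors (repairable).} There are three problems before the counting step.
\begin{enumerate}
\item The payment formula is wrong. Buyer $i$ pays $\rho_i=\min\bigl(C_{m_j},\min_k\max(\gamma_{i,k},C_{k-1})\bigr)$, not $\min\bigl(C_{m_j},\min_k(\gamma_{i,k}+C_{k-1})\bigr)$. For example, with $\gamma_{i,1}=10$, $\gamma_{i,2}=5$, $C_1=3$, $C_2=100$ she pays $5$, not $8$.
\item Your regimes miss buyers who fully buy dataset $j$'s shards up to some $d$ and then run out of money on another seller's shard before reaching $j$'s next shard. Such a buyer pays the prefix cost $C_d$, which is neither a constant nor $C_{m_j}$. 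So ``revenue $=|F|\cdot C_{m_j}+$ const on $Q$'' is false.
\item Revenue is not at least your linear function everywhere. A full buyer pays less than $C_{m_j}$ once $C_{m_j}>\gamma_{i,m_j}$. The bound only holds on the relaxed region.
\end{enumerate}
These can be fixed. Each such buyer contributes one constraint $C_{d_i}\le\gamma_{i,d_i}$, and on that region she pays at least $C_{d_i}$. But you then have $n$ buyer inequalities plus the simplex row, so vertices have support up to $n+1$.

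\textbf{The missing step.} Vertex counting cannot give $n$. Take $n=1$ and a buyer who exhausts her budget on shard $c$. Put length $\gamma_{1,c}/p_{j,c}<1$ on shard $c$ and the rest on a later shard. This point is an optimal vertex of your relaxed polytope, and it has support $2$. So you must choose the vertex using extra structure.

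Your proposed exchanges do not supply that structure:
\begin{itemize}
\item Moving mass from the highest-priced support shard to the cheapest one lowers $C_{m_j}$, which cuts every full buyer's payment. It also raises intermediate prefix costs. So payments are not unchanged.
\item Merging adjacent shards is exactly the paper's type-1 transformation. It needs its own proof that revenue does not drop and that some buyer reaches the last shard.
\end{itemize}
The paper's count works as follows. At a fixpoint of its merges, each of the $|\supp(\ellvec_j)|-1$ gaps between consecutive support shards contains a distinct buyer who stops there. One further buyer reaches the last shard. Hence $|\supp(\ellvec_j)|\le n$.

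\textbf{How to close it within your framework.}
\begin{enumerate}
\item Fix $v_i\defeq\rho_i(\ellvec^*_j)$.
\item $\rho_i$ is non-decreasing in the prefix-cost vector. Hence $\rho_i\ge v_i$ if and only if $C_{e_i}\ge v_i$. Here $e_i$ is one less than the first $k$ with $\gamma_{i,k}<v_i$, or $e_i=m_j$ if there is no such $k$.
\item Take the pointwise-smallest non-decreasing $C$ meeting these $n$ lower bounds. It has at most $n$ jumps, so the corresponding $\ellvec_j$ has at most $n$ shards.
\item Its total length is $\sum_{k<m_j}C_k(1/p_{j,k}-1/p_{j,k+1})+C_{m_j}/p_{j,m_j}$, a positive combination of the $C_k$. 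Since this $C$ is pointwise at most the prefix costs of $\ellvec^*_j$, its length is at most $1$.
\item Put the leftover length on the last support shard. This only raises prefix costs, so revenue does not drop.
\end{enumerate}
Treating total length as the objective rather than a constraint is what removes your extra $+1$.
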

\begin{proof}
Fix a seller $j$ and the strategy profile $\ellvec_{-j}$ of the remaining sellers.
We will show how to transform any strategy $\ellvec_j$ into another strategy $\ellhat_j$
such that $|\supp(\ellhat_j)| \le n$ and seller $j$'s revenue does not decrease,
i.e., $r_j(\ellvec) \le r_j(\ellhat_j, \ellvec_{-j})$.

We will do the transformation in multiple steps.
In each step, we apply one of two types of transformations.

\paragraph{Type 1:}
For any $k_1, k_2 \in [m_j]$, the $(k_1, k_2)$-type-1 transformation is applicable when
\begin{tightenum}
\item $k_1$ and $k_2$ are adjacent in $\supp(\ellvec_j)$, i.e.,
    $\ell_{j,k_1} > 0$, $\ell_{j,k_2} > 0$, and $\ell_{j,k} = 0$ for all $k \in [k_1 + 1, k_2 - 1]$.
\item Some buyer $i$ purchases at least part of shard $k_2$.
\item Every buyer $i$ either doesn't buy any part of $k_1$,
    or finds shard $k_2$ desirable and completely buys every shard (of positive size, and of any seller)
    in her bang-per-buck order that appears before $k_2$.
\end{tightenum}
In this transformation, we transfer the size of shard $k_1$ to shard $k_2$, i.e.,
if $\ellhat_j$ is the new pricing strategy, then $\ellhat_{j,k_1} = 0$,
$\ellhat_{j,k_2} = \ell_{j,k_1} + \ell_{j,k_2}$,
and $\ellhat_{j,k} = \ell_{j,k}$ for all $k \in [m_j] \setminus \{k_1, k_2\}$.
One can see that a $(k_1, k_2)$-type-1 transformation doesn't decrease revenue from any buyer
by laboriously listing out the possibilities of where along her bang-per-buck order a buyer may stop buying.

\paragraph{Type 2:}
A type-2 transformation is applicable if $|\supp(\ellvec_j)| > 1$ and the last shard (of positive size) has no revenue,
i.e., $\gamma'_{i,j,k_2}(\ellvec) = 0$ for all $i \in [n]$, where $k_2 = \max(\supp(\ellvec_j))$.
In this transformation, we transfer the size of the last shard to the second-last shard.
Formally, let $k_1 \defeq \max(\supp(\ellvec_j) \setminus \{k_2\})$.
Then we set $\ellhat_{j,k_1} = \ell_{j,k_1} + \ell_{j,k_2}$, $\ellhat_{j,k_2} = 0$,
and $\ellhat_{j,k} = \ell_{j,k}$ for all $k \in [m_j] \setminus \{k_1, k_2\}$.
One can see that a type-2 transformation doesn't decrease revenue from any buyer.

We apply as many type-1 transformations as possible, and then apply as many type-2 transformations as possible.
One can show that no transformations are possible after this.
The inapplicability of type-2 transformations implies that in $\ellhat_j$,
some buyer purchases at least a bit of every shard.
The inapplicability of type-1 transformations implies that
for any $k_1$ and $k_2$ adjacent in $\supp(\ellhat_j)$,
some buyer purchases (at least a part of) $k_1$ but doesn't purchase everything just before $k_2$.
Thus there must be at least $|\supp(\ellhat_j)|$ buyers.
\end{proof}

\section{CE Implies NE in Rivalrous Settings}
\label{sec:ce-is-ne-rival}

In this section, we provide a proof sketch showing that CE prices form a NE in classical markets with rivalrous goods and linear buyer utilities.

\begin{proof}[Proof sketch]
At CE prices $\bm p$, all goods are completely sold, so the revenue of each seller $j$ is $p_j \cdot s_j$, where $s_j$ is the supply of good $j$. Clearly, there is no incentive for any seller to lower her prices, as this would only decrease her revenue. Now, consider the case where seller $j$ increases the price of her good. We examine two scenarios depending on whether a buyer was purchasing good $j$ at price $\bm{p}$.

Case 1: The buyer was not purchasing good $j$ at price $\bm{p}$. In this case, the buyer will not be interested in purchasing good $j$ at the new, higher price.

Case 2: The buyer was purchasing good $j$ at price $\bm{p}$.  Due to the price increase, the buyer will strictly prefer other goods they were already purchasing at price $\bm{p}$. As a result, the total money available for good $j$ either remains the same or decreases.

In both cases, the total revenue of seller $j$ at the higher price is non-increasing. This implies that CE prices form a Nash equilibrium.
\end{proof}

\end{document}